\newcommand{\dimension}[1]{$#1$\nobreakdash-dimension}
\newcommand{\ndim}[1]{$#1$\nobreakdash-dimensional}	% n-dimensional. (Doesn't work if named \dim)
\newcommand{\dimensional}[1]{\ndim{#1}}
\newcommand{\xaxis}[1]{$#1$\nobreakdash-axis}				% x-axis (note: axis already defined)
\newcommand{\manifold}[1][4]{$#1$\nobreakdash-manifold}	% n-manifold
\newcommand{\coords}[1]{$#1$\nobreakdash-coordinates}
\newcommand{\nvector}[1][4]{$#1$\nobreakdash-vector}							% n-vector (note: \vector already defined)
\newcommand{\position}[1][4]{$#1$\nobreakdash-position}					% n-position
\newcommand{\direction}[1]{$#1$\nobreakdash-direction}
\newcommand{\coordinate}[1]{$#1$\nobreakdash-coordinate}
\newcommand{\component}[1]{$#1$\nobreakdash-component}
\newcommand{\momentum}[1][4]{$#1$\nobreakdash-momentum}
\newcommand{\velocity}[1][4]{$#1$\nobreakdash-velocity}
\newcommand{\acceleration}[1][4]{$#1$\nobreakdash-acceleration}
\newcommand{\sphere}[1]{$#1$\nobreakdash-sphere}
\newcommand{\fourvel}{\velocity[4]}                 % 4-velocity
\newcommand{\fvec}[1]{\mathbf{#1}}								 % styling of 4-vector
\newcommand{\Schw}[1][r]{\left(1-\frac{2M}{#1}\right)}	                 % part of Schwarzschild metric component
\newcommand{\sqrtSchw}[1][r]{\sqrt{1-\frac{2M}{#1}}}
\newcommand{\Schwroot}{\sqrt\frac{2M}{r}}												 % appears in Schwarzschild geometry
\newcommand{\fish}{\left(1-\frac{2M}{r}+K^2\right)}		% factor in "fishing line" coordinates
\newcommand{\sqrtfish}{\sqrt{1-\frac{2M}{r}+K^2}}		% square root of "fish"
\newcommand{\ffish}{\Schw^{-1}\sqrtfish}				% the function f(r)
\DeclareMathOperator{\sech}{sech}                                % image
\DeclareMathOperator{\diag}{diag}											% diagonal matrix
\newcommand{\asin}{\sin^{-1}}			% \DeclareMathOperator{\asin}{asin}
\newcommand{\atanh}{\tanh^{-1}}		% \DeclareMathOperator{\atanh}{atanh}
\newtheorem{thrm}{Theorem}[section]
\newtheorem{lemma}[thrm]{Lemma}
\newcommand{\abs}[1]{\left\lvert#1\right\rvert}              % absolute value
\newcommand{\diff}[2]{\frac{d#1}{d#2}}                       % derivative
\newcommand{\pdiff}[2]{\frac{\partial#1}{\partial#2}}           % partial derivative
\newcommand{\pdiffnpower}[3]{\left(\pdiff{#2}{#3}\right)^{#1}}
\newcommand{\pdiffsquared}[2]{\pdiffnpower{2}{#1}{#2}}
\newcommand{\eqn}[1]{\begin{equation}#1\end{equation}}
\newcommand{\eqnboxed}[1]{\begin{equation}\boxed{#1}\end{equation}}
\newcommand{\casesifthree}[6]{\begin{cases}#1&\text{if }#2\\#3&\text{if }#4\\#5&\text{if }#6\end{cases}}
\begin{document}
\include{jdf} % end Tamara's header

% Title page based on the UQ student thesis template

\titlepage

  \thispagestyle{empty}
    
    % University logo at top of page
    \begin{figure}[h]
	\begin{center}
	\includegraphics[scale=1]{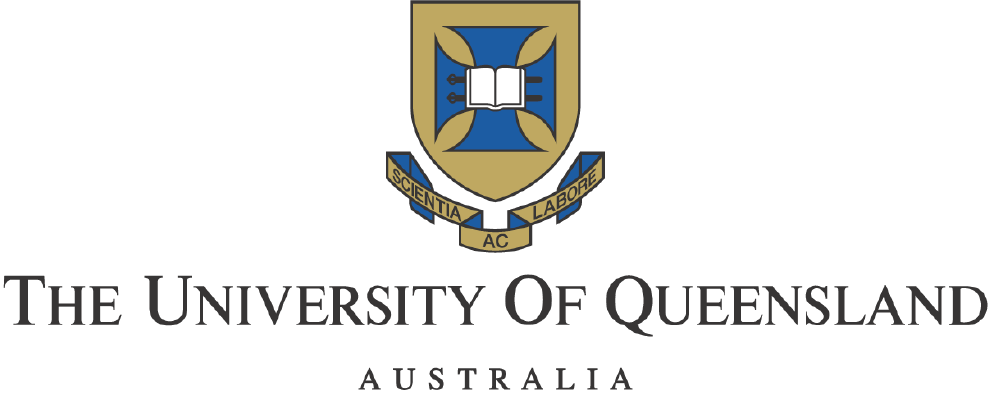}
	\end{center}
    \end{figure}
	\vfill
	
    % Document Title
    \begin{center}
\huge\sc{Expanding Space, Redshifts, and Rigidity}\\
\vspace{0.1in}
\LARGE\sc\expandafter{Conceptual issues in cosmology}
    \end{center}
    \vfill
    
    % Author, Qualifications and Supervisor
    \begin{center}
        \LARGE\sc{Colin MacLaurin}\\
        \large{BMath (Hons I), Grad Dip Theol}
    \end{center}
    	\vspace{0.2in}
	 \begin{center}
        \Large{Under the supervision of}\\
        \Large{Prof. Tamara Davis}
    \end{center}
    \vfill
    
    % Degree & Faculty Details
    \begin{center}
        \large\sc{A thesis submitted to the University of Queensland\\
in partial fulfilment of the degree of Master of Science\\
Faculty of Science\\
School of Mathematics and Physics\\
June 2015}
\end{center}

\newpage
\pagenumbering{roman}
\renewcommand{\thepage}{- \roman{page} -}

%\vspace{3in}

\begin{changemargin}{2cm}
\section*{Abstract}
I examine the interpretation of photon redshifts in curved spacetime, as being gravitational or Doppler in origin. In Friedmann-Lema\^itre-Robertson-Walker spacetime, redshifts between comoving observers are often attributed to ``expanding space'', whereas in Schwarzschild spacetime, redshifts between static observers are attributed to ``gravitational'' causes. Yet various authors have suggested a freely falling observer congruence would interpret any redshift as Doppler, whereas a rigid congruence must interpret it as gravitational since there is no relative motion. I realise this proposal by explicitly constructing coordinate systems for rigid motion in the above spacetimes. This includes an extensive analysis of observer-dependent distance measurement in curved spacetime. I also introduce Rindler acceleration, the Milne model, and Newtonian cosmology.
\end{changemargin}

\vspace{1cm}

\emph{Note:} This version is from 2015, updated two weeks after formally submitting the thesis. It has been available on my personal website since January 2017. The above abstract is new, along with cosmetic improvements to the references. \qquad --- CM, 13th Nov. 2019\footnote{Email: colin.maclaurin@uqconnect.edu.au}

\vspace{1cm}

\section*{Disclaimer}
\begin{center}
The work presented in this thesis is,\\
to the best of my knowledge and belief original,\\
except as acknowledged in the text,\\
and has not been submitted either in whole or in part,\\
for a degree at this or any other university.\\
\vspace{0.5in}
\rule{35mm}{.1pt}\\
Colin MacLaurin
\end{center}

\vspace{0cm}

\section*{Dedication}

Tamara has been an ideal supervisor, for technical guidance, communication, organisation, and cheerful enthusiasm.

My parents Ann and Normand have very generously supported me through finance and home-cooked meals.

Finally to all the good people I've ever met: exciting as relativity is, it is you that makes life worth living.

\qquad\qquad\qquad ``No one makes it alone.'' \quad --- Malcolm Gladwell

\newpage

\section*{Original abstract (2015)}
The expansion of the universe is usually described as an expansion of space itself, but some have argued it is motion through space. A related question is what causes the cosmological redshift. This turns out to be closely related to the concept of rigidity. I critically examine these areas, and derive numerous original results.

I overview two flat spacetime analogues to the standard general relativistic Friedmann-Lema\^itre-Robertson-Walker (FLRW) model of the universe. The Milne model describes the empty universe case within Minkowski space. Newtonian gravity replicates the Friedmann equations which describe the rate of expansion. In both models, the expansion would normally be interpreted as motion.

I show redshifts to have a flexible interpretation, by inciting the equivalence principle for contrived motions of observer families. It is straightforward to induce a Doppler redshift interpretation, but the gravitational redshift case is difficult and coincides with the problem of determining rigid motions. This in turn requires a solid conceptual foundation of distance measurements in relativity, especially distance as measured in the local inertial or proper frame of an observer.

In the Schwarzschild geometry, the distance which receives the most attention in sources is the radial ``proper distance'' interval $\Schw^{-1/2}dr$. In the FLRW case, the usual focus is on the radial distance interval $Rd\chi$. I show these are indeed the most natural or canonical choices. However sources which imply an absolute nature for these quantities are misleading, because most observers would measure a different distance depending on their motion. Though the proper distance is a coordinate-independent quantity for a given worldline, the common choice of constant coordinate time slice used to define the worldline in the first place is coordinate-dependent. It is only observers stationary in the radial direction, that is $\diff{r}{t}=0$ and $\diff{\chi}{t}=0$ respectively, who measure these distances.
%I critique various notions of distance measurement, and defend use of a conceptual rigid ruler for measuring spacetime. The so-called ``proper distance'' in the Schwarzschild: $\Schw^{-1/2}dr$, and FLRW: $Rd\chi$, spacetimes is canonical, being singled out by symmetries of these spacetimes. However it is only the proper-frame distance for Schwarzschild observers and Hubble flow comovers, and others with $\diff{r}{t}=0$ and $\diff{\chi}{t}=0$ respectively. Most observers measure a different distance. This is quantified by length-contraction in these curved spacetimes. Another natural measure of distance is along hypersurfaces orthogonal to the motion.

I demonstrate how to compute length-contraction within general relativity. This effect is computed in local inertial frames, and occurs with the same Lorentz factor $\gamma$ from special relativity. One approach uses local Lorentz transformations to obtain contraction by $\gamma$. An alternate but complementary derivation is based on the spatial metric, and given an expansion by $\gamma$, interpreted as due to measurement by length-contracted rulers. These results aid in determining proper-frame distances. I defend the concept of rigid rulers for distance measurement, though photon radar is now in vogue. The metric does not yield ruler distances in general, because these also depend on the motion of the rulers. A natural time slicing is along orthogonal hypersurfaces to the flow, which exist when the flow has zero helicity.

In flat spacetime, redshifts may be interpreted as purely gravitational by matching the velocities of observers to the Rindler coordinates, which describe accelerated rigid motion. I derive the kinematics of a Schwarzschild ``fishing line'', a \dimensional{1} rigid object reeled at a constant rate. Near a black hole, such a line has infinite proper length. This description yields a redshift interpretation with different gravitational and Doppler contributions from the usual interpretation. I also derive the kinematics of a \dimensional{3} revolving object in Schwarzschild spacetime. This demonstrates accelerated \dimensional{3} rigid motions do indeed exist in curved spacetime. It may also be used to flexibly interpret redshifts. I also quantify the behaviour of a long rigid cable extended from a comoving galaxy in an FLRW universe, by presenting a system of differential equations. In the case of a universe with only dark energy, such a cable would have proper length $\frac{\pi}{2}H_0^{-1}$ before reaching the Hubble sphere at (Hubble flow) proper distance $H_0^{-1}$, due to length-contraction. This rigid system also leads to an alternate decomposition of the Doppler and gravitational contributions of redshift.

The offset of the Milky Way from the Hubble flow implies length-contraction by $\approx 0.0002\%$ in the direction of motion, relative to the measurement of Hubble comovers. For Earth's orbit around the Sun, length-contraction of its rulers leads to a measured orbit length increase of $9$km, as measured by ``stationary'' observers. A potential application not analysed is of an astronaut plummeting towards a black hole, who would experience length-contraction and hence a different experience of tidal forces than usual analyses describe.

Overall, the interpretation of redshifts is highly flexible. Nonetheless the standard interpretation in the Schwarzschild and FLRW situations is the most natural. Expansion of space is indeed a valid metaphor for the expansion of the universe, though it has benefited from critique.

\newpage
\tableofcontents

%\listoffigures

\newpage
\pagenumbering{arabic}
\renewcommand{\thepage}{- \arabic{page} -}

\section{Introduction}

\subsection{Expansion}
The expansion of the universe is one of the key concepts upon which cosmology is derived. The primary evidence for this expansion is the redshift of light and other electromagnetic radiation, meaning an increase in wavelength between emission and reception. More distant galaxies are more redshifted, indicating they are receding from us faster than nearby galaxies. This expansion is modelled in general relativity by the Friedmann-Lema\^itre-Robertson-Walker (FLRW) spacetime, which describes a homogeneous and isotropic matter distribution. The rate of expansion is then derived from the Friedmann equations, which follow from Einstein's field equations.

But could the expansion be equally well explained by other coordinate systems? This has already been done for special cases such as the empty universe, whose flat spacetime is also described by the Minkowski metric. Interestingly, when applying a different coordinate system, our interpretation and language used to describe that system change. When using the FLRW metric this coordinate description suggests the interpretation that space is expanding, because of the scale factor term $R(t)$. On the other hand, when using the Minkowski metric one usually describes galaxies as moving through a fixed space. But of course, deriving coordinate-independent properties is paramount. I investigate what is invariant, the flexibility of coordinate-dependent interpretations as well as their limits, as well as the intermediate concept of ``natural'' or ``canonical'' --- where a symmetry suggests the choice of a particular coordinate system or in which the laws of physics might be expressed in a more convenient form.

As a more down-to-earth example, consider a sports field, which would usually be described using a flat coordinate system. One may describe a player who runs 40 metres from their goal line directly down the field. In other coordinate systems, this could be described as running so many degrees longitude, or as 40 metres east. There may be coordinate systems which are more simple and familiar than others, but that doesn't mean they are the only possibilities. In my thesis I explore various possible choices of coordinate system for describing our expanding universe. Through analysing these options, I put to the test some of the assumptions about our universe, and try to assess what is a necessary part of physics and what is a contingent aspect resulting merely from a choice of coordinate system.

For example, what does it mean to say space is expanding, or that space is stretching? These common phrases are used to intuitively describe the expansion of the universe, but they do have some limitations. 

In ``A diatribe on expanding space'', \citet[\S2]{peacock2001} asserts `` `expanding space' is in general a dangerously flawed way of thinking about an expanding universe.'' In contrast \citet[\S3.1]{davislineweaver2004} explain the Hubble recession velocity as due to the ``expansion of space, not movement through space''. One argument or test case about expanding space is that local inhomogeneities (such as a gravitationally bound galaxy of stars) do not show a Hubble expansion. Another test is the ``tethered galaxy'' scenario of a particle initially kept at constant ``proper distance'', which when released will not join the local Hubble flow, but in the case of a cosmological constant dominated universe the galaxy will actually end up on the far side of the origin \citep{davis+2003} \citep{peacock2001}!

\subsection{Redshift}
One of the potentially coordinate-dependent aspects of the expanding universe is what causes the cosmological redshift. This is commonly described as being due to the expansion of space. I investigate if it can be interpreted as a flexible combination of a Doppler shift and a gravitational shift, depending on the coordinate system chosen. This is done by appealing to the equivalence principle.

Cosmological redshift can be understood as an accumulation of Doppler shifts, as argued by the major influences on this thesis \citep{peacock2001} \citep{bunnhogg2009} \cite{davis2004}. Bunn and Hogg go further to argue that even the overall redshift can be understood as a Doppler shift. Though the Hubble recession velocity does not fit the special relativistic Doppler shift formula, a different measure of velocity is found by parallel propagating the galaxy \velocity{} to the observer's location, then the resulting speed difference does in fact fit this formula. They propose the use of the equivalence principle to set up interpretations of redshifts as being purely Doppler or purely gravitational in origin. The concept is to have a line of observers along a photon path, or equivalently a coordinate system. Then a purely Doppler interpretation would be achieved if the observers are all in freefall, because the absence of acceleration would seem the same as absence of gravity, so the redshift could not seem gravitational. On the other hand a purely gravitational interpretation is achieved if the observers maintain a constant distance from one another, since by perceiving no motion they naturally interpret the redshift as gravitational. 

\citet{chodorowski2005} argues that flat space models are less inferior to the FLRW model than one might expect. He is nonetheless clear that these models are of merely ``pedagogical'' value and are not serious contenders. In particular, the Milne and FLRW models share redshift interpretations. \citet{chodorowski2011} also derives a purely kinematic interpretation of redshifts. \citet[\S4]{gronelgaroy2007} argue Newtonian and special relativistic models of the expansion fail, and suggest a decomposition of cosmological redshift into Doppler and gravitational effects but with strict limits. \citet{peacock2001} proposes a redshift formula with combined Doppler and gravitational contributions.

It turns out the gravitational interpretation requires a system of observers at constant distance from one-another, which is the same property as rigidity.

\subsection{Rigidity}
A rigid body is one which maintains its shape over time, which in the strictest sense is Born-rigidity \citep{born1909} in which the constituent particles maintain exactly constant distance from one-another. Hence the kinematics of rigid bodies are identical to the kinematics required to interpret redshifts as gravitational. The background theory has at times been better developed in the continuum mechanics literature, so one could import concepts and terminology from that discipline.

The distance measured in the proper frame of a moving observer is given by the spatial projector $P_{\mu\nu}$, a quantity found in any advanced textbook on general relativity. This tensor may be expressed in any coordinate system, but a particularly nice choice is coordinates which are comoving for a given particle flow. This choice amounts to tracking particles over time, and is called the ``proper metric'' by one author. This little-known quantity is due to Souriau in 1958 and \citet{maugin1971}.

\citet{brotas2006} claimed to describe the rigid motion of a \dimensional{1} line in the Schwarzschild geometry. Such a discovery would enable flexible interpretation of redshifts. However his results are demonstrably wrong on several points. In particular, the new coordinate system he proposes fails to describe Schwarzschild spacetime as intended.

\citet{bunnhogg2009} propose fitting a rigid chain of observers between two endpoints corresponding to the emission and reception of a photon. They require the velocities at each endpoint to match up, which would seem a contradictory requirement for example if the endpoints are pulling apart. However they point out there is an analogy in the Rindler coordinates, in which the endpoints of a rigid rod have different speeds in almost all frames. Their paper outlines intriguing conceptual proposals but includes almost no calculation.

%This thesis takes a highly theoretical approach, but I also briefly discuss observational consequences or possible experimental tests. In section 2 I provide a general literature review. In section 3 I introduce selected background material on relativity. Section 4 gives a thorough literature review on the obscure quantity which has been called the ``proper metric''. In section 5 I introduce a coordinate system describing particles which resist the expansion of the universe. In section 6 I present original rigidity results for the Schwarzschild spacetime, enabling the interpretation of various redshifts as purely gravitational.

\section{Background part I --- Relativity, spaces, cosmology}
Cosmology is the subset of astronomy and astrophysics that describes the large-scale structure of the universe, including large distances and time scales. It is modeled using general relativity, which superseded Newton's theory of gravity.

\subsection{Models of space, time, and gravity}
\label{sec:models}
In \emph{Newtonian gravity}, the force $F$ between two particles of masses $m$ and $M$ is
\eqn{F=\frac{GmM}{r^2},}
where $r$ is their separation and $G$ is Newton's gravitational constant. Space and time are absolute and modelled as Euclidean $\mathbb R^3\times\mathbb R$ with Cartesian or other coordinates. Newtonian gravity is still used frequently for its computational simplicity and instructive purposes.

In relativity, space and time are no longer completely distinct, but merge into \emph{spacetime}. \emph{Special relativity} ignores gravity but describes the effects of high (relative) velocities. An object moving at speed $V$ relative to some observer has measured \emph{length-contraction} and \emph{time-dilation} by the \emph{Lorentz factor}
\eqn{\gamma(V)\equiv(1-V^2)^{-1/2}.}

\begin{SCfigure}
\centering
  \includegraphics[width=0.4\textwidth]{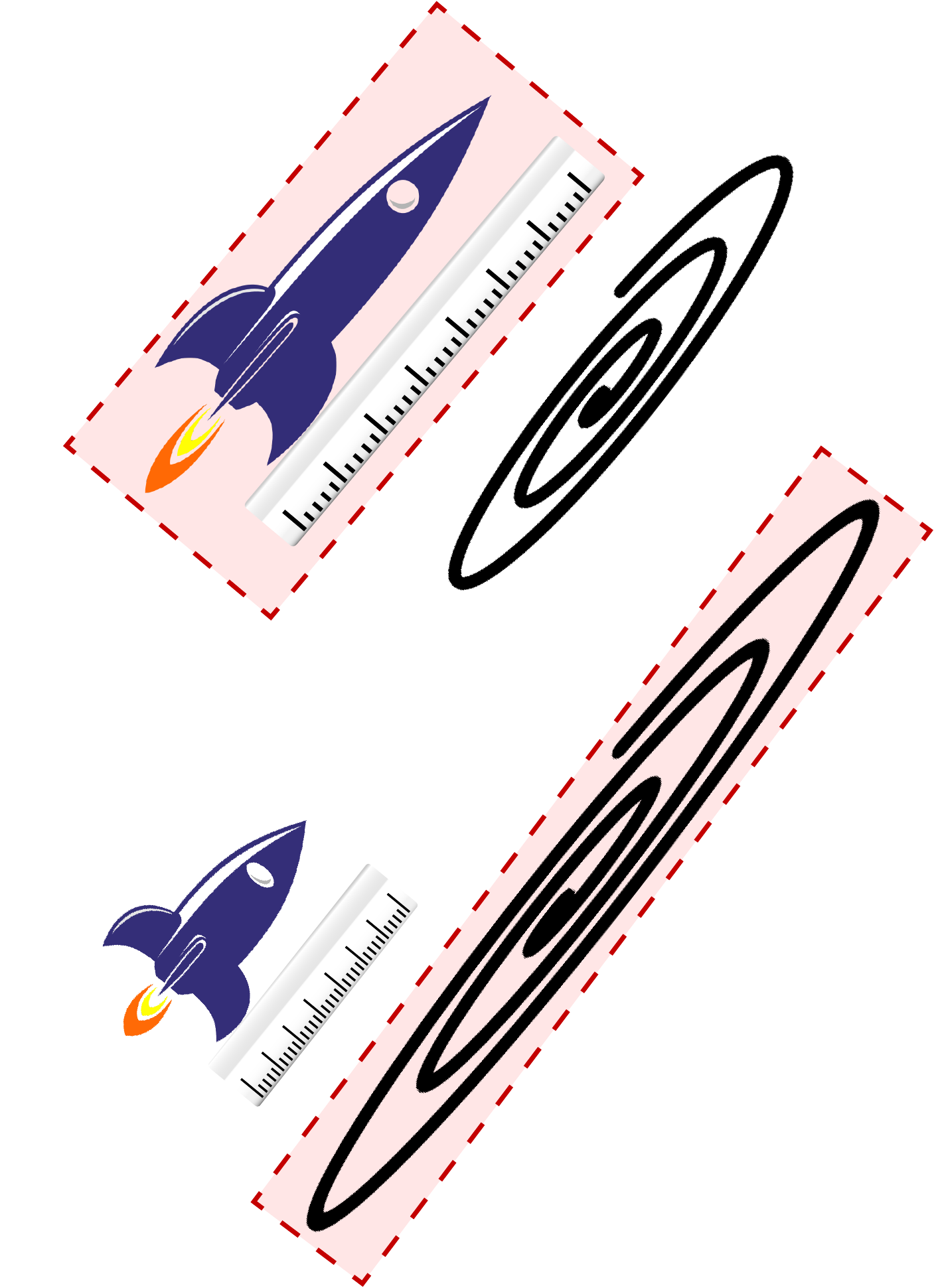}
  \caption{Length-contraction in flat spacetime. Suppose a fast rocket passes a galaxy. In the rocket's frame (top), the galaxy is length-contracted. In the galaxy frame (bottom) it is the rocket which is shortened. This is the usual line of thinking, but the following conceptual approach is equally valid. Suppose a rigid ruler is attached to the rocket, or painted on its side. Then in the rocket frame (top), the rocket ruler measures a shortened galaxy, as before. But in the galaxy frame (bottom), the rocket ruler is contracted, and so it measures a longer galaxy. The ``true'' (proper) length of the galaxy is the geometric mean of these measurements.}
\label{fig:lengthcontraction}
\end{SCfigure}

A \emph{Lorentz boost} changes coordinates between frames in standard configuration, and is the counterpart to Galilean transformations in Newtonian physics. For a coordinate system $(t,x,\ldots)$, a boost in the \direction{x} by speed $V$ is described by the new coordinates $(t',x',\ldots)$ where
\begin{align}
\label{eqn:Lorentzboosttime}t' &= \gamma\left(t-\frac{Vx}{c^2}\right)\\
x' &= \gamma\left(x-Vt\right),
\end{align}
where $c$ is the speed of light. The Lorentz velocity addition law describes how to combine collinear velocities. The \emph{rapidity} $\phi\equiv\tanh^{-1}V$ is a measure of speed for which collinear rapidities $\phi_1$ and $\phi_2$ add as simply $\phi_1+\phi_2$.

In both special and general relativity, spacetime is modelled by a \dimensional{4} geometry which describes the curvature, technically a pseudo-Riemannian \manifold{4} with quadratic form metric $g_{\mu\nu}$ \citep[p32]{hobson+2006}. A point in spacetime is an \emph{event}, described by coordinates $x^\mu=(x^0,x^1,x^2,x^3)$, and here using the sign convention $-+++$ so the first coordinate is time. The spacetime \emph{interval} $s$ is given by
\eqn{ds^2=g_{\mu\nu}dx^\mu dx^\nu.}
(For special relativity in particular, spacetime is modelled by \emph{Minkowski space} which has metric $\eta_{\mu\nu}\equiv\diag(-1,1,1,1)$.)

\emph{Geometric units} are used throughout, where $G=c=1$ and all units are measured in length or an exponent thereof. To convert from a usual unit system, keep lengths unchanged, multiply mass by $Gc^{-2}$, time by $c$, velocity by $c^{-1}$, and so on \cite[Appendix F]{wald1984}.

A \nvector $\fvec u$ has norm or magnitude
\eqn{g(\fvec u,\fvec u)=g_{\mu\nu}u^\mu u^\nu=u^\mu u_\mu,}
and is called timelike/null/spacelike if its norm is respectively negative/zero/positive. A vector is \emph{normalised} if its magnitude is $\pm 1$. For a particle with rest mass, the \velocity{} is $u^\mu\equiv\diff{x^\mu}{\tau}$ where $\tau$ is the proper time of the particle as described later, and $\fvec u$ is timelike and normalised. A photon worldline is null. A \emph{geodesic} worldline corresponds to freefall motion, and is the generalisation of a straight line in flat space.

The \emph{Einstein field equations} are described by the tensors
\eqn{G_{\mu\nu}+\Lambda g_{\mu\nu}=\frac{8\pi G}{c^4}T_{\mu\nu},}
which determine the spacetime geometry based on a given stress-energy-momentum distribution $T_{\mu\nu}$.

The \emph{covariant derivative} $\nabla\fvec u$ of a vector field $\fvec u$ is
\eqn{\nabla_\alpha u^\beta=\pdiff{u^\beta}{x^\alpha}+\Gamma^\beta_{\alpha\gamma}u^\gamma,}
in a coordinate basis. This describes how to \emph{parallel transport} or \emph{parallel propagate} vectors between tangent spaces \cite[\S20.4]{hartle2003}.

The \emph{symmetrisation} of a tensor $T$ is its symmetric part, and is denoted by parentheses around two or more indices, for instance $T_{(\alpha\beta)}$. Similarly, the antisymmetrisation is denoted by brackets around indices, such as $T_{[\alpha\beta]}$. For the rank $2$ tensor $T_{\alpha\beta}$,
\eqn{T_{(\alpha\beta)}=\frac{1}{2}(T_{\alpha\beta}+T_{\beta\alpha}), \qquad\qquad\qquad T_{[\alpha\beta]}=\frac{1}{2}(T_{\alpha\beta}-T_{\beta\alpha}).}

A \emph {Killing vector (field)} $\fvec\xi$ satisfies Killing's equation
\eqn{\nabla_{(\alpha}\xi_{\beta)}\equiv\frac{1}{2}\left(\nabla_\alpha \xi_\beta+\nabla_\beta \xi_\alpha\right)=0.}
Then for a \velocity{} $\fvec u$, $\xi_\alpha u^\alpha$ is constant along a geodesic \cite[p183]{schutz2009} \cite[\S3.8]{carroll2004}.

A \emph{photon} is a quanta of light or other electromagnetic radiation, and is represented in relativity as a point object. \emph{Redshift} $z$ is the lengthening of the wavelength $\lambda$, defined by $z=\frac{\delta\lambda}{\lambda}$, so $1+z=\frac{\lambda_{\rm observed}}{\lambda_{\rm emitted}}$. It is commonly described as originating from three causes --- Doppler/kinematic, gravitational, and cosmological. The first is due to relative motion between an emitter and receiver, the second due to a gravitational field, and the third due to the stretching of space. In special relativity, the only possible cause of redshift is the former, and for a relative speed $V$:
\eqn{\label{eqn:DopplerSR}1+z=\sqrt\frac{1+V}{1-V}.}

For a diagonal metric, the conversion of a contravariant vector to covariant form is given by
\eqn{\label{eqn:contratocov}u_\mu=g_{\mu\nu}u^\nu=g_{\mu\mu}u^\mu.}

\iffalse

\subsubsection{Redshift}

MTW:
standing wave interpretation (MTW 776)
direct calculation, from time between successive wavecrests (MTW 776-778)
by computation of \fourvel. $p_x\equiv\mathbf p\cdot(\pdiff{}{\chi})$ (MTW 778, 780)
de Broglie wavelength scales with $a$ (MTW 780)
distance-redshift relation for local galaxies (MTW 780-782)

\fi

\subsection{Schwarzschild spacetime --- black hole geometry}
The \emph{Schwarzschild geometry} describes the spacetime outside a spherically symmetric mass distribution, which is neither electrically charged nor rotating \citep[\S8]{griffithspodolsky2009}. In \emph{Schwarzschild coordinates} $(t,r,\theta,\phi)$ the metric takes form
\eqn{ds^2=-\Schw dt^2+\Schw^{-1}dr^2+r^2(d\theta^2+\sin^2\theta d\phi^2). \qquad\qquad (G=c=1)}
The coordinates resemble spherical coordinates for flat spacetime, where the angular coordinates $\theta$ and $\phi$ have the same meaning. $r$ is a radial coordinate, although not equal to the radial ``proper distance'' which is greater. $t$ is sometimes called ``far-away time'', because it is the proper time for a stationary observer at ``infinite'' $r$. Observers stationary in these coordinates are called \emph{Schwarzschild observers}. The terms ``geometry'' and ``spacetime'' emphasise the underlying invariant structure, in contrast to the arbitrariness of a given coordinate description.

The radial ``proper distance'' $dR$ is the metric $ds$ after setting $dt=d\theta=d\phi=0$:
\eqn{dR=\Schw^{-1/2}dr.}
A closed-form expression for $R$ is given in Appendix B, although it is easier to work in differential form.

A \emph{black hole} has all its matter contained within the sphere $r=2M$; this radius then becomes an \emph{event horizon}.

A \emph{raindrop} is a small particle in free fall, which initially started from rest at infinity (in the limit). These have purely radial motion.

A spinning black hole is described by the Kerr spacetime, in the electrically neutral case. These have the interesting feature of the \emph{ergosphere}, an oblate spheroid just outside the event horizon, where all matter cannot resist being swept around --- one might describe that ``space'' revolves there faster than light.

\subsection{FLRW spacetime --- universe geometry and expanding space}
The \emph{Friedmann-Lema\^itre-Robertson-Walker} (FLRW) spacetime models an homogeneous and isotropic universe \cite[\S6]{griffithspodolsky2009}. With coordinates $(t,\chi,\theta,\phi)$, the metric is
\eqn{ds^2 = -dt^2 + R(t)^2\left[d\chi^2 + S_k(\chi)^2(d\theta^2+sin^2\theta d\phi^2)\right]. \qquad\qquad (c=1)}
The $R(t)$ term is termed the \emph{scale factor}, and describes the relative expansion of the universe. $\chi$ is a radial coordinate called the \emph{comoving distance}. $t$ is called \emph{cosmic time}, and is the proper time for comoving particles as described below. Spatial homogeneity and isotropy imply constant spatial curvature, which allows only three possible geometries (up to scaling, and with certain additional requirements) --- negative/flat/positive curvature, giving spherical/flat/hyperbolic (i.e. saddle-shaped) geometries respectively, corresponding to $k=-1$, $0$, or $1$. $\chi$ is limited to $\chi\in[0,\pi]$ in the $k=1$ case, otherwise $[0,\infty)$. By definition,
\eqn{S_k(\chi)\equiv\casesifthree{\sinh\chi}{k=-1}{\chi}{k=0}{\sin\chi}{k=1}}

The ``proper distance'' $D$ is the distance to the origin, as determined by the above from metric with $dt=d\theta=d\phi=0$:
\eqn{D=R(t)\chi.}
Objects at constant $(\chi,\theta,\phi)$ are called \emph{comoving}, because they follow the general dynamics of the expansion. This dynamic is known as the \emph{Hubble flow}, and implies the relative speed between any two comoving points is proportional to the distance between them. This is the \emph{Hubble law} $v=HD$, where $v\equiv\diff{D}{t}$ is like a speed. For small redshift $z$ this can be approximated $v\approx cz$ \cite[p374]{rindler2006}. Cosmological redshift is due to the expansion of the universe, and is simply the ratio of the scale factors between emission and observation:
\eqn{1+z=\frac{R_{\rm observed}}{R_{\rm emitted}}.}

The \emph{Friedmann equations} determine the scale factor $R(t)$ and hence the expansion of the universe, over time. They follow from Einstein's field equations, and are given in one variation by:
\begin{align}
\dot\rho &= -3\frac{\dot a}{a}(\rho+\frac{p}{c^2})\\
\frac{\ddot a}{a} &= -\frac{4\pi G}{3}(\rho+\frac{3p}{c^2})+\frac{\Lambda c^2}{3}
\end{align}
Here $\rho$ is density, $p$ is pressure, and $a\equiv\frac{R(t)}{R_0}$ is the normalised scale factor, where $R_0$ is the scale factor at the present time.

Various metaphors are used to illustrate the universe and its expansion. The \emph{balloon model} depicted in figure~\ref{fig:balloonmodel} is suggestive of stationary galaxies (at grid corners for instance) and an expanding space between them (grid lines). It also illustrates the lack of a unique center. Though the strongest analogy is with a positively-curved universe in which case the FLRW metric is exactly the form for a \sphere{3} embedded in $\mathbb R^4$ --- it is analogous to all curvature cases \citep[p361, 363]{hobson+2006} \citep[p374]{rindler2006}. Another model is the \emph{rubber sheet} which illustrates the effect of matter on warping spacetime, by a heavy bowling ball which causes an indentation on the sheet and diverts the path of rolling marbles.

\begin{figure}[h]
\centering
\includegraphics[width=0.7\textwidth]{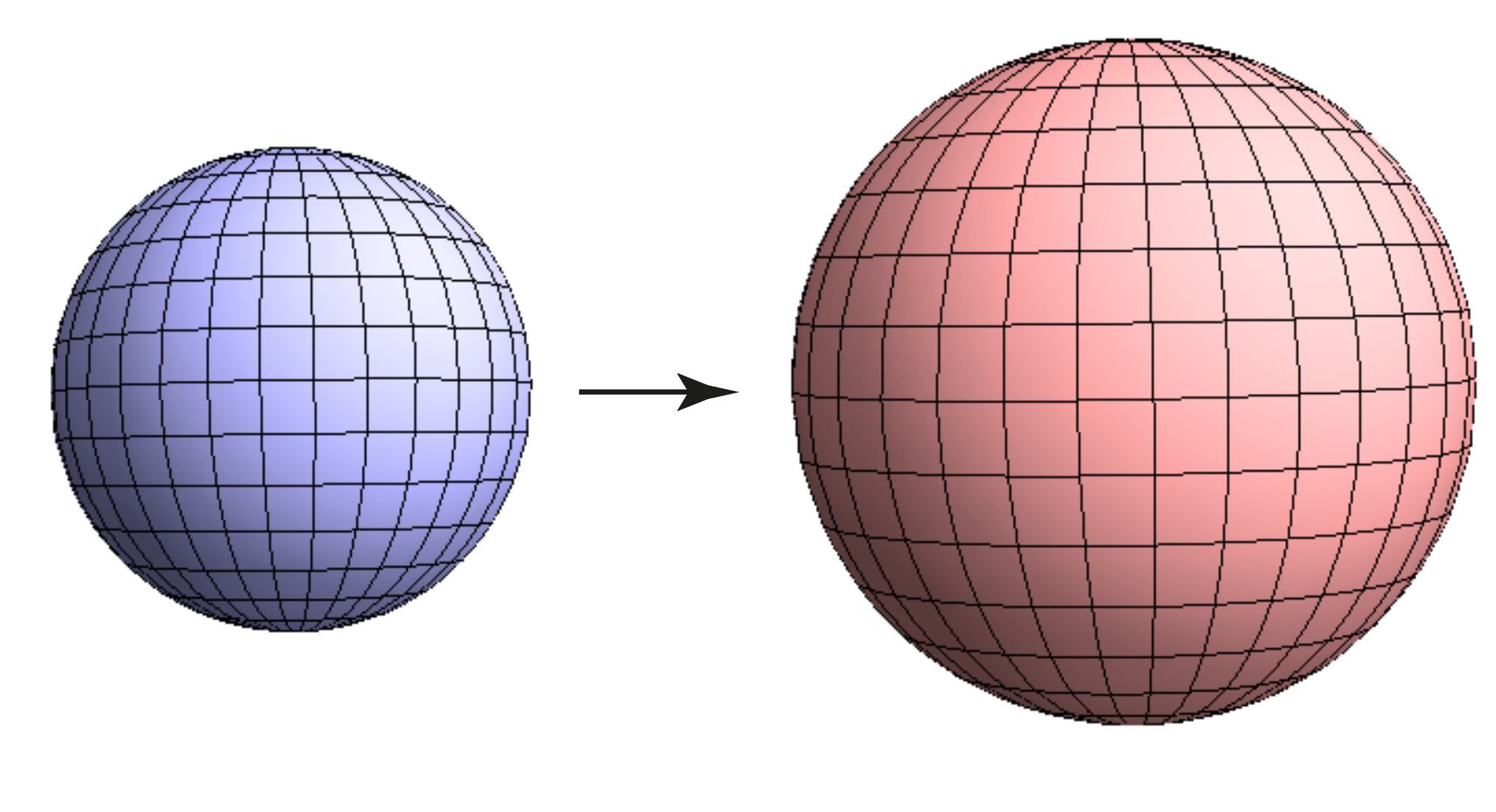}
\caption{Balloon model of the universe, which is expanding over time. Photons emitted at an earlier time have a higher frequency (blue), relative to those same photons at a later time (red), as measured by Hubble observers.}
\label{fig:balloonmodel}
\end{figure}

\subsection{Distance measures in relativity}
\label{sec:distanceintro}
There various distance measures used in cosmology could be categorised into observational and theoretical approaches. These are introduced here, and critiqued in section~\ref{sec:distancecritique} and following.

Observational distances are used by astronomers and include \emph{luminosity distance} $d_L=R_0S_k(\chi)(1+z)$, \emph{angular diameter distance} $d_A=\frac{R_0S_k(\chi)}{1+z}$, and cosmological redshift $z$ utilised as a distance indicator \citep[p345--351]{schutz2009} \citep[p371--374]{hobson+2006}. For close objects there is \emph{parallax} or \emph{proper motion} distance \citep{hogg1999}. These are equivalent in the limit $z\ll 1$ \citep[\S14.4]{weinberg1972}. For Solar System objects \emph{radar distance} is the primary measure.

It is important to connect observables with theory, and the following distances are ostensibly determined by the metric $g_{\mu\nu}$.

For a diagonal metric, one can read straight off a single metric component, for the interval in that spacetime direction. The general definition of \emph{proper distance} is: \citep[p147]{schutz2009} \citep[p11]{carroll2004}
\eqn{\Delta s=\int_P\sqrt{g_{\mu\nu}dx^\mu dx^\nu}=\int_P\sqrt{g_{\mu\nu}\diff{x^\mu}{\lambda}\diff{x^\nu}{\lambda}}d\lambda,}
where $P$ is a spacelike path parametrised by $\lambda$. For a timelike path the \emph{proper time} $\tau$ is given by the same expression with a minus: $\int\sqrt{-g_{\mu\nu}\cdots}$. For a null interval this expression gives $0$. These quantities depend not just on the endpoint events but on the path chosen. Paths which change from timelike to spacelike have no defined length \citep[p33--34]{wald1984}, but for a geodesic more specifically, its type cannot change \citep[p321]{misner+1973}.

\emph{Proper length} is the spatial distance of an object as measured in its own frame(s). It is an \emph{invariant}, at least for a perfectly rigid object, in the following sense: though other frames will in general measure a different value due to length-contraction, all agree that after compensating for their relative motion the result would be the same. Note proper length contrasts with ``proper distance'' which allow an arbitrary interval of time.

The Landau-Lifshitz \emph{radar distance} $\gamma^{\rm LL}_{ij}$ uses hypothetical light signals to determine distance \citep[\S9.2]{schutz2009}. Given a spacetime metric $g_{\mu\nu}$, suppose an observer sits at constant spatial coordinates, and emits a photon which reflects off an infinitesimally close object. The radar distance is defined as half the elapsed proper time for the observer, $\frac{1}{2}d\tau$, since speed ($c$) is distance over time, interpreted here within the particle's frame. It follows,
\eqn{\label{eqn:radardist}\gamma^{\rm LL}_{ij}=g_{ij}-\frac{g_{0i}g_{0j}}{g_{00}},}
where $i,j=1,2,3$. \citet[\S84]{landaulifshitz1971} term it ``proper distance'' and ``the metric of real space'', claiming it determines ``the geometric properties of the space''. They concede that only infinitesimal distances can be measured this way since in general the metric will change over time, and there are limits to simultaneity defined from this method.

Other distance measures include the spatial projector $P_{\mu\nu}$ and proper metric $\gamma_{ij}$ which are defined later.

\subsection{Natural and canonical}
\label{sec:naturalintro}
The terms \emph{natural} and \emph{canonical} describe a quantity or reference frame which stands out within the context of a specific situation. For instance Schwarzschild observers and Hubble comovers are natural in the sense of having various geometrical symmetries, even though physics can be done in any frame. They describe a choice which is especially simple, well-suited, or otherwise stands out within a given situation. This does not deny the principles of relativity or covariance.

Canonical quantities might be described as midway between the concepts ``relative'' and ``invariant''. Relativistic invariants include the number of particles, electric charge, speed of light, rest mass, proper time, proper distance \citep[\S5]{fayngold2010}, proper length, and proper acceleration.

The concept is imprecise; also neither the existence nor uniqueness of a natural choice should be assumed. For instance, for a rotating black hole one can list three natural families of observers \citep[\S1]{bini2014}. A situation with no obvious natural choice is the location of the winch or reel for one variant of the Schwarzschild fishing line described in section~\ref{sec:fishingline}, thus the choice lacks a certain aesthetic quality.

Coordinate systems may be singled out by having some of the following convenient or otherwise ``nice'' properties. A frame is termed \emph{adapted} to a field of observers $\fvec u$ if frame's timelike basis vector is $\fvec u$, and the spacelike basis vectors are orthogonal to it \citep[\S5]{bini2014}. In \emph{comoving coordinates} the material in question remains at constant spatial coordinates. \emph{Synchronous coordinates} have the form $ds^2=-dt^2+g_{ij}dx^idx^j$, named because the \coordinate{t} measures proper time for comoving observers and is orthogonal to the space coordinates, also the hypersurfaces of constant $t$ locally define simultaneity for comoving observers \citep[\S27.4]{misner+1973}. The existence of Killing vectors is associated with symmetries. Finally the existence of an \emph{orthogonal hypersurface} to a given flow yields a canonical measure of distance, as described in section~\ref{sec:orthogonalhypersurface}.

\subsection{Local inertial frames and the equivalence principle}
\label{sec:lif}
The property that curved spacetime is ``locally flat'' or Minkowski is expressed with mathematical precision in \emph{locally inertial frames} (LIFs) and coordinates. At a spacetime event $p$, locally inertial coordinates are defined by the following properties: \citep[p314]{misner+1973} \citep[p73--74]{carroll2004}
\begin{itemize}
\item $g_{\mu\nu}(p)=\eta_{\mu\nu}$ 
\item $\partial_\alpha g_{\mu\nu}(p)=0.$
\end{itemize}
In other words, at $p$ the metric is in canonical form and its first derivatives vanish. Hence the Christoffel symbols are $\Gamma^\alpha_{\beta\gamma}=0$ at the point, however the second derivatives are $\partial_\alpha \partial_\beta g_{\mu\nu}\ne 0$ in general, and since these determine the curvature quantities the curvature has not been removed, which is a reminder the flatness is only local \citep[p2-31 to p2-34]{taylorwheeler2000}.

The orthonormal basis vectors form a \emph{local Lorentz frame}, called a \emph{tetrad} or \emph{vierbien} because there are four of them. Though there is a unique tangent space at every spacetime event, one can choose the basis vectors differently to correspond to different velocities or frames. In general, LIFs hold only at a point, not in the neighbourhood around it. It is also possible to perform local Lorentz transformations.
% Local inertial frames in the Schwarzschild and FLRW spacetimes have the particularly convenient property of arising from a simple rescaling of the usual coordinates.

More specific types of coordinate system approximate flat spacetime even more closely. \emph{Riemann normal coordinates}, known as Gaussian normal coordinates in differential geometry, are defined by geodesic segments \citep[p112--113]{carroll2004}. \emph{Fermi normal coordinates} or freely falling coordinates are closer still, and these fall and rotate with the motion of particles.

A related concept is the \emph{equivalence principle} which comes in several variants but essentially states:
\begin{quote}
[T]here is no experiment that can distinguish a uniform acceleration from a uniform gravitational field. \citep[p113]{hartle2003}
\end{quote}

\subsubsection{Relative speed}
Local inertial frames are ``unbelievably useful'', as in the following example where simply knowing an LIF exists is sufficient for a proof, without even requiring its construction \citep[p75--76]{carroll2004} \citep[p154, 199--200]{hartle2003}.

Suppose a particle with \velocity{} $\fvec u$ passes an observer with \velocity{} $\fvec u_{\rm obs}$, at the same place. Then some local inertial frame exists in which $u_{\rm obs}^\mu=(1,0,0,0)$ and $u^\mu=(\gamma,V\gamma,0,0)$ at the instant of passing, where $V$ is the ordinary relative speed between them --- the magnitude of the \velocity[3] --- and $\gamma$ is the Lorentz factor. In this frame,
\eqn{\label{eqn:relativespeed}\gamma=-\fvec u_{\rm obs}\cdot \fvec u, \qquad\qquad V=\sqrt{1-\gamma^{-2}},}
where the latter equation is listed for reference only and follows directly from the definition of $\gamma$. Since the former equation is covariant (tensorial), it must hold in any frame, whether inertial or not.

\subsection{Rindler coordinates --- the constantly accelerated rigid rod}

\begin{SCfigure}\centering
  \includegraphics[width=0.7\textwidth]{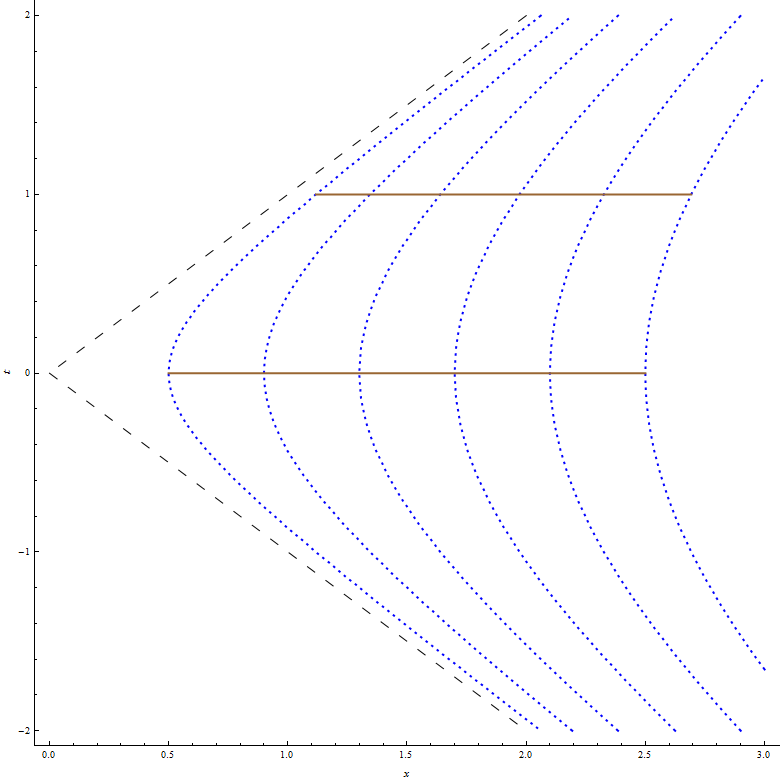}
\caption{The Rindler chart from an inertial observer's frame. The ``Rindler wedge'' is bounded by $0<x<\infty$, $-x<t<x$, and the asymptotes $t=\pm x$ are photon paths. Each hyperbola is the world line of a single particle with constant proper acceleration. For $t<0$, the rod is moving to the left, and for $t>0$ it is moving right; while for all $t$ it is accelerating to the right. Length-contraction effects are apparent in the changing length of the rod (brown line) segments, from $t=0$ to $t=1$.
A real-world rocket would appear far to the right in the Rindler chart than the pictured particles which have extreme accelerations. An acceleration of $1g$ is in geometric units $\frac{1}{X}=1g\cdot c^{-2}\approx 10^{-16}m^{-1}$. So at $t=0$, $x=X\approx 10^{16}$m, or about $1$ light-year.}
\label{fig:Rindlerchart}
\end{SCfigure}

The \emph{Rindler chart} or \emph{coordinates} describe a rigidly accelerating object in Minkowski space \cite[\S3.8, \S12.4]{rindler2006}. The neighbouring particles making up the object maintain constant distance from one another as measured locally in their instantaneous inertial frames. The system has counterintuitive properties which I use as a key pedagogical example of rigidity, to train the intuition for more general rigid motions.

Define a Minkowski frame $(t,x,y,z)$, and assume all motion and acceleration is restricted to the \direction{x}. As a first step, the simplest case is of a single particle. For constant proper acceleration $\alpha$, its worldline satisfies
\eqn{x^2-c^2t^2=\frac{c^4}{\alpha^2},}
which naturally enough is known as hyperbolic motion \citep[\S3.7]{rindler2006}. Note proper acceleration means the acceleration as measured in the instantaneous inertial frame of the particle, which is distinct from the coordinate acceleration measured by the inertial observer.

This generalises to Rindler coordinates which describe a delicately synchronised host of such particles, as in figure~\ref{fig:Rindlerchart}. Suppose the rigid object lies along the \xaxis{x}, and accelerates also along the \xaxis{x}. Each particle undergoes constant proper acceleration. The $y$\nobreakdash- and \coords{z} remain fixed for any given particle, and are suppressed in spacetime diagrams. Thus the object may be conceived of as a $1$D rod or a $3$D block.

Introduce new coordinates $X$ and $T$ by which the motion may be parametrised:
\eqn{\label{eqn:rindlercoords}t=X\sinh T, \qquad\qquad x=X\cosh T.}
This uses geometric units; $T$ is a time coordinate, and $X$ is a spatial coordinate fixed for a given particle which then has constant proper acceleration $\frac{1}{X}$. The $y$ and \coordinate{z}s are unchanged. The transformations are
\eqn{X^2=x^2-t^2,}
so when $t=0$, $x=X$ \citep[p71--73,76]{rindler2006}. Dividing the $t$ and $x$ in equation~\ref{eqn:rindlercoords} yields
\eqn{\tanh T=\frac{t}{x}.}
(This is preferable to Rindler's choice of the multiplicative inverse --- $\coth T=\frac{x}{t}$ --- because $T=\coth^{-1}\left(\frac{x}{t}\right)$ fails for $t=0$ where $\frac{x}{t}$ is undefined. But $T=\tanh^{-1}\left(\frac{t}{x}\right)$ is valid for all events because $x\ne 0$ for all particles, and $\tanh^{-1}(0)=0$ is well-defined even when $t=0$.)

\begin{wrapfigure}{l}{0.5\textwidth}
%\centering
\includegraphics[width=0.48\textwidth]{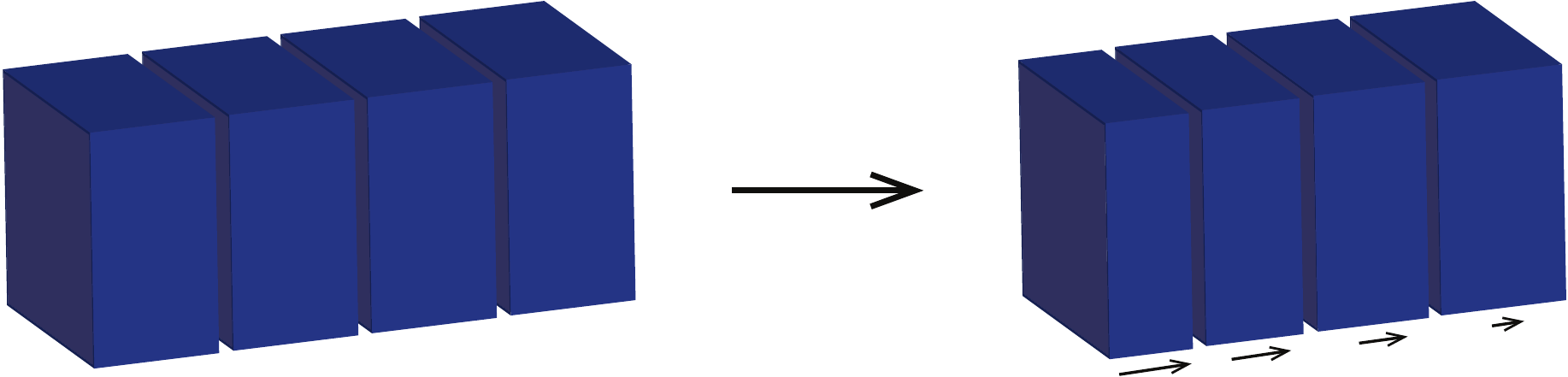}
\caption{A more intuitive picture of the Rindler system as a $3$D block. At $t=0$ the connected segments are of equal length, because the block is stationary (left diagram). After accelerating to the right, the trailing segments are moving the fastest and are thus the most length-contracted as measured by the inertial observer (right diagram). Other optical effects of high speed motion are not depicted. But from the perspective of the block itself, the left diagram applies for all $t$, because an instantaneous inertial frame for one particle is an inertial frame for the whole block, because in this simultaneity convention all particles have the same speed at any given instant.}
\label{fig:Rindlerblock}
\end{wrapfigure}

Note the following properties. For $t>0$ the rod is continually shortening, as measured by the observer, as it picks up speed. Hence, the trailing (more leftwards) particles must move faster than the leading (more rightwards) particles, in order to compensate for the steadily increasing length-contraction. The rod has limited extent in the negative \direction{x}, because the required acceleration increases without limit as the wedge $t=\pm x$ is approach. Thus it could be said the rod ``ends in a photon'' \citep[p51]{rindler1977}. This horizon has analogies with the event horizon of a black hole. In the positive \direction{x} by contrast, there are no limits on the rod's length because the required accelerations diminish.

The coordinate speed of a particle is $v=\diff{x}{t}=\frac{t}{x}$, noting the reversing of $t$ and $x$ \citep[p76]{rindler2006}. In $(t,x)$ coordinates (suppressing $y$ and $z$ which are shared by all coordinate systems used here), the \position{} of a particle may also be expressed
\eqn{\label{eqn:rindler4pos}x^\mu = \left(X\sinh T,X\cosh T,y,z\right).}

In $(T,X)$ coordinates the metric is: \citep[p268]{rindler2006}
\eqn{\label{eqn:rindlerTXmetric}ds^2 = -X^2dT^2+dX^2+dy^2+dz^2.}

\subsubsection{Bell's spaceship paradox}
\emph{Bell's spaceship paradox} is conceptually similar to the Rindler system. It involves a pair of spaceships with a string tied between them, which are initially stationary relative to an inertial observer with coordinates $(t,x,y,z)$. At a given time $t$ both rockets being accelerating in the \direction{x} with constant acceleration. The question is, will the string break?

In the observer's frame, the rockets maintain constant distance, inviting the conclusion that the string will not break. However, due to length-contraction this constant coordinate distance is actually a stretching. Alternatively, from the instantaneous frame of either rocket, the motion is not simultaneous, and the other rocket recedes. Hence the string will break when it exceeds its elastic limit, as deduced by all frames of reference.

This setup demonstrates that constant acceleration by collinear particles does \emph{not} form a rigid system, which is a complementary statement to the Rindler situation. John Bell, who popularised the scenario, interpreted it as demonstrating that length-contraction is ``real'', not just apparent. See \citet[\S5]{natario2014} for spacetime diagrams for versions of the paradox where waves propagate on an elastic string.

\section{Background part II --- Rigid bodies and the proper metric}
A \emph{rigid} body is one which maintains its shape over time, so its constituent particles maintain constant distance from one-another, at least in the strictest definition of rigidity. This is the same property (constant distance) required of a chain of observers to interpret a redshift as gravitational. Thus the kinematics of rigid bodies and gravitational redshift observer families are identical. These distances are measured in the proper frame.

\subsection{Born-rigidity --- the ideal}
\label{sec:bornrigid}
Rigidity is an under-developed concept within relativity, for various reasons. Most astronomical objects are well modelled by fluid or point particles, although exceptions include neutron star crusts and bar detectors for gravitational waves \citep[\S1]{beigschmidt2003}. Rigidity has been more fully developed in continuum mechanics than relativity \citep[p271, 277]{maugin2013} \citep[p1]{beig2004} \citep[\S1]{brotasfernandes2003}, although interest in the measurement process did surge since the 1990s \citep[\S9]{bini2014}. Also the restrictive nature of Born-rigidity, as described next, discouraged further research.

\citet{born1909} defined a rigid object as one with constant distance between the constituent particles, as measured locally in the instantaneous proper-frame of either, in the context of special relativity. Though a straightforward choice of definition, Herglotz and Noether both showed in 1910 that such a body has only three degrees of freedom --- spatial translations. The motion of a single particle determines the motion of the entire body. (There is additionally a class of allowed motions \citet[p26--27]{dewitt2011} calls ``superhelical'' which have no degrees of freedom at all \citep{guilini2010}.) In classical mechanics a rigid body has six degrees of freedom, consisting of rotations as well as translations. But in relativity, any change in rotation speed would cause length-contraction in the angular direction of rotation, and thus is not allowed for Born-rigid motion. This is the source of the ``Ehrenfest paradox'' concerning motion of a rotating disc.
% Superhelical motion. Lyle 2014 may be saying that a spinning disc is superhelical, no degrees of freedom? ``the rigid motion described earlier as superhelical is precisely the motion of fixed space points in the coordinate system constructed by Friedman and Scarr for an observer with GUA in the case where the rotational part of the acceleration matrix is not zero.'' \cite[p135]{lyle2014}

Hypothetically, if one were to bump a Born-rigid object, the motion would propagate instantaneously throughout the body in order to maintain constant distance between all particles. But this would be an impossible contradiction with relativity because no signals may travel faster than light. Furthermore, due to the relativity of simultaneity, this ``instantaneous'' propagation would be backwards in time in some frames, so the ``effect'' (the motion) would precede the ''cause'' (the bump) \citep{lyle2010}. This would appear to be the death knell for Born-rigidity, but the resolution to this dilemma is simply that a bumped object will not display Born-rigid dynamics. Instead, an elastic theory will apply, for instance one in which disturbances propagate at $c$.

Applied correctly, Born-rigidity is indeed logically-consistent \cite[\S1]{eriksen+1982}. A body cannot be \emph{intrinsically} or passively Born-rigid, in the sense that pulling on one point would not result in Born-rigid motion. Nonetheless it is entirely consistent for Born-rigidity to ``emerge'' from a system, for instance where a force is applied on every particle individually as in an electric field, or with a fleet of rockets on pre-programmed flight paths. It is ``[o]nly by prearrangement'' \citep[p119]{taylorwheeler1992}; or, there are no ``rigid bodies'' but only ``rigid motion'' \citep[\S3.3]{guilini2010}.

It is helpful to remember that even ``in pre-relativistic mechanics, rigidity was always an ideal concept, at best a convenient approximation that no one would really have expected to be possible'' \citep[\S1.8]{lyle2010}. Rigid coordinate systems are also useful, as \citet[p2]{brotas2006} describes of the black hole fishing line, ``[w]e cannot use an undeformable line to angle a fish but we may use it to define a coordinate system.'' In summary, Born-rigididy is useful as:
\begin{itemize}
\item a base reference for elasticity, strain, etc.
\item an approximation for more realistic deformations
\item rigid coordinate systems
\item a new distance measure, ``rigidity distance'', described later.
\end{itemize}

\subsection{Spatial projector tensor}
\label{sec:projector}

The \emph{spatial projection} (or \emph{projector}) \emph{tensor} $P_{\alpha\beta}$ describes spatial distance as measured in the local proper frame of an observer. For a given timelike vector field $\fvec u$ corresponding to observer motion, it is defined by $\fvec P \equiv g+\fvec u\otimes\fvec u$, which in covariant coordinate form is: \citep[Appendix F]{carroll2004} 
\eqn{\label{eqn:projectorcoord}P_{\alpha\beta} = g_{\alpha\beta}+u_\alpha u_\beta.}
The projector maps tensors into the \emph{local rest space} of the motion, which is the set of vectors orthogonal to $\fvec u$ --- a subspace of the tangent space at each event \citep[p36]{guilini2010}. In other words, it gives a splitting of the tangent space \citep[\S1]{bini2014}. The distance is the length of the geodesic striking orthogonally to the velocity \cite[\S9]{defeliceclarke1990}. Note this orthogonal subspace is canonically determined for a given flow $\fvec u$ --- see also section~\ref{sec:orthogonalhypersurface}.

$P$ acts on a vector $\fvec v$ by contracting with it: \citep[p79]{hawkingellis1973}
\eqn{P^\alpha_\beta v^\beta.}
$P$ is also called the \emph{spatial metric} on the orthogonal subspace \citep[p217]{wald1984} \citep[p61]{carterquintana1972}, since
\eqn{P_{\alpha\beta}v^\alpha w^\beta=g_{\alpha\beta}v^\alpha w^\beta \qquad\qquad {\rm for}\enskip \fvec v,\fvec w \perp \fvec u.}

The projector has properties including: \citep[p272]{maugin2013} \citep[p23]{dewitt2011} 
\begin{itemize}
\item related expressions for contravariant and covariant
\item $P_{\alpha\beta}=P_{\beta\alpha}$ (symmetry)
\item $P^\alpha_\beta u_\alpha=0$ (an orthogonality property)
\item $P^\alpha_\beta P^\beta_\gamma=P^\alpha_\gamma$ (idempotence, meaning applying it a second time has no additional effect, that is $P^2=P$)
\item it can be used to raise or lower indices of purely spatial quantities.
\end{itemize}

Historically, Catt\`aneo was the first to give a ``systematic study'' according to \citep[p556]{massa1974}, although \citet[p321]{cattaneo1958} himself states the term ``space norm'' was already well known. The projector tensor is a standard quantity, found in any advanced book on general relativity. However, I will often work with a little-known adaptation called the \emph{proper metric}, a Lagrangian version which follows the motion of particles, and is described shortly.

Note $P$ actually gives a longer distance than the metric, at least in the forthcoming situations. This is counter-intuitive when used to thinking of length-\emph{contraction}. But it follows directly that if a ruler is contracted, then it will measure a \emph{greater} length, as illustrated in figure~\ref{fig:lengthcontraction}. Compare \citet[\S14]{gron2004} who illustrates the same idea in the context of the rotating disc.

\subsubsection{Example: Projector in the Rindler chart}

The \velocity{} of particles may be determined as follows (I have done this independently of any known result). The proper time is obtained from the metric, which is straightforward in $(T,X)$ coordinates; from equation~\ref{eqn:rindlerTXmetric},
\eqn{d\tau^2 \equiv -ds^2 = X^2dT^2,}
since $dX=dy=dz=0$. So $d\tau=XdT$, noting $X$ is always positive and that the time increments must have the same sign. (Alternatively, this may be obtained from normalisation of the \velocity{} $u^\alpha u_\alpha=-1$). But $X$ is just a constant for a given particle, so after integrating,
\eqn{\tau=XT,}
upon choosing the initial condition $\tau=0$ when $T=0$. (So $T$ is not a synchronous coordinate, but it turns out to be the \emph{rapidity} $\phi$.)

\begin{SCfigure}
  %\centering
  \includegraphics[width=0.6\textwidth]{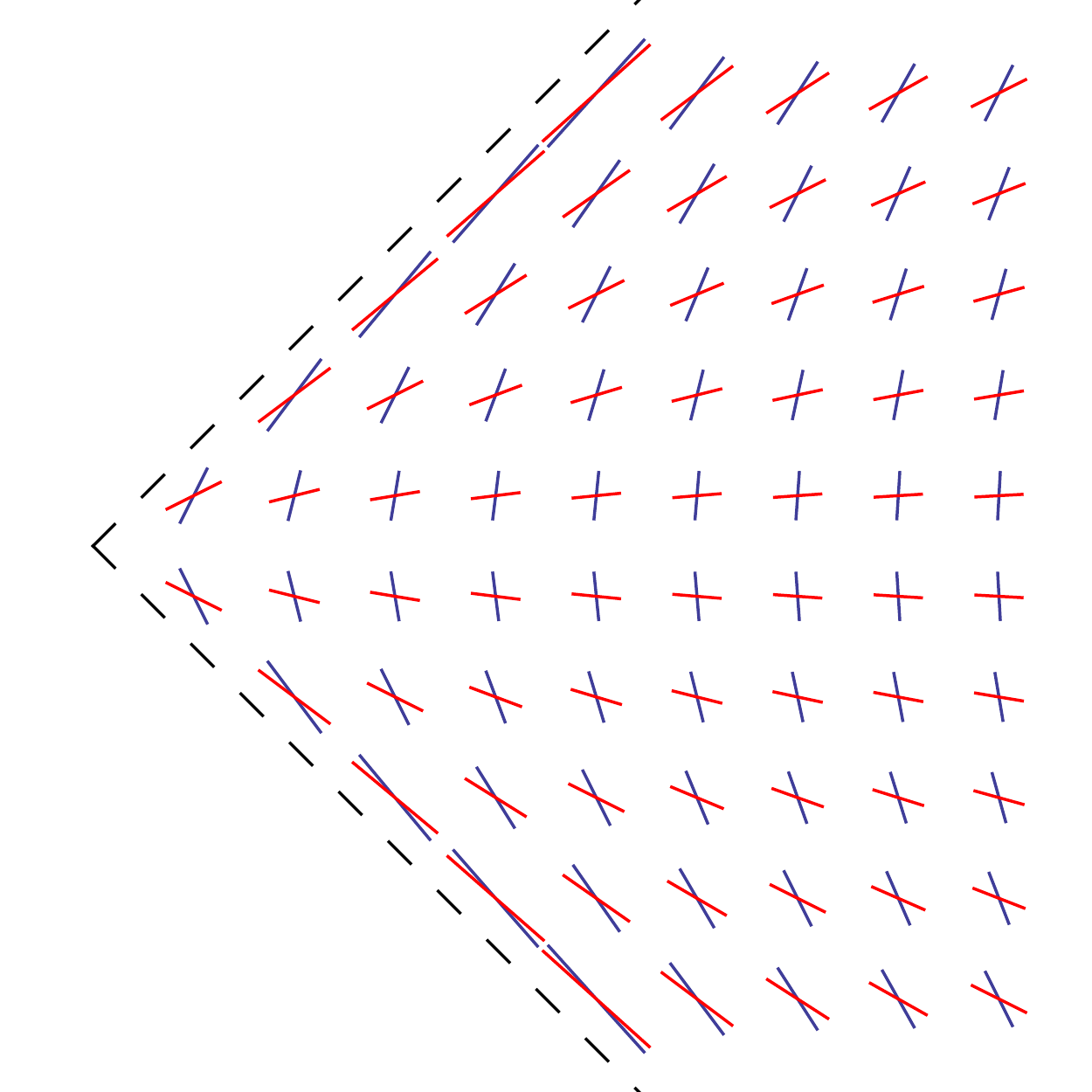}
  \caption{Motion and the orthogonal subspaces for the Rindler chart. Blue vectors represent the velocity $\fvec u$. Red vectors represent the space orthogonal to the motion. These spaces are $3$-dimensional, but the $y$ and $z$-components are suppressed. Recall $4$-orthogonality in Minkowski space does not generally correspond to $90^\circ$ angles on a Minkowski diagram. The flow lines of the orthogonal vectors integrate to form radial lines from the origin of constant $T$, corresponding to the unique instantaneous hyperplanes of simultaneity for the particles. See also section~\ref{sec:orthogonalhypersurface}.}
\label{fig:Rindlerorthogonal}
\end{SCfigure}
% Tamara's ``wrapfigure'' code for placing an inline figure:
%\begin{wrapfigure}{r}{0.5\textwidth}%\vspace{-2.5em}
%\centerline{\includegraphics[width=0.46\textwidth]{figures/Quasar_S18_15.pdf}}
%\vspace{-1.3mm}
%\caption{AGN schematic showing viewing angle for the broad-line region. }%\newline {\copyright} 1997 Wadsworth Publishing Company %[www.ualberta.ca/$\sim$pogosyan/teaching/ASTRO\_122/lect27/lecture27.html]
%\vspace{-12mm}
%\label{fig:agn_schematic}
%\end{wrapfigure}
%\textbf{\textsf{\textsl{Active Galactic Nuclei (AGN)}}}

Working now in the original coordinates $(t,x,y,z)$, the \position{} is, from equation~\ref{eqn:rindler4pos},
\eqn{x^\mu = (t,x,y,z) = \left(X\sinh\left(\frac{\tau}{X}\right), X\cosh\left(\frac{\tau}{X}\right), y, z\right).}
Thus the \velocity{} is
\eqn{\label{eqn:rindler4velocity}u^\mu=\pdiff{x^\mu}{\tau} = \left(\cosh\left(\frac{\tau}{X}\right), \sinh\left(\frac{\tau}{X}\right), 0, 0\right).}
Clearly $u^\mu u_\mu=-1$ as required. The covariant form follows from equation~\ref{eqn:contratocov}:
\eqn{u_\mu=\left(-\cosh\left(\frac{\tau}{X}\right),\sinh\left(\frac{\tau}{X}\right),0,0\right).}

Thus
\eqn{u_\mu u_\nu = \begin{pmatrix}
\cosh^2\left(\frac{\tau}{X}\right) & -\sinh\left(\frac{\tau}{X}\right)\cosh\left(\frac{\tau}{X}\right) & 0 & 0 \\
-\sinh\left(\frac{\tau}{X}\right)\cosh\left(\frac{\tau}{X}\right) & \sinh^2\left(\frac{\tau}{X}\right) & 0 & 0 \\
0 & 0 & 0 & 0 \\
0 & 0 & 0 & 0 \\
\end{pmatrix}}

Thus, after simplifying using the identities $\cosh^2(x)-\sinh^2(x)=1$ and $\sinh(x)\cosh(x)=\frac{1}{2}\sinh(2x)$,
\eqn{P_{\mu\nu} = g_{\mu\nu}+u_\mu u_\nu = \begin{pmatrix}
\sinh^2\left(\frac{\tau}{X}\right) & -\frac{1}{2}\sinh\left(\frac{2\tau}{X}\right) & 0 & 0 \\
-\frac{1}{2}\sinh\left(\frac{2\tau}{X}\right) & \cosh^2\left(\frac{\tau}{X}\right) & 0 & 0 \\
0 & 0 & 1 & 0 \\
0 & 0 & 0 & 1 \\
\end{pmatrix}.}
This is the spatial metric for the Rindler particles; it is the background spacetime metric $g$ as distorted by the motion $\fvec u$. But the situation is much clearer in the proper metric.

\subsection{Material manifold}
The proper metric requires the tracking of individual particles, which is achieved by assigning a permanent label to each particle. These form a coordinate system $X^i$, $i=1,2,3$, for a \dimensional{3} \emph{material manifold} or reference configuration $\mathcal M^3$, corresponding to snapshots in time. Then the material coordinates along with the proper time of particles can be used to parametrise the motion, via $\fvec x=\fvec x(\vec X,\tau)$. At times I will bundle $\tau$ together with the $X^i$, defining $X^0\equiv\tau$ and using a Greek index $X^\mu$ to refer to the collection; at other times I use just the spatial coordinates with a Latin index $X^i$.

Generally, in classical fluid dynamics this approach of following the motion is called \emph{Lagrangian}, as opposed to an Eulerian approach which considers a fixed spatial location.

\subsubsection{Example: Material coordinates for the Rindler system}
Conveniently, permanent labels for the Rindler particles are supplied already by the $(T,X)$ coordinates, where the labels $(X,y,z)$ parametrise the particles. Also $T=\frac{\tau}{X}$, so the motion is parametrised by material labels and proper time as sought.

\subsection{Proper metric}
\label{sec:propermetric}
The \emph{proper metric} describes the local spatial distance between moving particles, as measured in their local proper frames. It is formed by adapting the spatial projector to the motion, a Lagrangian approach which tracks the individual particles. It is defined by \cite[\S2]{dewitt2011}
\eqn{\gamma_{ij}\equiv P_{\mu\nu}\pdiff{x^\mu}{X^i}\pdiff{x^\nu}{X^j},}
where $P$ is the spatial projector, and the $X^i$ are material coordinates. $\gamma$ is a $3\times 3$ tensor where $i,j=1,2,3$. (It could also be defined as a $4\times 4$ tensor, but the terms $\gamma_{0\alpha}$ vanish.) I point out --- and haven't seen this point made in the literature --- that this is simply the tensor transformation formula for $P_{\mu\nu}$ under the coordinate change $x^\mu \rightarrow X^\nu$. In other words, it is the space norm expressed in the material \coordinate{X}s. This is one approach to showing $\gamma$ is a tensor, because it is a coordinate transformation of the tensor $P$, which in turn is a tensor because it is a sum and direct product of the tensors $g_{\mu\nu}$ and $u^\mu$. 

The proper metric is used like the usual spacetime metric $g$: two neighbouring particles $\vec X$ and $\vec X+d\vec X$ perceive, in their instantaneous rest frames, a separation we could denote $dL$ given by
\eqn{dL^2=\gamma_{ij}dX^i dX^j.}

The tensor is described, at least for a related ``inverse-motion'' approach, as
\begin{quote}
A natural relativistic measure of deformation... built from the inverse motion gradient... It is the canonical projection of $P^{\alpha\beta}$ onto $M^3$ by the motion. \citep[p274]{maugin2013}
\end{quote}

Though the above formula is valid for any motion, the specific case of rigid motion occurs when the comoving proper lengths remain constant, that is, \cite[\S2]{dewitt2011}
\eqn{\pdiff{\gamma_{ij}}{\tau}=0.}

Historically, the quantity was originally due to Souriau in 1958, but see his subsequent works which are more readily accessible. \citet[p280]{maugin1971} discovered it independently, and calls the $\pdiff{x^\mu}{X^i}$ terms the \emph{direct gradient} of the motion. He gives a helpful historical review within the context of continuum mechanics, and mentions analogues in various classical elasticity and strain tensors. \citet[\S2]{dewitt2011} gives the best and most thorough overview, and only he uses the term ``proper metric'', at least amongst the authors cited here. DeWitt's work is recapitulated by \cite{lyle2010,lyle2014}, whose own contributions are not insignificant. See also \citet[p284,288]{ferraresebini2008}.

The proper metric is surprisingly obscure given how important distance measures are in relativity. But note the procedure of generating a new metric from $g$ is not unfamiliar, having precedents in the spatial projector and the Landau-Lifshitz radar metric, thus casting doubt on the all-sufficiency of $g$. As for its legitimacy, see the derivation given by DeWitt, and note it works well for the Rindler and other rigid systems analysed, also it reduces to $g$ in the case of stationary observers in a diagonal metric. \citet[p275]{maugin2013} argues for the insufficiency of $P_{\mu\nu}$ in some situations, because a correction term is required in order to define a certain deformation tensor. I concur for the specific case of rigidity, which is not clear from the projector alone except for certain highly symmetric situations, but is beautifully clear in the proper metric.

\subsubsection{Example: proper metric for the Rindler chart}

\begin{wrapfigure}{r}{0.4\textwidth}
  \centering
    \includegraphics[width=0.38\textwidth]{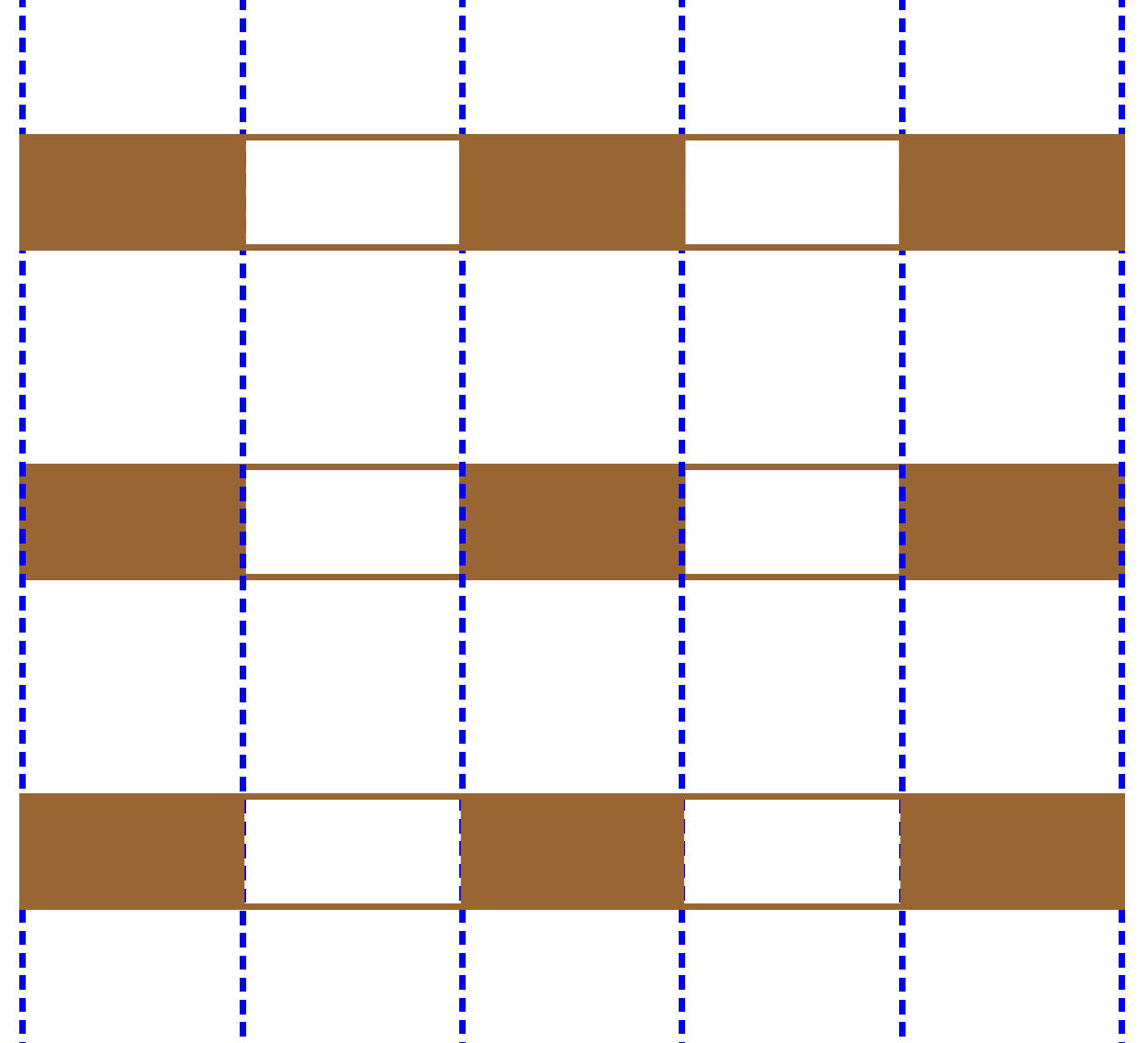}
  \caption{The Rindler rod as measured in the proper frames of the particles themselves. Here, the rod has been painted in segments for clarity. Over time, it remains unchanged. This shows the rod stays Born-rigid. Compare figure~\ref{fig:Rindlerchart}.}
\end{wrapfigure}

Finally, the proper metric for the Rindler system is given as follows, which I have done independently of any known result.
\eqn{\left(\diff{x^\alpha}{X^i}\right)=\begin{pmatrix}
 \cosh \left(\frac{\tau }{X}\right) & \sinh \left(\frac{\tau }{X}\right)-\frac{\tau  \cosh \left(\frac{\tau }{X}\right)}{X} & 0 & 0 \\
 \sinh \left(\frac{\tau }{X}\right) & \cosh \left(\frac{\tau }{X}\right)-\frac{\tau  \sinh \left(\frac{\tau }{X}\right)}{X} & 0 & 0 \\
 0 & 0 & 1 & 0 \\
 0 & 0 & 0 & 1 \\
\end{pmatrix}.}
But then the proper metric is amazingly simple:
\eqn{\gamma_{ij}=\begin{pmatrix}
 1 & 0 & 0 \\
 0 & 1 & 0 \\
 0 & 0 & 1 \\
\end{pmatrix}}
This gives the proper-frame local distances over time, as measured in the material coordinates $(X,y,z)$.  Because it is independent of $\tau$, the system is rigid. Compare \citet[\S2.4]{dewitt2011}. Writing out the bilinear form, $\gamma_{ij}dX^idX^j=dX^2+dy^2+dz^2$, so in the material coordinates the proper-frame spatial distance is simply Euclidean.

\subsection{Additional rigidity measures}
There are numerous other tensors and equations which may appear to determine rigidity, which are briefly mentioned here for completeness. The equation of continuity is \citep[p379]{hobson+2006}
\eqn{\nabla_\mu T^{\mu\nu}=0,}
however conservation of stress-energy is a much looser condition than rigidity.

The \emph{expansion tensor} describes the diverging of particle motion: \citep[p82--83]{hawkingellis1973} \citep[\S4]{bini2014} \citep[p64--65]{carterquintana1972}
\eqn{\theta_{\alpha\beta} \equiv P^\gamma_\alpha P^\delta_\beta \nabla_{(\gamma}u_{\delta)} = \frac{1}{2}\left[\mathcal L_u P\right]_{\alpha\beta},}
where parentheses refer to symmetrisation of the indices, and $\mathcal L$ is the Lie derivative. Its trace is the \emph{expansion scalar} $\theta$, which describes the volume change of a small sphere comoving with the fluid:
\eqn{\theta \equiv P^{\alpha\beta}\theta_{\alpha\beta} = P^{\alpha\beta}u_{\alpha;\beta}=u^\alpha_{;\alpha}.}
This scalar appears in the ``kinematic decomposition'' of the motion and in the Raychaudhuri equation, for $\fvec u$. However these quantities may need to be adapted to a Lagrangian approach of tracking the motion.

\emph{Killing's theorem} in geometry states rigid motion is described locally by
\eqn{\partial_\alpha u_\beta+\partial_\beta u_\alpha=0.}
However this only holds in special relativity \citep[p271]{maugin2013}, and needs to be spatially projected \citep[p37]{guilini2010}.

Other tensors include the \emph{rate of strain tensor} \citep[p92--93]{lyle2010} \citep[p23]{dewitt2011}, also $E_{\alpha\beta}$ ``a kind of relativistic `Eulerian' tensor of deformation'' \citep[p275]{maugin2013}, and canonical decompositions such as the \emph{Cauchy stress} \citep[p272]{maugin2013} and the \emph{electrogravitic} or \emph{tidal tensor} $E$.

Further brief bibliographic overviews of relativistic rigidity appear in \citet[\S15]{maugin2013}, \citet[\S1]{beigschmidt2003}, \citet[\S1,\S3]{wernig-pichler2006}, \citet[\S1]{carterquintana1972}, and \citet[\S0]{natario2014}.

\section{Flat space analogues}
In this section I point out parallels between the general relativistic FLRW model and two flat spacetime models --- the Milne model and Newtonian cosmology. For the former a coordinate transformation shows the ``empty universe'' is simply Minkowski space. In the latter, the Friedmann equations are reproduced from Newtonian gravity. In both flat spaces the expansion is normally interpreted as kinematic, not as a stretching of space. Overall, the the results in this section are not original to myself.

\subsection{Milne model --- the empty universe is flat}
The \emph{kinematic model} is a special-relativistic cosmology which explains the expansion of the universe and hence redshifts as due to motion rather than ``stretching of space''. It was proposed by Milne in 1932, whose work was passionately debated until consensus settled on general-relativistic cosmology \citep{milne1935} \citep{gale2014}. Though not a viable model of our universe, it remains useful as a comparison model \cite[\S16.3]{rindler2006} \cite[\S4]{davis2004}.

Define coordinates $(t,r,\theta,\phi)$ for Minkowski space where the spatial part is the usual spherical coordinates. The Milne model describes an ``explosion'' at the spatial origin at $t=0$, from which particles expand outwards at all possible speeds. For any given particle the speed $v\equiv\frac{r}{t}$ remains constant because the model assumes no gravity, and $\theta$ and $\phi$ also remain fixed.

This model is more robust than it seems. It appears to possess a unique centre and boundary, which would violate the \emph{cosmological principle}. (The spherical boundary is made up of the null worldlines $r=t$, and is not achieved by any particle.) However we can Lorentz boost to the global inertial frame of any given particle, which would then measure these same properties of itself --- of being stationary in these coordinates and at the site of the explosion, with the boundary receding at $r=t$ relative to it, and infinite matter in all directions.

Yet the model is not \emph{maximal}, having ``more spacetime than substratum'' \citep[p361]{rindler2006}. There is an asymmetry in causality, in that a source which is outside the boundary at a given time could have earlier emitted a photon towards the expanding ball, thus potentially influencing the explosion particles but not being influenced by them before this time. Also matter near the boundary would require literally perfect fine-tuning.

The sense of isotropy and homogeneity according to proper-frame measurements is made precise by the following coordinate transformation. Note that locally, in the instantaneous inertial frame of a given particle, the perceived separation from its neighbours is given by the rapidity $\chi$ rather than the \coordinate{r}. We have $\tanh\chi=v=\frac{r}{t}$. For a new time coordinate, the natural choice is the proper time $\tau$ of particles, which is given by
\eqn{\diff{t}{\tau}=\gamma=(1-v^2)^{-1/2}=(1-\tanh^2\chi)^{-1/2}=(\sech^2\chi)^{-1/2}=\cosh\chi,}
so $t=\tau\cosh\chi$, taking $\tau=0$ when $t=0$. Also
\eqn{r=tv=t\tanh\chi=\tau\cosh\chi\tanh\chi=\tau\sinh\chi,}
so in summary,
\eqn{t=\tau\cosh\chi, \qquad\qquad r=\tau\sinh\chi.}
The $\theta$ and $\phi$ coordinates remain unchanged. The existing metric is
\eqn{ds^2=-dt^2+dr^2+r^2d\Omega^2,}
writing $d\Omega^2\equiv d\theta^2+\sin^2\theta d\phi^2$ for the metric of the usual unit \sphere{2}. It follows
\begin{align}
dr &= \pdiff{r}{\tau}d\tau+\pdiff{r}{\chi}d\chi=\sinh\chi d\tau+\tau\cosh\chi d\chi,\\
dt &= \pdiff{t}{\tau}d\tau+\pdiff{t}{\chi}d\chi=\cosh\chi d\tau+\tau\sinh\chi d\chi.
\end{align}
After substituting into the metric and simplifying using $\cosh^2-\sinh^2=1$,
\eqn{ds^2=-d\tau^2+\tau^2(d\chi^2+\tau^2\sinh^2\chi d\Omega^2),}
so the synchronous comoving coordinates are an FLRW metric! The cosmic time is $\tau$, the scale factor $R(\tau)=\tau$ is linear, and the curvature $k=-1$ is negative as seen from the $\sinh$ term, which is expected because this eternally expanding universe has surplus energy above the critical density. The sparse matter density, which is the limit $\rho\rightarrow 0$ or ``empty universe'' could model a late stage of our universe, were it not for the requirement of cosmological constant $\Lambda=0$ which is a fatal contradiction with current observation \citep[p362, 399]{rindler2006}. The Milne model also has analogues in de Sitter and anti-de Sitter space \citep[p401]{rindler2006}.

See \citet[p63]{davis2004} for spacetime diagrams comparing the Minkowski and FLRW representations. The coordinate velocities differ between the models, as do the ``Hubble law''s based on them:
\eqn{v_{\rm Milne}=\frac{r}{t}=\tanh\chi, \qquad\qquad\qquad H_{\rm Milne}=\frac{v}{r}=\frac{1}{t},}
where the latter formula follows from $v=H_{\rm Milne}D$, where the distance is $D=r$ in this case. In FLRW coordinates, 
\eqn{v_{\rm FLRW}=H_{\rm FLRW}D=\dot R\chi=\frac{1}{\chi}, \qquad\qquad\qquad H_{\rm FLRW}=\frac{\dot R}{R}=\frac{1}{\tau},}
where the former follows from the latter.

In fact the Milne velocity even fits the special-relativistic Doppler shift formula
\eqn{1+z=\sqrt\frac{1+v_{\rm Milne}}{1-v_{\rm Milne}},}
however note this is \emph{not} the usual Hubble flow velocity \citep[p64--65]{davis2004}. \citet{chodorowski2005} shows the angular diameter distance is the same in both models, as is the luminosity distance.

One may wonder how the flat Milne model could transform into a hyperbolic FLRW model, when the coordinate transformation (diffeomorphism) implies the geometries should be the same! Indeed the Milne model has flat spacetime. However in this FLRW model it is only the \emph{spatial} slices which are hyperbolic, and this slicing is based on the coordinate representation. In fact the \emph{spacetime} of this particular FLRW model is indeed flat. In general, an FLRW geometry has only three distinct Riemann curvature tensors --- $0$, $-\frac{\ddot a}{a}$, and $\frac{k+\dot a^2}{a^2}$ --- at least in one coordinate system \citep[p547--548]{hartle2003}. In the present case $k=-1$ and $\dot a=1$, so the Riemann tensors are all zero in the cited coordinate system, hence they are zero in all coordinate systems, and the spacetime is flat.

The Milne model fails observation \citep[\S2.4]{davislineweaver2004}, although it does so less badly than may be expected: 
% mentioned in \S2.4 of Davis 2004 PhD thesis. Also Tamara did retract something about this, based on feedback from Davies
\begin{quote}
Though not a viable alternative to the currently favored model, the Milne model has great pedagogical value, elucidating the kinematic aspect of the universe's expansion. ... [For] supernovae Ia, it remains a useful reference model when comparing predictions of various cosmological models. \cite[\S5]{chodorowski2005} 
\end{quote}

In conclusion, the Milne model suggests the expansion of the empty universe might be interpreted not only as a stretching of space, but also as a kinematic expansion with kinematic (Doppler) redshifts.

\iffalse
----

\subsubsection{Milne model}
Rindler p368 - of Milne model: ``A good review of this theory can be found in H. Bondi, Cosmology, Cambridge University Press, 1961.''

(Davis gives sources Ellis (1988) Sect. 4.8. and Peacock, 1999, Sect. 3.3. Also Davis herself, \cite[\S4]{davis2004}) % Davis p59-65

Davis shows that though algebraic form is identical, the interpretation is a little different:
\begin{quote}
...Superluminal recession velocities in the FRW universe do not violate special relativity because they are not in the observer’s inertial frame. \cite[\S4.2]{davis2004}
\end{quote}

\fi

\subsection{Newtonian cosmology --- Friedmann equations in Newton's gravity}
\emph{Newtonian cosmology} is the model of the universe as an infinite, homogeneous, and isotropic matter distribution filling all of $\mathbb R^{3}$, under Newtonian gravity. Curiously, its expansion dynamics turn out to be identical to the FLRW model, which are described by the scale factor $R(t)$ in the Friedmann equations. Hence there is nothing specifically general-relativistic about the rate of expansion, which may raise questions about the usual interpretation of $R(t)$ as expanding space rather than motion through space. (Disclaimer: though equal on this point, on many other aspects --- and overall --- general relativity is observationally superior).

\subsubsection{Friedmann equations}
It is common practice in teaching relativity to either derive the Friedmann equations from Newtonian gravity as a heuristic proof, or alternatively to take the Friedmann equations and reverse-engineer a Newtonian interpretation. The latter starts with the general-relativistic Friedmann equation,
\eqn{\dot a^2-\frac{8\pi\rho}{3}a^2=-k.}
These terms are suggestive of Newtonian kinetic energy, potential energy, and conserved total energy respectively \citep[\S18.4, \S18.7]{hartle2003} \citep[p706--708]{misner+1973}. Misner also et al recommend Bondi's 1961 book.

The former approach traditionally takes a finite sphere of matter --- a subset of the infinite universe --- and applies Newton's laws. This yields the same result, but with $\frac{-2E}{m}$ in place of $k$ \cite{tipler1996a}. The other Friedmann equation may be derived from the conservation equation.
\begin{quote}
Not only are they formally identical, the symbols in the two equations refer to exactly the same quantities. \cite[\S IV]{tipler1996b}
\end{quote}
%\eqn{\frac{1}{R^2}\diffsquared{R}{t}=\frac{8\pi G\rho}{3}-\frac{k}{R^2}.}
%\eqn{\frac{-kc^2}{R^2}=H^2-\frac{8\pi G}{3}\rho}

Thus the rate of expansion is identical to a model where the fabric of space by definition or convention cannot expand. (As another disclaimer, other examples suggest the opposite.)

Historically, this Newtonian parallel was first derived by \cite{milne1934} and \cite{mccreamilne1934}, following a brief offhand mention by Lema\^itre. Friedmann's results for the FLRW model were from 1922. Newtonian cosmology in its static form began at least a few centuries earlier, with Isaac Newton. After formulating his law of gravity, Newton was challenged that the material would all collapse inwards under the attraction, so Newton posited an infinite matter extent with forces balanced in all directions. However this model contained a logical flaw for centuries.

\subsubsection{History and the longstanding flaw}
Simply, the logical inconsistency is that the force integral on any given particle in the Newtonian universe
\eqn{\vec F=\iiint \vec F' dx dy dz}
is divergent. It is tempting to argue from symmetry that the forces will cancel and leave zero net force, and many physicists have seemed content with this approach, including Newton himself. However one may just as ``logically'' derive a net force in any given direction and of any given magnitude, as shown qualitatively by \citet{norton2002}. Quantitatively, suppose that for a particle at $\vec r_1$ a given net force per mass $\frac{\vec F}{m}$ or acceleration $\vec a$ is sought. Then the ball of matter centred at $\vec r_2=\vec r_1+\frac{3}{4\pi G\rho}\vec a$ with radius $\abs{\vec r_2-\vec r_1}$ would exert the desired force, were there no other matter. But then one can use Newton's \emph{shell theorem} to ``remove'' all remaining matter by considering concentric shells outside this ball. Every shell contains $\vec r_1$, hence exerts no force on it. Thus, the total force is $\frac{\vec F}{m}$ as required. It is also possible to obtain an infinite net force in a given direction, by considering opposing hemispherical shells centred on the given point. If the shells have constant thickness, then each contributes an identical force, since their area and volume increase as $r^2$ for distance $r$ from the point, but the gravitational attraction decreases as $\frac{1}{r^2}$. Then match consecutive shells on one side with every second shell on the other. Under this summation order, every pair of shells cancels, but remaining is an infinite number of shells --- and hence infinite force --- on one side.

There is no flaw with the shell theorem. The problem is the integral is \emph{divergent}. The situation is comparable to \emph{Grandi's series} $1-1+1-1+1-1+\cdots$, which also does not converge. Invoking the shell theorem for the Newtonian universe is akin to invoking the result $1-1=0$ for Grandi's series to claim it sums to zero. There is no problem with the result $1-1=0$, rather the sum is divergent. Mainstream mathematics today simply asserts such series or integrals are divergent and have no sum. For philosophical and logical reasons it is better to conclude there is no result, rather than saying there are many results. Logical inconsistencies lead to \emph{ex contradictione quodlibet} or ``from contradiction, anything follows'' \citep{vickers2008}. Hence if one asserts that the net force is both zero and nonzero for instance, then any statement at all would follow logically from such a contradiction --- for instance that planet orbits trace the outline of the Australian coast!
\begin{quote}
What Norton presents as an argument for inconsistency is better understood as just a vivid demonstration of non-convergence... Rather than asserting that Newtonian theory makes inconsistent determinations of gravitational force... Norton should have asserted that it makes no determination \emph{at all}. \citep{malament1995}
\end{quote}

Historically, Hugo von Seeliger popularised awareness of the inconsistency in a rigorous 1895 article and others, demonstrating the problem of divergent forces. (A related issue involves removing a spherical cavity from an infinite mass.) \cite[p348]{rindler2006} But many physicists have not perceived a problem, including Newton a full two centuries earlier, who responded in 1692 to questions with a symmetry argument. In contrast, Einstein overemphasised the issue in motivating his new theory of general relativity \citep{norton1999}. With the advent of Milne and McCrea's dynamical version, new criticism arose from \citet{layzer1954} and others; see references 4 to 11 in \citet{tipler1996b}.

\subsubsection{The resolution}
The resolution to the centuries-long logical inconsistency is to recast Newtonian gravity in a different form --- either as Poisson's law, or on a \manifold{4} as explained shortly. This strengthens the comparison with the FLRW model. But this procedure does come with a philosophical consequence: acceleration becomes relative.

Poisson's equation is:
\eqn{\nabla^2\phi=4\pi G\rho,}
then the force per unit mass on a particle is $F=-\nabla\phi$, where $\phi$ is the potential. After applying a suitable additional constraint such as isotropy, a somewhat-unique ``canonical solution'' results. For any given ``centre'' point $\vec r_0$, we get $\phi(\vec r)=\frac{2}{3}\pi G\rho\abs{\vec r-\vec r_0}^2$ plus a constant. This is homogeneous in the sense that there is no observational significance of the choice of centre, which is a sort of ``gauge freedom''.

This leads to a ``relativity of acceleration'', which is new within Newtonian theory \citep{norton1995}. The acceleration between any two masses in free-fall is independent of the centre point:
\eqn{\vec a = -\frac{4}{3}\pi G\rho(\vec r_2-\vec r_1).}

The third formulation of Newtonian gravity is on a \manifold{}, called Newton-Cartan gravity and originally developed by \'Elie Cartan in 1922 for torsion, although that debate is not relevant here \citep[\S I-D, p395]{hehl+1976}. This theory is equivalent to Poisson's formulation, at least in the sense of giving identical results.

Historically, Newtonian cosmology was first put on solid logical footing by \citet{heckmannschucking1955}, according to \citet[p223]{rindler1977}, although both the problem and its resolution appear to have been rediscovered multiple times. Norton triggered another round of discussion with a paper originally published in 1992 \cite{norton2002} outlining the logical flaw. \citet{malament1995} replied with a thorough analysis of the resolutions, investigating the gravitational models. Independently of this discussion, \citet{tipler1996a,tipler1996b} provided a resolution with a cosmological focus.

One may ask whether it is justified to extend Newtonian gravity in this way, and after all the theory was not handed down in one lump-sum historical package \cite{vickers2008}. But for Newtonian gravity, there is complete equivalence between the traditional force law integral formulation and the differential formulations on their shared domain of applicability. Further, the extension is ``relatively simple and straight forward'' \cite[\S5]{malament1995}.

It is surprising that both the dynamic model and the resolution to the inconsistency were not discovered for centuries \citep[p199]{rindler2006}! Possible reasons for this, which apply at various times in history, include the assumption of a static universe, a lack of emphasis on cosmology, and a lack of mathematical tools such as rigorous description of convergence \citep[\S4]{vickers2008}.

\subsubsection{Conclusions}
Newtonian cosmology is indeed a logically self-consistent model. The equations describing the expansion are identical to the FLRW Friedmann equations. This expansion is described as motion through space in the Newtonian case.

There are ways in which Newtonian cosmology is actually more general --- the constant options are not just fixed to $-1$ \citep[p708]{misner+1973}, it can be done on any geometry \citep[\S3]{tipler1996a}, and it allows torsion. However general relativity is clearly the superior model in terms of observational evidence, as seen in local inhomogeneous patches such as Schwarzschild spacetime with its light deflection and delay. The value of Newtonian cosmology in this case is as a comparison model --- specifically, to test assumptions and the interpretation of expanding space.

\subsection{Conclusions to flat spacetime parallels}
The FLRW model has stronger parallels with the flat spacetime models of the Milne empty universe and Newtonian gravity, than one might expect. These models explain the expansion by motion, and redshift by the Doppler effect.

\section{Doppler and gravitational redshift interpretations --- introduction}
It is common to list three types of redshifts: Doppler, gravitational, and cosmological. More specific types may be categorised under these, for instance the Sunyaev-Zel'dovich effect is a blueshift from cosmic microwave background (CMB) photons colliding with energetic electrons. Hence it can be classified as a Doppler shift. Cosmological redshift is due to the expansion of universe \citep[\S29.2]{misner+1973}, but the stretching of space does not represent any new physical law. Observationally, no difference is detectable between the various interpretations, meaning there is ``cause'' imprinted on the photon.

A commonplace assertion is that the cosmological redshift is \emph{not} a Doppler shift, for instance the redshift does not match the special-relativistic Doppler shift formula using the Hubble law velocity $v=HD$. As another instance, \citet[p386]{defeliceclarke1990} state the local approximation of redshift as a Doppler effect with $v=cz$ is ``sometimes convenient, although rather fictitious''.

However various authors argue cosmological redshift can be interpreted as a series of infinitesimal Doppler shifts \citep[\S6.2a]{padmanabhan1996} \citep{rindler2006} \citep[\S2]{bunnhogg2009} \citep[A-1.5]{davis2004}. Consider an object an infinitesimal coordinate $d\chi$ from an observer, both moving with the Hubble flow. The proper distance between them is $dD=Rd\chi$, and relative velocity $dv=HdD$. A photon will take cosmic time $dt=\frac{dD}{c}$ to traverse the distance. Rearranging, $dv=\frac{\dot R}{R}\frac{dt}{c}=c\frac{dR}{R}$. Since an LIF exists, use the special-relativistic Doppler formula, $dz=\frac{dv}{c}$. Then $dz=\frac{d\lambda}{\lambda}=\frac{dR}{R}$. Finally, upon integrating, the wavelength is $\lambda\propto a$. So the cosmological redshift is entirely made up of local Doppler shifts \citep{davis2004}. \citet[\S3]{bunnhogg2009} go further, to argue this is not just local infinitesimal Doppler shifts, but an overall Doppler shift also, and that this interpretation is the most natural.

An approach with the same conclusion is to parallel-transport the \velocity{} of the emitting galaxy to the observer's location. On a curved manifold there is no absolute comparison of vectors. But if done over a hypersurface of constant cosmic time, the resulting formula fits the special-relativistic Doppler formula! Though a purist might argue this cannot be done unambiguously, this is a ``natural'' choice \citep[\S3]{bunnhogg2009} \citep[\S2]{chodorowski2011}. I suggest another natural choice would be to parallel propagate along the null worldline of the photon's path. Then the parallel transport of the galaxy \velocity{} $\fvec u$ satisfies the following differential equations. Parametrise the path by cosmic time $t$. Then
\begin{alignat}{3}
\diff{u^0}{t} &= -\Gamma^0_{11}\diff{x^1}{t}u^1 &&= -R\dot R\dot\chi u^1\\
\diff{u^1}{t} &= -\Gamma^1_{01}\diff{x^0}{t}u^1-\Gamma^1_{10}\diff{x^1}{t}u^0 &&= -\frac{\dot R}{R}(u^1-\dot\chi u^0).
\end{alignat}
This is more complicated than transporting over a surface of constant $t$, because of an extra term. Other possible equations for the system are the Friedmann equations, and normalisation $u^\mu u_\mu=-1$. I have not analysed these further.

Another example is the Pound-Rebka experiment, which determined the gravitational redshift of photons travelling outwards in Earth's gravitational field. It demonstrated not only the existence of gravitational redshift, but also the equivalence between Doppler and gravitational shifts. However from another perspective, that of a freely falling frame initially stationary and released when the photon is emitted, there is no observed redshift \citep[p115]{schutz2009}! This demonstrates a certain flexibility of interpretation. Still, the Schwarzschild observer frames are the most natural, as discussed in section~\ref{sec:naturalcanonical}, and so the standard interpretation of a purely gravitational redshift interpretation is the most natural.

However the approach investigated for the rest of this document is to push the interpretation of redshifts using the equivalence principle. This idea is from \citet{bunnhogg2009}, whose paper suggests intriguing creative possibilities but provides almost no detail. Their idea is to set up a chain of observers along the spatial path of a photon in some curved spacetime. If each observer is in free fall at the instant the photon passes, then they would naturally interpret any redshift as purely Doppler in origin, because the lack of acceleration feels equivalent to a lack of gravitational field. On the other hand, if each observer remains at constant distance from their neighbour at the instant the photon passes, they would naturally interpret any redshift as purely gravitational in origin, because the zero relative velocity would seem to imply no Doppler shift.
\begin{quote}
The two interpretations arise from different choices of coordinates, or equivalently from imagining different
families of observers along the photons path. \citep{bunnhogg2009}
\end{quote}

The start and end of the chain should coincide with the emitter and receiver of the photon, matching place, time, and velocity \citep{bunnhogg2009}. I relax this requirement slightly by adjusting the endpoints so the end observer velocities are in the same spatial direction as the photon, but still measure the same redshift, as in figure~\ref{fig:endpointadjust}.

\begin{SCfigure}
\centering
  \includegraphics[width=0.3\textwidth]{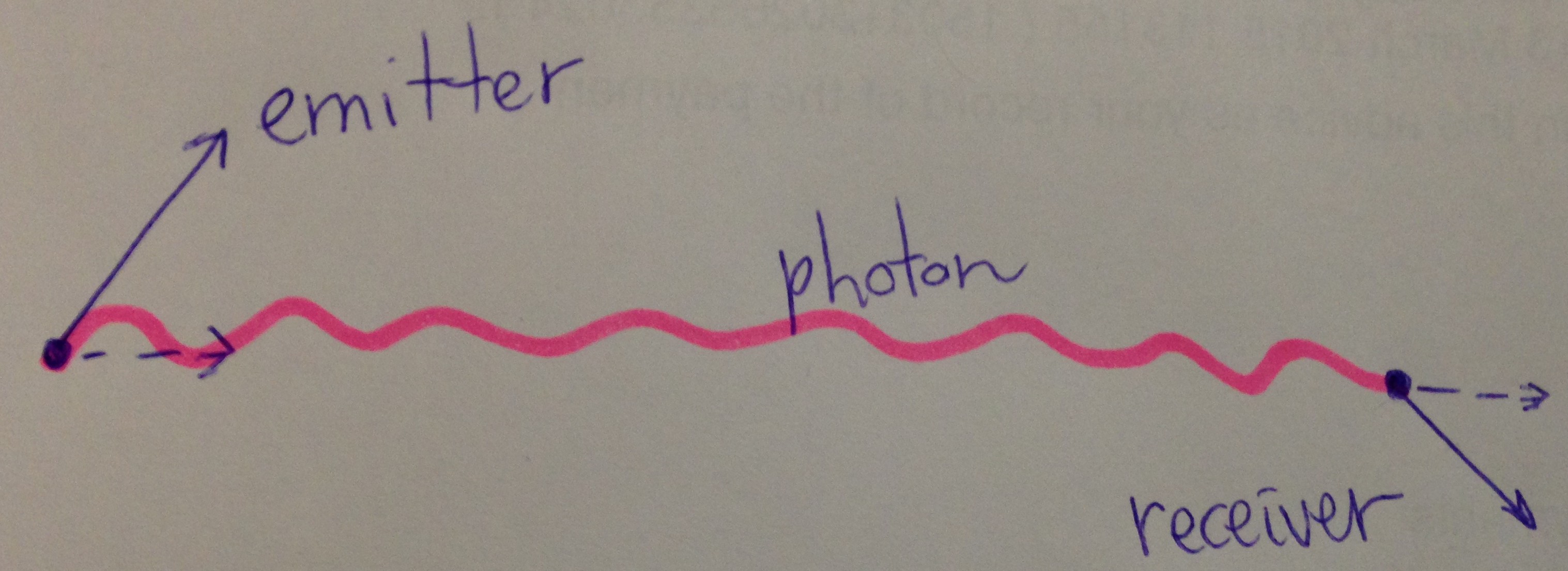}
  \caption{Adjusting the endpoints. An emitter (solid arrow, left) sends out a photon (pink wave) which is later intercepted or measured by a receiver (solid arrow, right). To simplify the situation, find new endpoint observers (dotted arrows) lined up with the photon, moving at the appropriate speed such that they detect the same redshift as the original emitter and receiver. This simplifies the task of determining rigid motions, while not ``cheating'' in any way.}
\label{fig:endpointadjust}
\end{SCfigure}

\subsection{Doppler interpretation}

\begin{quote}
To construct the Doppler family of observers, we require all observers to be in free fall. In this case, within each local inertial frame, there are no gravitational effects, and hence the infinitesimal frequency shift from each observer to the next is a Doppler shift. \citep[\S4]{bunnhogg2009}
\end{quote}

This is a straightforward process for typically used coordinate systems, and thus I move on to tackle the difficult problem of gravitational interpretation.

\subsection{Gravitational redshift interpretation}

\begin{quote}
To construct the gravitational family of observers, we require that each member be at rest relative to her neighbor at the moment the photon passes by, so that there are no Doppler shifts. Initially, it might seem impossible in general to satisfy this condition simultaneously with the condition that the first and last observers be at rest relative to the emitter and absorber, but it is always possible to do so. One way to see that it is possible is to draw a small world tube around the photon path as [earlier]. \citep[\S4]{bunnhogg2009}
\end{quote}
Earlier, they state the spacetime near a geodesic is approximately Minkowski spacetime. They propose constructing Rindler coordinates within this tube. 
\begin{quote}
Because they are not in free fall, the members of the gravitational family all feel like they are in local gravitational fields. Because each has zero velocity relative to her neighbor when the photon goes by, each observer interprets the shift in the photons frequency relative to her neighbor as a gravitational shift.\cite[\S4]{bunnhogg2009}
\end{quote}
They admit this setup is highly contrived, but their point is that redshift interpretation is flexible.

\citet[\S12.4]{rindler2006} also draws a comparison between accelerated coordinates and gravity. By the equivalence principle, the rigid accelerated rod is like a skyscraper in a gravitational field, as both experience acceleration. One limitation is the ``gravitational field'' due to acceleration cannot be constant. The metric can be expressed in a form roughly resembling the Schwarzschild metric.

Finding a gravitational family is closely related to finding rigid body motion in curved spacetimes. But this is not at all straightforward. I first develop a comprehensive foundation for rigidity, which requires constant proper-frame distances, so first step is a thorough critique of distance in relativity.

This construction require only a \dimensional{1} object, which can be visualised as a trail of observers in independently-accelerating rocket ships, or a a coordinate system, using various terminology including ``rod'', ``rope'', ``line'', or ``chain''. A \dimensional{1} rigid object will exist in most scenarios, as for the \dimensional{2} boundaries implemented by Mann and coauthors \citep{epp+2009}, however \dimensional{3} rigid kinematics are trypically far more limited for a given object and spacetime.

\section{Distances and rulers in relativity --- a critique}
\label{sec:distancecritique}
How does one measure distance in general relativity? Specifically, the interpretation of redshifts as gravitational requires the notion of \emph{constant distance}, as measured in the proper frame of observers. This seemingly simple requirement was a difficult part of the research. The technical machinery required is mostly well known to experts --- for instance tetrad frames and the spatial projector, although not the proper metric. But I know of no comprehensive and accessible overview, and thus work towards providing one. (I do ignore quantum considerations, and measurement limits experienced by accelerated observers \citep{mashhoonmuench2002}). This section is a general discussion, followed by section~\ref{sec:lengthcontractionGR} which describes length-contraction in curved spacetimes. On the latter, I did not find an accessible account in any textbook or other source. This section critically overviews various distance measures, and follows on from the introduction in section~\ref{sec:distanceintro}.

\subsection{Distance depends on the time slice}
When consulting a textbook on distance in the Schwarzschild and FLRW spacetimes, one finds overwhelming emphasis on a specific radial distance --- the ``proper distance'' --- and rightly so, as I affirm in section~\ref{sec:naturalcanonical}. But the reader could be forgiven for interpreting this as the \emph{only} radial distance, as having some invariant or absolute quality measured by all observers. For instance one undergraduate general relativity course termed $dR$ the ``physical spatial distance'' and ``a proper length''. \citet[p230]{rindler2006} calls it ``radial ruler distance'', and another book describes it as ``physically measurable distances'', ``physical distance'', and ``actual radial distance'' \cite[p106--107]{moore2012}. But from special relativity, distances are relative and length-contraction occurs between relatively moving frames. Intuitively, one might expect that for an approximately Minkowski spacetime, for instance Schwarzschild spacetime at large $r$, that the familiar special-relativistic properties would approximately hold. So why not in all curved spacetimes?

It is also well known that distance measurements depend on the simultaneity convention chosen, which is defined by the choice of time slice. (This was seen earlier for the proper distance --- in the general definition --- which depends on the traversal through time as well as space). The radial ``proper distance'' in Schwarzschild and FLRW spacetimes corresponds to a slicing by $t$ (respectively far-away time and cosmic time). But why single out these particular coordinates?

For instance Schwarzschild spacetime has a particularly rich variation of (commonly used) coordinates. In the usual choice for distance, Schwarzschild coordinates, setting $dt=d\theta=d\phi=0$ yields radial distance by stationary observers:
\eqn{ds_{\rm radial,stationary}=\Schw^{-1/2}dr,}
as mentioned. Another choice is Gullstrand-Painlev\'e coordinates, which are adapted to raindrops:
\eqn{ds^2=-\Schw dt_r^2+2\Schwroot dt_rdr+dr^2+r^2d\Omega^2.}
Setting the coordinate time interval $dt_r=0$, we have
\eqn{ds_{\rm radial,raindrop}=dr}
This is ironic since sources go to pains to emphasise --- and rightly so --- that the Schwarzschild \coordinate{r} is \emph{not} the radial distance, and yet raindrops are the exception! (Indeed, one paper interprets distance based on the proper time of falling clocks, achieving the same result. This has the added bonus of remaining valid for $r\le 2M$ \citep{gautreauhoffmann1978}.)

Another choice is the (ingoing) Eddington-Finkelstein coordinates, which are adapted to radial photons. Here
\eqn{ds^2=-\Schw dv^2+2dvdr+r^2d\Omega^2.}
Setting the timelike coordinate $dv=0$, we obtain
\eqn{ds_{\rm radial,photon}=0}
I generalise this procedure in section~\ref{sec:lengthcontractionGR}. This quantity is clearly invariant, in the sense that it is defined from the metric for a given path --- and the metric and path can be transformed into other coordinates. But it is not ``invariant'' in the sense of the method of construction --- specifically, simply setting $dt=0$ for the time coordinate in every coordinate system yields only a coordinate-dependent quantity. Thus we should naturally ask why the Schwarzschild coordinates in particular are chosen.

In particular, this is only the ruler distance measured by \emph{Schwarzschild observers} and \emph{Hubble flow comovers} respectively, as seen later. Other motions perceive different distances. In the FLRW case, some authors expressly state the comoving requirement, for instance \citet[\S29.3]{misner+1973}, \citet[p218]{rindler1977}, \citet[p35]{davis2004}, and \citet[\S4]{schmidt1996}. It may be implicit in \citet[p415]{weinberg1972} who describes ``typical galaxies'' or \citet[p367]{rindler2006} who mentions ''galaxies'', as ``galaxy'' is often a rubric for a Hubble comoving object. But lacking is a clear statement of the converse: other radial motions do not measure this distance. This is hinted at in the common statement that non-comoving observers do not perceive an isotropic universe, but it is not clear. One might argue that in the FLRW case, motions are generally close to comoving, so this is the primary frame and distance measure. Though true, there are exceptions, but either way the conceptual understanding is important.

\subsection{Rulers vs photons}
The early relativists used rulers as conceptual measuring devices. Later, they came to prefer light signals, for reasons one textbook articulates:
\begin{quote}
Since meter sticks do not really make sense in general relativity and are not used in actual astronomical measurements, one regards a radar set as the basic distance measuring device. ... [I]n general there is no rigid reference frame defined in any neighborhood of a spacetime point... [This i]s one reason why a ``rigid meter stick'' is not a very useful concept in general relativity. \cite[p135, 59]{sachswu1977}
\end{quote}
Einstein later critiqued his early ``sin'' of rulers and clocks as foundational:
\begin{quote}
But one must not legalize the mentioned sin so far as to imagine that intervals are
physical entities of a special type, intrinsically different from other physical variables
(``reducing physics to geometry'', etc.). Einstein, as cited in \citet[\S3.1, \S4]{brown2018}
\end{quote}
Pauli emphasised ruler measurement is only a secondary effect because its component particles are Lorentz covariant \citep[\S3.2]{brown2018}.

One challenge is the impossibility of an intrinsically Born-rigid ruler, however a resilient ruler which restores its shape after moderate deformation is quite plausible \citep[\S2.5]{rindler1977}. Another objection is the unknown atomic behaviour of a ruler or other object at the microscopic level, where both quantum and relativistic effects are significant. (See the classic article by John Bell showing length-contraction of a slowly accelerating ``classical'' atom, made up of an electron orbiting a nucleus under electric attraction \citep{bell1976}.) Another possible objection is that it is only everyday experience which seeks a $3+1$ splitting of space and time \citep[\S1]{bini2014}.

Since there are doubts about rulers, the ``ruler hypothesis'' is a nontrivial assumption. It states that rulers do in fact achieve their purpose, in other words that physical rigid objects do actually measure distance \citep{lyle2010}. As seen above, this is not a trivial assumption. I concur that light is an especially good measuring tool, but nonetheless I attempt to show rulers are a useful measure after all. In particular, the concept of length for extended objects is essential:
\begin{quote}
without the existence of \emph{some} rigid standard of length the constancy of the speed of light would become a mere convention. \citep[\S2.5]{rindler1977}
\end{quote}

Some question if rigidity is even important, or if a consistent notion even exists. But then what would people mean when they say the Sun has a specific diameter $\approx 1.4\times 10^6$km? One might argue this context is close to flat spacetime. But the case of neutron stars, said to have a diameter $\approx 12km$, is highly relativistic. What of ``bar detectors'' for gravitational waves, which work (hypothetically, since gravitational waves have not yet been detected) precisely because of the discrepancy between their reasonably rigid proper length and the fluctuating spacetime caused by a passing wave? These have been replaced by ``laser interferometers'' for engineering reasons, not theoretical distance issues \citep[p209--210,220--227]{schutz2009}. Similarly the SI definition of the metre is no longer a rigid bar in Paris, but based on light travel time, but again this change is for pragmatic and not distance-theoretic reasons.

\subsection{Hypersurface orthogonal --- a canonical slicing for distances}
\label{sec:orthogonalhypersurface}

\begin{wrapfigure}{r}{0.5\textwidth}
  %\centering
    \includegraphics[width=0.48\textwidth]{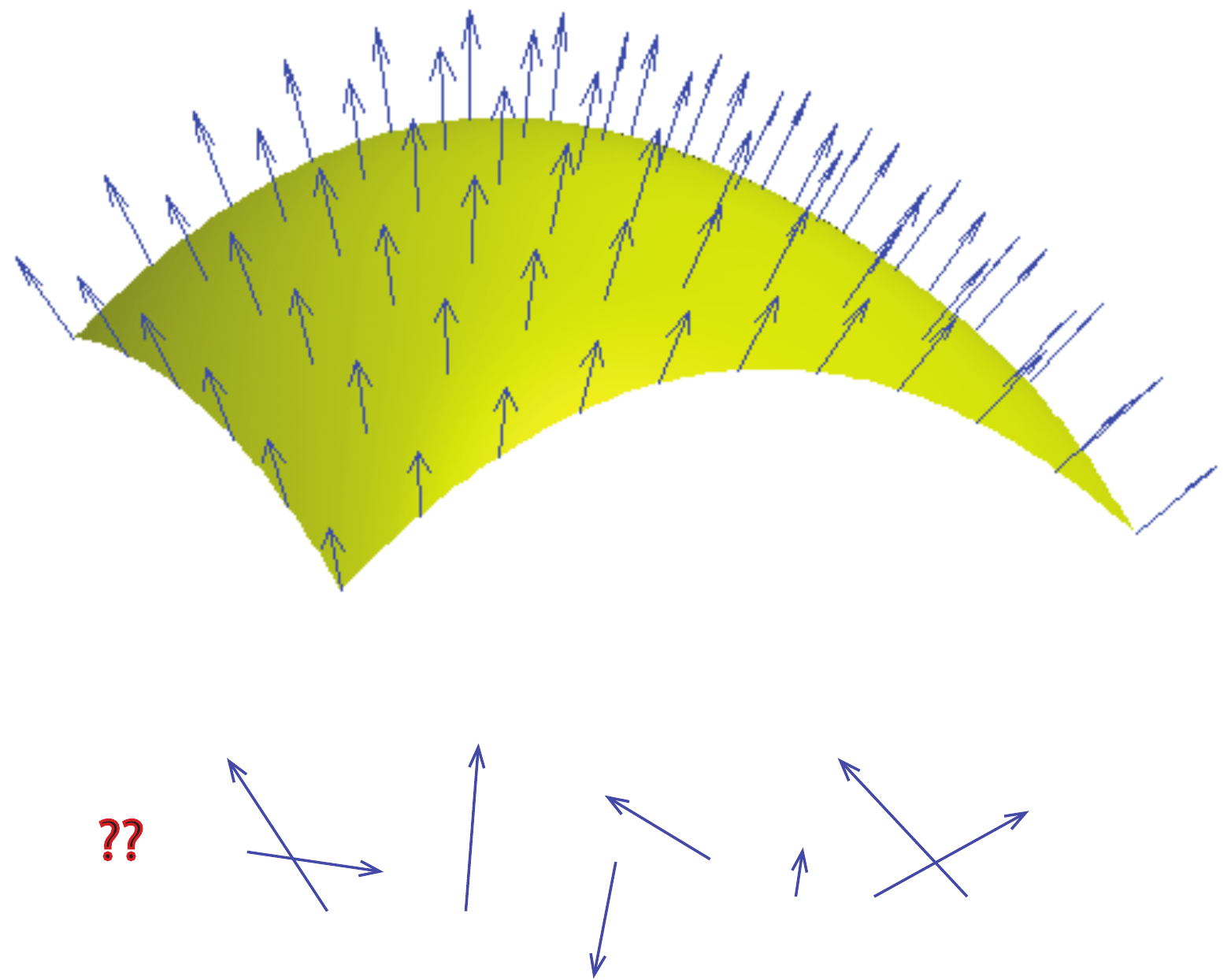}
  \caption{Top: A timelike vector field admitting a surface everywhere orthogonal to the field, and thus a natural measure of spatial distance. Bottom: A messy vector field which seems not to admit an orthogonal surface.}
\end{wrapfigure}

Distance is relative. But in the spacetimes considered in practice there is often a canonical choice of time slicing. This is the set of hypersurfaces orthogonal at every point to the worldline of particles, if such surfaces exist. Such \emph{orthogonal hypersurfaces} yield a natural distance measure \citep[\S1]{bini2014}. As the following authors explain:

\begin{quote}
...[G]enerally speaking, there is no distinguished family of sections (hypersurfaces) across the bundle of flow lines that would represent `the body in space', i.e. mutually simultaneous locations of the body's points. Distinguished cases are those exceptional ones in which $\fvec u$ is hypersurface orthogonal. Then the intersection of $\fvec u$'s flow lines with the orthogonal hypersurfaces consist of mutually \emph{Einstein synchronous} locations of the points of the body. \citep[p37]{guilini2010}
\end{quote}

\begin{quote}
If there is matter with a uniquely defined hypersurface-orthogonal time-like velocity vector... then the physical definition (spatial distance between events... measured by the length of the connecting geodesic segment... most closely reflects the idealized measuring method by use of rods being at rest with respect to the existing matter. \citep[\S5]{schmidt1996}
\end{quote}

From Frobenius' theorem, $\fvec u$ is hypersurface orthogonal if and only if \citep[\S B.3]{wald1984} \citep[p462]{carroll2004}
\eqn{u_{[\alpha}\nabla_\beta u_{\gamma]}=0.}
This holds if and only if \citep[p217]{wald1984}
\eqn{\omega_{\alpha\beta}\equiv\nabla_{[\beta}u_{\alpha]}=0,} 
where $\omega$ is the \emph{rotation}, an antisymmetric tensor. In the specific context of $\mathbb R^3$ for comparison, a fluid dynamics result is that orthogonal surfaces exist if the \emph{helicity} of the flow vanishes:
\eqn{\vec u\cdot\nabla\times\vec u=0,}
which is clearly implied by the vanishing of \emph{vorticity} $\nabla\times\vec u=0$, which holds for a conservative vector field.

For example, Rindler observers are vorticity/rotation free, and hence an orthogonal hypersurface exists. These are the surfaces of constant $T$, which appear as radial lines from the origin in the Rindler chart. See figure~\ref{fig:Rindlerorthogonal}.

\subsection{Natural and/or canonical --- a defense of tradition}
\label{sec:naturalcanonical}

\begin{quote}
``Everything should be made as simple as possible, but not simpler.'' --- Einstein, apocryphal
\end{quote}

Earlier I critiqued the ``proper distance'' as not the only radial distance measures in Schwarzschild and FLRW spacetimes. In this section, I defend it as the best choice of distance measure because the relevant coordinates satisfy numerous canonical/natural and other nice properties, as introduced in section~\ref{sec:naturalintro}.

For the Schwarzschild spacetime,
\begin{quote}
There is just one natural slicing and it turns out to give us spaces whose geometric properties remain constant with time... That is, natural in the sense that the slicing satisfies the technical condition of orthogonality with the world lines of the [central mass] and the field's natural rest states. \citep[\S2]{norton2014}
\end{quote}
Stationary observers measure a constant metric over time, which is related to the timelike Killing vector. Then the orthogonal hypersurfaces to these provide a natural distance measure. The coordinate time corresponds to the proper time for an observer ``at infinity''. Also, note as a curiosity that in the Schwarzschild coordinates many of the traditional Newtonian gravity results also apply, such as Kepler's law.

For the FLRW spacetime, the Hubble flow comovers stand out as a ``privileged Lorentz frame in which the universe appears isotropic'' \citep[\S14.1]{weinberg1972}. The comovers are free of shear and vorticity. Orthogonal to these worldlines are the homogeneous and isotropic spacelike slices, a ``natural choice'' \citep[p67--70]{griffithspodolsky2009} \citep[p780]{misner+1973}. These space slices are ``three-dimensional maximally symmetric subspaces $t=\rm{const}$'', corresponding to 6 Killing vectors which is the maximum possible in $3$ dimensions \citep[p374--375]{defeliceclarke1990} \citep[\S14.1]{weinberg1972}. The coordinates are comoving and synchronous for these observers, and $t$ and $\chi$ are Gaussian normal coordinates \cite[\S14.2]{weinberg1972}.

\citet[p69]{griffithspodolsky2009} describe:
\begin{quote}
[There are] privileged observers for whom the space appears to be isotropic... However, [for others] such an anisotropy would be considered to be due to the motion of the observers in an isotropic background and this could be evaluated.
\end{quote}
In other words, from any given frame, the Hubble frame is singled out. This is a good description of ``natural''.

To the lay person's dictum ``Everything is relative'', it could be added ``...but a few things are invariant, and a few other things are natural or canonical.''

\subsection{Conclusions}
The Schwarzschild and FLRW ``proper distance'' is only the ``radial ruler distance'' for observers with motion $\diff{r}{t}=0$ or $\diff{\chi}{t}=0$ in the respective spacetimes. The claim it is ``actual radial distance'' sounds dubious in a relativistic setting; I do not say this claim is wrong, because this is indeed the most natural choice, however it is misleading without a supplementary statement that this is not the proper-frame distance for most observers. Generally speaking, contra \citet[p171]{schutz2009}, it seems rods and clocks do not measure the metric.

In summary,
\begin{itemize}
\item The Schwarzschild and FLRW ``proper distance'' is the proper-frame distance measured by comovers
\item This is \emph{not} the proper-frame distance measured by other motions, in general
\item There are geometric reasons for preferring or singling out the above coordinate systems.
\end{itemize}

\section{Length-contraction in general relativity}
\label{sec:lengthcontractionGR}
The interpretation of redshifts as gravitational requires the measurement of constant proper-frame distance, which in turn requires analysis of length-contraction in curved spacetimes. I demonstrate how to Lorentz boost in the Schwarzschild and FLRW geometries, drawing on and generalising a very helpful example from \citet{taylorwheeler2000}. This leads to length-contraction by a factor $\gamma$ in local inertial coordinates. The spatial projector provides a complementary result, yielding an increase by a factor $\gamma$. This is simply the familiar mutually-perceived length-contraction from special relativity, interpreted in a different light, as in figure~\ref{fig:lengthcontraction}.

This section is original material, or at least done independently. Though I assume others would have done these calculations, I am not aware of any sources.

\subsection{Lorentz boost --- Schwarzschild spacetime}
This section follows the construction of \citet{taylorwheeler2000} who effectively perform a Lorentz boost from Schwarzschild coordinates to the Gullstrand-Painlev\'e raindrop coordinates. I generalise this procedure to arbitrary radial motions, add technical justifications based on local inertial frames, and later extend to FLRW spacetime.

The usual Schwarzschild coordinates, which are adapted to stationary observers, do not give rise to local inertial frames because the metric does not have the form $\diag(-1,1,1,1)$. But a coordinate transformation, a rescaling, to ``shell coordinates'' does yield this form: \citep[p2-22, 2-23]{taylorwheeler2000}
\eqn{dt_{\rm shell}\equiv\Schw^{1/2}dt, \qquad\qquad dr_{\rm shell}\equiv\Schw^{-1/2}dr.}
These describe the proper time and radial distance measured by a given Schwarzschild observer. The name refers to these stationary observers, as if they were standing on a spherical shell at some fixed $r$ such as the surface of the Earth. Note $dr_{\rm shell}$ is equivalent to the $dR$ notation used elsewhere. Under this transformation the metric becomes \citep[p2-33]{taylorwheeler2000}:
\eqn{ds^2=-dt_{\rm shell}^2+dr_{\rm shell}^2+r^2d\Omega^2.}
Locally, this is simply flat space! (Perhaps technically the angular coordinates should also be rescaled for the calculation, but as there is no motion in these directions they are unaffected by the Lorentz boost and length-contraction.) Of course a Schwarzschild observer is not freely falling, however over short-enough timescales gravity is negligible and the frame is inertial, as assumed for nuclear physics experiments on Earth for example. This is achieved by working with differentials like ``infinitesimal'' time $dt$ \citep[p2-33, 2-34]{taylorwheeler2000}. After rescaling $d\theta$ and $d\phi$ also, the metric has form $\diag(-1,1,\ldots)$ for all $r>2M$. This holds not just at a point but for all $r>2M$, hence the first partial derivatives of the metric vanish. These are not only local inertial frames, but have the especially convenient property of a \emph{coordinate basis} giving rise to them \citep[p485--486]{carroll2004}.

It is valid to perform local Lorentz boosts in these local inertial coordinates. Suppose the boost in the radial direction is by speed $V(t,r)$ and Lorentz factor $\gamma(V)$, where $V$ has the same sign as $\Delta r$ --- positive for outward motion, and negative for inward motion. Then the new boosted time coordinate $T$ follows from equation~\ref{eqn:Lorentzboosttime}, as demonstrated in \citet[pB-13]{taylorwheeler2000}:
\begin{align}
dT &= -V\gamma dr_{\rm shell}+\gamma dt_{\rm shell} \\
   &= -V\gamma\Schw^{-1/2}dr+\gamma\Schw^{1/2}dt,
\end{align}
switching back to Schwarzschild coordinates after the boost. Solving for $dt$,
\eqn{dt=\frac{1}{\gamma}\Schw^{-1/2}dT+V\Schw^{-1}dr.}
\citet[pB-13]{taylorwheeler2000} go on to substitute the raindrop frame values of $V$ and $\gamma$, then plug $dt$ into the Schwarzschild coordinates, thus deriving the Gullstrand-Painlev\'e coordinates . Instead, I preserve generality by keeping $V$ arbitrary, and substitute the above expression into the Schwarzschild coordinate metric:
\begin{align}
ds^2 &= -\Schw\left(\frac{1}{\gamma}\Schw^{-1/2}dT+V\Schw^{-1}dr\right)^2+\Schw^{-1}dr^2+r^2d\Omega^2 \\
     &= -\frac{1}{\gamma^2}dT^2-V^2\Schw^{-1}dr^2-\frac{2V}{\gamma}\Schw^{-1/2}dTdr + \Schw^{-1}dr^2+r^2d\Omega^2 \\
		 &= -\frac{1}{\gamma^2}dT^2-\frac{2V}{\gamma}\Schw^{-1/2}dTdr +\frac{1}{\gamma^2}\Schw^{-1}dr^2+r^2d\Omega^2,
\end{align}
after expanding and collecting terms, then using $1-V^2=\gamma^{-2}$. (As a check, for $V=0$, $\gamma=1$ and so $ds^2=-dT^2+\Schw^{-1}dr^2+r^2d\Omega^2$, which is equivalent to the usual Schwarzschild metric since for $V=0$, $dT^2=dt_{\rm shell}^2=\Schw dt^2$.)

These coordinates $(T,r,\theta,\phi)$ describe the same Schwarzschild geometry, but are adapted to the motion $V$. Thus the hypersurfaces of constant $T$ describe the space part adapted to the motion. Under $dT=d\theta=d\phi=0$ the metric becomes:
\eqn{ds_{\rm radial}=\gamma^{-1}\Schw^{-1/2}dr,}
after taking the square root. In terms of $dR$,
\eqn{ds_{\rm radial}=\gamma^{-1}dR.}
This is precisely the familiar length-contraction formula from special relativity! The justification of expressing this in terms of $dR$ ($=dr_{\rm shell}$) rather than $dr$ is the former is an inertial coordinate.

The length-contraction formula only holds locally, so integration is required to compute a macroscopic quantity. But this is not unfamiliar, as in special relativity an object will not always lie in one global inertial frame, as for the Rindler system in all frames but one (at any given instant).

One must not forget distances are relative, and here the comparison is between a Schwarzschild observer and an observer moving at speed $V$ relative to them, where both occur at the same place and time. The formula states that the ``moving'' (non-stationary) observer measures the Schwarzschild observer to be length-contracted by a factor $\gamma$.

Since $\gamma\ge 1$, the maximum value of $ds_{\rm radial}$ is $dR$ when $V=0$, but this is simply the familiar observation from special relativity that the maximum length of an object --- it's ``proper length'' --- is the length measured in its own frame. This demonstrates that only Schwarzschild observers measure the so-called radial ``proper distance'' $dR$ --- actually any motion with $\diff{r}{t}=0$ --- and other observers do not.

% NO! This is wrong! Rather, it is simply like ``proper length'' - it is greatest when measured in its own frame. That is, the length of an object which is stationary in Schwarzschild coordinates is greatest when measured in the stationary frames.
%Notice since $\gamma\ge 1$, the maximum possible length increment is $dR=\Schw^{-1/2}dr$, which is obtained by Schwarzschild observers. This may be another hint that Schwarzschild coordinates are particularly ``nice'', and is likely a consequence of the natural orthogonality.

%(An alternate approach is the shell observer measures speed $V$, i.e. $\diff{r_{\rm shell}}{t_{\rm shell}}=V$ \cite[p3-17]{taylorwheeler2000}, this is $\frac{\Schw^{-1/2}dr}{\Schw^{1/2}dt}=\Schw^{-1}\diff{r}{t}=V$, so $\diff{r}{t}=V\Schw$ (note this is consistent with the \velocity{} equation above, *but see contravariant version, not covariant!*) Then find $(u^t,u^r,0,0)$ from $u^r/u^t=V\Schw$ and the normalisation of $\fvec u$.

%We may wonder if a ``proper distance'' calculation would yield a useful measure. This quantity is clearly defined and well known. But the most straightforward application of the \velocity{} (see equation~\ref{eqn:velflow}) yields a timelike interval, not spacelike. The proper time is $d\tau^2\equiv-ds^2$, but it is simpler to compute as follows: $\diff{\tau}{r}=(u^r)^{-1}$, so
%\eqn{d\tau=V^{-1}\gamma^{-1}\Schw^{-1/2}dr=(V\gamma)^{-1}dR}
%So the obvious approach to proper distance does not yield spatial distance at all!

There are others who have performed a Lorentz boost on tetrads, including \citet{hamiltonlisle2008} who cite Taylor and Wheeler, and \citet[\S6]{bini2014}. But overall, information seems scant.

%(For example, raindrops have $\gamma=\Schw^{-1/2}$ \citet[pB-13]{taylorwheeler2000}, so for them the radial length element is simply $dr$. Incidentally, this is an alternate motivating requirement used to derive raindrop coordinates, because the spatial part of the metric is then simply flat space: $dr^2+r^2d\Omega^2$.)

%MAXIMUM... orthogonal. Note achieves maximum, probably same property. Consider radial motion in Schwarzschild spacetime. Then $ds^2=-\Schw dt^2+\Schw^{-1}dr^2$
%Consider arbitrary radial motion in Schwarzschild spacetime. (so freedom in $t$ and $r$) For this worldline, we may choose to express it in Schwarzschild coordinates. Then we have $d\theta=d\phi=0$, but do not set $dt=0$ as we allow an arbitrary time motion for the worldline. The interval, or ``proper distance'' is:

%Then $ds^2=-\Schw dt^2+\Schw^{-1}dr^2$. Note we do not require normalisation $g(u,u)=-1$ because we do not assume ``motion'' is timelike. (In fact it can't be? We are measuring distance!)
%We ask, does this interval have a maximum value? (Note it is independent of coordinate system). Now one term is positive, the other negative. Thus, since $t$ and $r$ are independent, the maximum value is clearly when the negative term is zero. Thus
%\eqn{ds_{\rm max}=\Schw^{-1/2}dr.}
%This demonstrates another way in which Schwarzschild coordinates are especially ``nice''. (It is likely not independent from other results, like Killing vectors or orthogonality.)

\subsection{Spatial projector --- length-contraction in reverse --- Schwarzschild spacetime}
The spatial projection tensor describes the distance measured by rulers in the local proper frame of an observer. It depends on both the background spacetime metric $g$ and the \velocity{} $\fvec u$ of the motion. This actually yields a \emph{greater} distance than $dR$, which I interpret as due to measuring using length-contracted rulers. This time, the relative motion is considered from the frame of the Schwarzschild observers. These measure the ``moving'' observer to be length-contracted, thus its rulers are also contracted, and so it would measure an increased distance (see figure~\ref{fig:lengthcontraction}). In other words, from the local frame of a stationary observer, the ``moving'' observer would measure a greater length for the stationary observer.

Consider an observer moving radially with speed $V(r,t)$ relative to the local Schwarzschild observer, as before. The calculation requires the \velocity{} of the motion, which is determined in Schwarzschild coordinates as follows. Let $\fvec u_{\rm shell}$ be the \velocity{} of the Schwarzschild observer. The spatial components of $\fvec u_{\rm shell}$ are $0$, hence by the normalisation requirement $g(\fvec u_{\rm shell},\fvec u_{\rm shell})=-1$ it follows $u_{\rm shell}^t=\Schw^{-1/2}$; in summary,
\eqn{u_{\rm shell}^\mu=\left(\Schw^{-1/2},0,0,0\right).}
Label by $\fvec u$ the \velocity{} of the radial motion, which has form
\eqn{u^\mu=(u^t,u^r,0,0),}
where the nonzero components are to be determined. From equation~\ref{eqn:relativespeed}, the Lorentz factor satisfies the inner product:
\eqn{\label{eqn:propergamma}\gamma=-\fvec u \cdot \fvec u_{\rm shell}.}
Only the time component of $\fvec u_{\rm shell}$ is nonzero, and so
\eqn{\gamma = (-1)(-1)\Schw u^t \Schw^{-1/2},}
or
\eqn{u^t=\gamma\Schw^{-1/2}}
after rearranging. Then $u^r$ follows from the normalisation $g(\fvec u,\fvec u)=-1$:
\eqn{-\Schw(u^t)^2+\Schw^{-1}(u^r)^2=-1,}
so
\eqn{\Schw^{-1}(u^r)^2=-1+\Schw(u^t)^2=-1+\Schw\gamma^2\Schw^{-1}=\gamma^2-1.}
But
\eqn{\gamma^2-1=\frac{1}{1-V^2}-1=\frac{1-(1-V^2)}{1-V^2}=\frac{V^2}{1-V^2}=V^2\gamma^2,}
so
\eqn{(u^r)^2=V^2\gamma^2\Schw,}
then
\eqn{u^r=\pm\abs V\gamma\Schw^{1/2}=V\gamma\Schw^{1/2},}
since $u^r$ has the same sign as $V$ and the other terms are always positive. Hence the \velocity{} of the moving observer is
\eqn{\label{eqn:velflow}u^\mu=\left(\gamma\Schw^{-1/2},V\gamma\Schw^{1/2},0,0\right).}

The projection tensor requires the \velocity{} in covariant form. Because the metric is diagonal, $u_\alpha=g_{\alpha\alpha}u^\alpha$. Hence $u_t=g_{tt}u^t=-\Schw\gamma\Schw^{-1/2}=-\gamma\Schw^{1/2}$, and $u_r=g_{rr}u^r=\Schw^{-1}V\gamma\Schw^{1/2}=V\gamma\Schw^{-1/2}$. That is,
\eqn{u_\mu=\left(-\gamma\Schw^{1/2},V\gamma\Schw^{-1/2},0,0\right).}
One term of the projector is $\fvec u\otimes\fvec u$, which is formed by multiplying the entries of $u_\mu$ with themselves:
\eqn{u_\mu u_\nu =
\begin{pmatrix}
\gamma^2\Schw & -V\gamma^2 & 0 & 0 \\
-V\gamma^2 & V^2\gamma^2\Schw^{-1} & 0 & 0 \\
0 & 0 & 0 & 0 \\
0 & 0 & 0 & 0 \\
\end{pmatrix}.}
Thus the projection tensor is
\eqn{P_{\mu\nu}=g_{\mu\nu}+u_\mu u_\nu =
\begin{pmatrix}
V^2\gamma^2\Schw & -V\gamma^2 & 0 & 0 \\
-V\gamma^2 & \gamma^2\Schw^{-1} & 0 & 0 \\
0 & 0 & r^2 & 0 \\
0 & 0 & 0 & r^2\sin^2\theta \\
\end{pmatrix}}
where the equality $\gamma^2-1=V^2\gamma^2$ was used twice.

Note the proper metric $\gamma_{ij} \equiv P_{\mu\nu}\pdiff{x^\mu}{X^i}\pdiff{x^\nu}{X^j}$ gives the same result in this case. The derivative terms can be taken as delta functions $\pdiff{x^\mu}{X^i}=\delta^\mu_i$since there is no need of tracking particles over time and an instantaneous snapshot is all that is required. Another way of seeing this is to define the material manifold by the present time slice, so the material coordinates would simply be the spacetime coordinates. Then $\gamma_{ij}=P_{ij}$, which forms the lower-right $3\times 3$ matrix of $P_{\mu\nu}$.

In the radial direction, the spatial metric gives
\eqn{dL^2=\gamma_{rr}dr^2=P_{rr}dr^2=\gamma^2\Schw^{-1}dr^2,}
so
\eqn{dL=\gamma\Schw^{-1/2}dr=\gamma dR.}

This time the length is \emph{increased} by the factor $\gamma$! Though not the usual way of thinking about length-contraction, this is nonetheless equivalent. That the alternate approaches of Lorentz boosts and the spatial projector give the same result --- in a reciprocal sense --- is an important confirmation of the result. This also affirms the previous result that the radial ``proper distance'' is only measured by Schwarzschild observers and others with $\diff{r}{t}=0$.

One application is the popular conceptual scenario of an astronaut falling feet-first into a black hole. At what $r$ would they be ``spaghettified'', meaning stretched lengthwise and compressed sideways by tidal forces? The astronaut would experience increasing length-contraction relative to Schwarzschild observers as they plummet. From the local Schwarzschild frames, the steadily advancing length-contraction would add to the tension on the astronaut's body as it is required to contract to remain rigid. But the shortened length (in Schwarzschild coordinates) leads to a reduced tidal acceleration between head and feet. The former effect seems paltry compared to the latter, so overall the astronaut would last longer. In the case of sufficiently large black holes where the astronaut would survive past the event horizon, the length-contraction formula derived here is not valid because it is relative to Schwarzschild observers which are impossible for $r\le 2M$, but one could try another coordinate system. Finally, even if the scenario of the falling astronaut is unrealistic, it is conceptually important for understanding general relativity.

\subsection{Lorentz boost --- FLRW spacetime}
I repeat the earlier Lorentz boost construction, to demonstrate length-contraction in FLRW spacetime. Observers ``comoving'' with the Hubble flow are the natural choice (see section~\ref{sec:naturalcanonical}), and these are also the obvious choice of reference frame from which to describe length-contraction. Suppose an observer moves at speed $V(t,\chi)$ with respect to the local Hubble flow. Orient the standard coordinate system $(t,\chi,\theta,\phi)$ so the motion is radial. As before, take the sign of $V$ as the sign of $\Delta\chi$, so for motion receding from the origin $V>0$.

The FLRW metric is
\eqn{ds^2 = -dt^2 + R(t)^2\left(d\chi^2 + S_k(\chi)^2d\Omega^2\right).}
The so-called ``proper distance'' is the spatial distance $R(t)\chi$ from the coordinate origin. Define a new radial coordinate $X$ by the simple rescaling:
\eqn{dX\equiv R(t)d\chi,}
under which the metric becomes
\eqn{ds^2=-dt^2+dX^2+R(t)^2S_k(\chi)^2d\Omega^2.}
Notice this is a locally inertial coordinate system, at least in the $t$ and $X$ coordinates. This is reminiscent of conformal coordinates, in that they also make spacetime appear flat. As before, the angular coordinates $\theta$ and $\phi$ could be rescaled or simply left unchanged since they are orthogonal to the motion. Define a new time coordinate $T$ by the Lorentz boost
\eqn{dT=-V\gamma dX+\gamma dt=-V\gamma R(t)d\chi+\gamma dt.}
Rearranging, $dt=\frac{1}{\gamma}dT+R(t)Vd\chi$. Substituting this into the metric,
\eqn{ds^2=-\frac{1}{\gamma^2}dT^2-\frac{2V}{\gamma}R(t)dTd\chi+(1-V^2)R(t)^2d\chi^2+R(t)^2S_k(\chi)^2d\Omega^2.}
Determining spatial sections by constant $T$, the radial distance is
\eqn{ds_{\rm radial}^2=(1-V^2)R(t)^2d\chi^2=\gamma^{-2}R(t)^2d\chi^2,}
so
\eqn{ds_{\rm radial}=\gamma^{-1}R(t)d\chi.}

As in the Schwarzschild case, this is local length-contraction by a factor of $\gamma$, in this case relative to the usual distance interval $R(t)\chi$. Expressed in words, an observer moving at speed $V$ relative to the local Hubble flow will measure the Hubble comovers to be length-contracted by a factor $\gamma(V)$.

This also demonstrates the (radial) so-called ``proper distance'' $R\chi$ is only that measured by observers moving with the Hubble flow, or more generally those with $\diff{\chi}{t}=0$. All other observers measure a different distance, locally.

%We now pick a specific relative speed to counteract the expansion. The Hubble recession speed is $HD$ where $H$ is the Hubble parameter and $D=R\chi$ is the Hubble proper distance. Thus we choose
%\eqn{V=-HD=-\frac{\dot R}{R}R\chi=-\dot R\chi,}
%which is only valid for 
%$\chi<H^{-1}$ or whatever.

%[We distinguish $V$ and $v$]
%Hubble law $v=HD=\dot R\chi$. The Hubble sphere is where $v=c$. So require $v<c$, or $\chi<\frac{c}{\dot R}=\frac{1}{\dot R}$ since we use $c=1$. 

%This is to maintain constant Hubble flow distance from the coordinate origin.
%Note that the recession velocity is $c$ at the Hubble sphere, so require within Hubble sphere.

%Then $\gamma^{-1}=\sqrt{1-V^2}=\sqrt{1-\dot R^2\chi^2}$. Clearly we require $\abs{\dot R\chi}<1$, i.e. $\chi<\frac{1}{\abs{\dot R}}$. In certain special cases we recover the usual Hubble flow distance --- when $\chi=0$, or when $R(t)$ is constant, since then $V=0$, $\gamma=1$.

There is an important application to our motion through the universe. The Milky Way and other galaxies in the Local Group are travelling at $\approx 600$km/s relative to the Hubble flow, as determined from the cosmic microwave background (CMB). This is a Lorentz factor of $\gamma\approx 1.000002$, or $\gamma^{-1}\approx 0.999998$. Hence from the frame of the Earth (more precisely, the Local Group), distances in the direction of motion (roughly towards the constellation Hydra, and also the opposite direction) are decreased by $\approx 0.0002\%$, relative to the measurement made by Hubble comovers. This calculation does not consider the Earth's orbit around the Sun, nor the Solar System's orbit around our galaxy's centre. Note the questionable approach of taking the $600$km/s figure in a Minkowski space calculation would give a correct result, since the proper distance $R\chi$ is an inertial coordinate. But in a curved spacetime, such a measurement applies only locally; it would still make sense to consider a whole chain of observers travelling at $600$km/s relative to their local Hubble flow, then the compendium of yields an overall contraction. Either way, this calculation would presumably have been done by others.

\subsection{Spatial projector --- FLRW spacetime}

With the same setup as the above, label the \velocity{} of Hubble comovers by $\fvec u_{\rm Hubble}$ and the \velocity{} of the relatively ``moving'' observer $\fvec u$. Then
\eqn{u_{\rm Hubble}^\mu=\left(1,0,0,0\right),}
and
\eqn{\fvec u=\left(u^t,u^\chi,0,0\right)}
for components to be determined. Then $\gamma=-\fvec u_{\rm Hubble}\cdot\fvec u=u^t$, so $u^t=\gamma$. Then by normalisation $u_\mu u^\mu=-1$, it follows $-\gamma^2+R(t)^2(u^r)^2=-1$, so $R(t)^2(u^r)^2=\gamma^2-1=V^2\gamma^2$, so $u^r=\frac{V\gamma}{R(t)}$, taking $u^r$ to have the same sign as $V$. Then:
\eqn{\label{eqn:FLRW4velocity}u^\mu=\left(\gamma,\frac{V\gamma}{R(t)},0,0\right).}
In covariant form, from equation~\ref{eqn:contratocov},
\eqn{u_\mu=\left(-\gamma,V\gamma R(t),0,0\right).}
Then the spatial projector tensor is
\eqn{P_{\mu\nu}=g_{\mu\nu}+u_\mu u_\nu=\begin{pmatrix}
V^2\gamma^2 & -V\gamma^2 R(t) & 0 & 0 \\
-V\gamma^2 R(t) & \gamma^2 R(t)^2 & 0 & 0 \\
0 & 0 & R(t)^2S_k(\chi)^2 & 0 \\
0 & 0 & 0 & R(t)^2S_k(\chi)^2\sin^2\theta \\
\end{pmatrix}.}

As before, the proper metric yields the same result in this case, because it is an instantaneous snapshot, not tracking particles over time. From the component $\gamma_{\chi\chi}=P_{\chi\chi}=\gamma^2 R(t)^2$, the local ``radial'' ($\chi$) distance is
\eqn{\label{eqn:FLRWlengthincrease}dL=\gamma R(t)d\chi,}
which is the usual ``proper distance'' $R(t)d\chi$ expanded by the Lorentz factor $\gamma$.

Hence the observer's rulers, as perceived in the frames of the Hubble flow, would measure an \emph{increased} length by a factor $\gamma$ in the direction of motion. This concurs with the previous result that only Hubble comovers measure a distance $R\chi$. To be precise, all motions which coincide with the Hubble flow in at least one direction measure a local distance $Rd\chi$ in that direction.

\subsection{Theorem --- Spatial metric for stationary/comoving observers}
In this section I prove an original theorem which leads to a host of example rigid systems, and hence a selection of possible gravitational redshift interpretations. Though it is done independently and I do not know of any source for it, presumably others would have derived it previously, because it is a straightforward computation.

\begin{thrm}
For a given coordinate system $x^\mu$ with \emph{diagonal} metric $g_{\mu\nu}$, the spatial metric of particles stationary with respect to the coordinates is simply $g_{ij}$, which is the spatial part of the spacetime metric.
\end{thrm}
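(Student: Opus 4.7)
The plan is to compute the spatial projector $P_{\mu\nu}$ directly from the definition $P_{\mu\nu}=g_{\mu\nu}+u_\mu u_\nu$ for the \fourvel{} of a stationary particle, exploiting the fact that the diagonality of $g_{\mu\nu}$ forces most components of $u_\mu u_\nu$ to vanish, and then show the proper metric $\gamma_{ij}$ reduces to $P_{ij}=g_{ij}$.

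First I would fix notation: a particle ``stationary with respect to the coordinates'' satisfies $u^i=0$ for $i=1,2,3$, so $u^\mu=(u^0,0,0,0)$. Imposing the normalisation $g_{\mu\nu}u^\mu u^\nu=-1$ together with diagonality gives $g_{00}(u^0)^2=-1$, hence $u^0=(-g_{00})^{-1/2}$ (the metric signature requires $g_{00}<0$ for a timelike worldline). Lowering indices via equation~\ref{eqn:contratocov}, which applies precisely because $g_{\mu\nu}$ is diagonal, yields $u_0=g_{00}u^0=-\sqrt{-g_{00}}$ and $u_i=g_{ii}u^i=0$.

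Next I would assemble $u_\mu u_\nu$: the only nonzero entry is $u_0 u_0=-g_{00}$, all mixed $u_0 u_i$ and purely spatial $u_i u_j$ entries vanish. Substituting into $P_{\mu\nu}=g_{\mu\nu}+u_\mu u_\nu$ gives
\begin{equation}
P_{00}=g_{00}+(-g_{00})=0, \qquad P_{0i}=g_{0i}+0=0, \qquad P_{ij}=g_{ij}+0=g_{ij},
\end{equation}
where $P_{0i}=0$ uses diagonality of $g$ again. So the projector has only a spatial block, equal to the spatial block of $g$.

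Finally, to upgrade this to the proper metric $\gamma_{ij}\equiv P_{\mu\nu}\,\partial x^\mu/\partial X^i\,\partial x^\nu/\partial X^j$, I would note that because particles are stationary in the $x^\mu$ chart, the spatial coordinates themselves serve as permanent particle labels: set $X^i\equiv x^i$, so $\partial x^\mu/\partial X^i=\delta^\mu_i$ and the sum collapses to $\gamma_{ij}=P_{ij}=g_{ij}$, as claimed. The only genuine subtlety, rather than an obstacle, is the step where I invoke diagonality twice (once to get $u_i=0$ from $u^i=0$, once to conclude $P_{0i}=0$); without diagonality the $g_{0i}$ cross-terms survive and one instead recovers the Landau--Lifshitz radar metric of equation~\ref{eqn:radardist}, which reassuringly reduces to $g_{ij}$ in precisely the diagonal case.
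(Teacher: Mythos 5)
Your proof is correct and follows essentially the same route as the paper's: compute $u^\mu$ from the normalisation condition, lower the index using diagonality, observe that $u_\mu u_\nu$ contributes only to the $00$ component so $P_{ij}=g_{ij}$, and collapse the proper metric via $\partial x^\mu/\partial X^i=\delta^\mu_i$ since the spatial coordinates double as material labels. Your closing aside correctly identifies that dropping diagonality recovers the Landau--Lifshitz radar metric, a point the paper also makes (in its diagonal form) immediately after the proof.
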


\begin{proof}
Each observer/particle remains at constant spatial position with respect to the specified coordinates. Then by normalisation,
\eqn{u^\mu=\left(\sqrt{-g_{tt}^{-1}},0,0,0\right),}
noting $-g_{tt}$ is positive with the $-+++$ convention. In covariant form, $u_\mu=g_{\mu\nu}u^\nu=g_{\mu t}u^t=g_{tt}u^t\delta^t_\mu$, so since the metric is diagonal,
\eqn{u_\mu=\left(g_{tt}u^t,0,0,0\right)=\left(\sqrt{-g_{tt}},0,0,0\right).}
Then $u_\mu u_\nu=0$ except for $(\mu,\nu)=(t,t)$ for which $(u_t)^2=-g_{tt}$. Thus
\eqn{P_{\mu\nu}=g_{\mu\nu}+u_\mu u_\nu=\diag(0,g_{11},g_{22},g_{33}).}
The proper metric $\gamma_{ij}=P_{\mu\nu}\pdiff{x^\mu}{X^i}\pdiff{x^\nu}{X^j}$ gives the same result in this case. Since the particles remain at fixed spatial positions, the $x^\mu$ are also material coordinates, and so $\pdiff{x^\alpha}{X^i}=\pdiff{x^\alpha}{x^i}=\delta_i^\alpha$. Hence
\eqn{\gamma_{ij}=P_{ij}=\begin{pmatrix}
g_{11} & 0 & 0 \\
0 & g_{22} & 0 \\
0 & 0 & g_{33} \\
\end{pmatrix}.}
\end{proof}

Note the Landau-Lifshitz radar distance concurs with this result, as for a diagonal metric, $\gamma^{\rm LL}_{ij}=g_{ij}$.

\emph{Corollary}: If in addition the metric is time-independent, then the stationary observers form a rigid system.

%We are ready to clarify the standard textbook claims:

Thus Schwarzschild observers form a rigid system. In FLRW spacetime, the theorem shows Hubble comovers do measure the ``proper distance'' $Rd\chi$. But since the metric evolves over time, in general, they do not form a rigid system. The theorem concurs with the previous result that Schwarzschild observers and Hubble comovers measure the ``proper distance'' as defined in their respective spacetimes.

The theorem does not apply to Gullstrand-Painlev\'e coordinates for raindrops, because the metric is not diagonal in these coordinates. This fits with intuition that such a system would not be rigid. Neither are the Eddington-Finkelstein coordinates diagonal. Another situation which fails to meet the criteria is $r<2M$ for a black hole in Schwarzschild coordinates, because no particle can remain stationary there.

%\emph{Conclusion:} Can we call the radial distance $\int\Schw^{-1/2}dr$ in the Schwarzschild geometry a ``proper length''? Answer: Yes, but only for Schwarzschild observers (these are stationary).

%Finally, note this theorem does not apply to the ``counteracting coordinates'' as the counteracting coordinate metric has cross-diagonal terms.

\section{Gravitational redshifts --- flat spacetime}
There is no gravitational field in Minkowski spacetime, and so any redshifts must be Doppler shifts. At least this would be the most natural interpretation of an all-knowing ``bookkeeper''. But for a local observer who experiences acceleration and may or may not have knowledge of the global spacetime, this acceleration is locally indistinguishable from gravity according to the equivalence principle.

This section demonstrates matching a photon transit to the Rindler chart, then investigates arbitrary accelerations in special relativity, before considering one possible extension to FLRW spacetime.

\subsection{Gravitational interpretation --- Rindler observers}

\begin{SCfigure}\centering
  \includegraphics[width=0.4\textwidth]{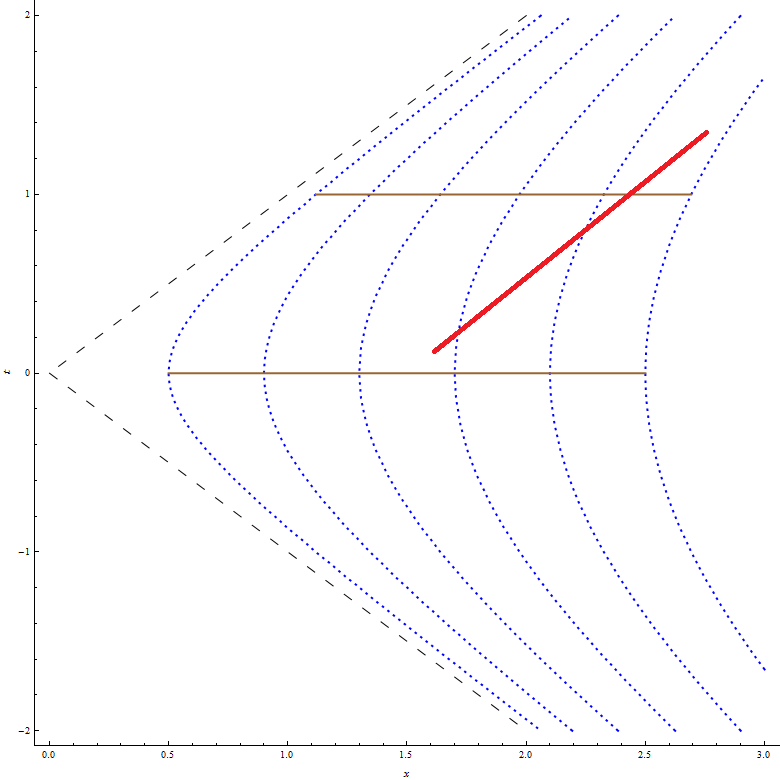}
  \caption{The photon cuts a $45^\circ$ path (red) through the Rindler wedge. The endpoints can be matched up with the desired path. This yields a gravitational redshift interpretation in Minkowski space.}
\end{SCfigure}

In Minkowski space, the Rindler chart makes it straightforward to find a family of observers to interpret a given frequency shift as gravitational. Suppose the \velocity{} of a given emitter and receiver are known, as well as the energy of a photon at emission.

The first step is to convert these endpoint motions to motions collinear (spatially) with the photon. So long as the adjusted emitter and receiver measure the same initial and final energies of the photon, observers would accept this adjustment without suspecting any ``cheating''. Then the velocities are all collinear, so a Rindler chart can be defined with this direction as the \coordinate{x}.

Suppose the \velocity{} of the observer to be adjusted is $\fvec u$, and photon \momentum{} $\fvec p$. Align Minkowski coordinates so that $\fvec p$ is in the \direction{x}. The observer measures photon energy $E=-\fvec p\cdot\fvec u=p^tu^t-p^xu^x$. The adjusted \velocity{} $\fvec u_{\rm new}$ will have components $u_{\rm new}^\mu=(u^t,u^x,0,0)$, and satisfy the requirement $E=-\fvec p\cdot\fvec u_{\rm new}=p^tu_{\rm new}^t-p^xu_{\rm new}^x$. Normalisation of $\fvec u_{\rm new}$ is $-(u_{\rm new}^t)^2+(u_{\rm new}^x)^2=-1$. $E+p^xu_{\rm new}^x=p^tu_{\rm new}^t$ so squaring, $E^2+2Ep^xu_{\rm new}^x+(p^xu_{\rm new}^x)^2=(p^tu_{\rm new}^t)^2$. The right hand side is $(p^t)^2(u_{\rm new}^t)^2=(p^t)^2(1+(u_{\rm new}^x)^2)$. Expanding out, this gives a quadratic in $u_{\rm new}^x$ which can be solved given the input variables, and thus $\fvec u_{\rm new}$ is determined.

%. We require motion in $x$-direction only, with same measurement of energy (redshift). The emitter measures the photon energy as $E=-p\cdot v$ where $\fvec p=\hbar\fvec k$ the wave 4-vector. We require a $\fvec u=(u^t,u^x,0,0)$ in $x$-direction which measures the same energy. So $E=-p\cdot u=wu^t-p^xu^x$. Normalisation is $g(u,u)=-1$, so $-(u^t)^2+(u^x)^2=-1$, so $u^t=\sqrt{1+(u^x)^2}$. One unknown remains, so solve for it. $E+p^xu^x=wu^t$, so $E^2+2Ep^xu^x+(p^x)^2(u^x)^2=1+(u^x)^2$. This is a quadratic in $u^x$, so collecting terms we have: $(1-(p^x)^2)(u^x)^2-2Ep^xu^x+(1-E^2)=0$. One can solve for $u^x$, and then $u^t$.

Now assume the endpoint observers --- the emitter and receiver --- have been adjusted. Suppose the emitter is at $(t_1,x_1)$ in the chosen coordinates, where the other coordinates are zero and suppressed. A null worldline connects the observers, so the reception event is at $(t_1+d,x_1\pm d)$ for some coordinate interval $d>0$.
% Adjusting to the Rindler chart, the coordinate speed of the Rindler particle at $(t_1,x_1)$ is $\frac{t_1}{x_1}$. The coordinate speed of the receiver is $\frac{t_1+d}{x_1+d}$

From straightforward algebra, it can be shown
\eqn{x_1=d\frac{1-v_2}{v_2-v_1}, \qquad\qquad t_1=v_1d\frac{1-v_2}{v_2-v_1},}
where $v_i\equiv\frac{t_i}{x_i}$ is the coordinate speed of the Rindler observers matching the emitter and receiver. This yields a chain of observers, with the endpoints coincident with the emitter and receiver observers, who remain at constant proper-frame distance. Thus each would locally interpret the infinitesimal redshift as a purely gravitational shift. And thus it is an overall gravitational shift, in this interpretation.

%$\frac{dx}{dt}=\frac{t}{x}$. So $v_e=\frac{t_e}{x_e}$, i.e. $t_e=x_ev_e$. $v_o=\frac{t_e+L}{x_e+L}=\frac{x_ev_e+L}{x_e+L}$. $x_e=\frac{1-v_o}{v_o-v_e}L$. So the endpoints are computed. Consider the proper time $t$ along the path, $x$ coordinate is then $x_e+t$, $t$ coordinate is $t_e+t$. So $v(t)=\frac{t_e+t}{x_e+t}$. The local observer at this event has $X=\sqrt{x^2-t^2}=\sqrt{x^2-x^2v^2}=x\sqrt{1-v^2}$. This determines its acceleration $1/X=\gamma/x$ where $\gamma$ is the Lorentz factor.

The photon takes nonzero coordinate time $t$ to transit, which allows more flexibility. For instance, there is no frame in which part of the Rindler rod travels left, and part travels right. But due to the finite photon travel time, a photon path which traversed from some position at $t<0$ to another position at $t>0$ could include this property.

\subsection{Arbitrary accelerations in special relativity}
The case of arbitrary accelerations in general curved spacetimes is difficult, so a natural precursor is arbitrary accelerations in Minkowski space. Because it is just in $1+1$ dimensions, the situation is considerably simplified.

Noether and Herglotz both showed the motion of a rigid body in special relativity is determined by a single particle. But one may wonder questions like, does the local length-contraction depend only on the leading particle's instantaneous speed, or its full worldline history? From the particular case of Rindler coordinates, the velocity of a single particle is not sufficient to determine the behaviour of the rod, as inspection of the chart shows. Along a radial line, $T$ and hence the velocity is constant, but all these particles have different accelerations.

In the Rindler chart, the following properties hold, and one might wonder if they hold for accelerated rigid motion more generally:
\begin{itemize}
\item to stay rigid individual proper accelerations differ
\item to stay rigid individual speeds differ
\item speed depends on both speed and acceleration of adjoining particle
\item there is a limit on the extent, could possibly call a ``rigidity horizon'' or similar. Can't extend rigid structure indefinitely through space under the given requirements.
\end{itemize}

Back to arbitrary accelerations (but still \dimensional{1} and collinear), suppose $x^\mu$ are inertial coordinates corresponding to an observer. Let $s$ be a material (``Lagrangian'') coordinate, with the additional requirement that it be scaled to proper length. Label by $v\equiv \diff{x}{t}$ the coordinate speed, and $a\equiv\diff{v}{t}$ for the coordinate (not proper) acceleration. $x$, $v$, and $a$ are all functions of both $s$ and $t$. The local or infinitesimal length-contraction is: \citep[p50--51]{rindler1977}
\eqnboxed{\label{eqn:dxds}\pdiff{x}{s}=\gamma^{-1}=\sqrt{1-v^2}}
The higher the speed, the more material fits into a given coordinate interval, as measured from the observer's frame. Differentiate both sides with respect to $t$. Then $\frac{d^2x}{dsdt}=\pdiff{(\gamma^{-1})}{t}$. The right hand side is
\eqn{\pdiff{}{t}\gamma^{-1}=\pdiff{}{t}\sqrt{1-v^2}=\frac{1}{2}(1-v^2)^{-1/2}\cdot(-2)v\cdot\pdiff{v}{t}=-v\gamma\pdiff{v}{t}=-v\gamma a,}
and the left hand side is $\diff{}{s}\left(\diff{x}{t}\right)=\diff{v}{s}$. Hence
\eqnboxed{\pdiff{v}{s}=\pdiff{\gamma^{-1}}{t}=-v\gamma a}
This describes the propagation of the speed along the material, in a snapshot of constant coordinate time. Differentiate again with respect to $t$, and after a few lines one can show:
\eqnboxed{\pdiff{a}{s}=-\pdiff{(v\gamma a)}{t}=-\gamma^3a^2-\gamma v\dot a.}
This describes the propagation of acceleration along the material. The $\dot a$ term is unappealing, but there is no point continuing this process ad-infinitum. I derived the latter two independently, but they would certainly have been derived before. Note they all resulted simply from the length-contraction formula. As a check, the Rindler chart satisfies these equations.

From equation~\ref{eqn:dxds} and the definition of $v$ it follows that \citep[\S2]{eriksen+1982}
\eqn{\label{eqn:rigid1D}\pdiffsquared{x}{s}+\pdiffsquared{x}{t}=1.}
This equation is not explicitly covariant (tensorial), but it can be put into a form which is: \citep{soderholm1982}
\eqn{\pdiffsquared{s}{x}-\pdiffsquared{s}{t}=1.}

In fact the following features of the Rindler chart do carry through to generalised \dimensional{1} accelerated rigid motion in special relativity. There is a single frame in which the whole \dimensional{1} rigid body is at rest \citep[\S4]{eriksen+1982}. If a given point has proper acceleration $\frac{1}{X}$, then the maximum extent of the rigid body is $X$ in the direction opposite to the acceleration. For such a limit point $X_0$, the proper acceleration of other points is $(X-X_0)^{-1}$ \citep[\S5]{eriksen+1982}. The distribution of the acceleration across space is the same as for the Rindler chart \citep[\S6]{eriksen+1982}. Contrast the approach in \citet[\S2.4]{dewitt2011}.

This analysis of rigid body dynamics in $1+1$ dimensions in flat spacetime is linked to gravitational redshift interpretations, and is a preliminary step to application in curved spacetime.

\section{Gravitational redshifts --- the Schwarzschild fishing line}
\label{sec:fishingline}

\begin{wrapfigure}{l}{0.5\textwidth}
%\centering
\includegraphics[width=0.48\textwidth]{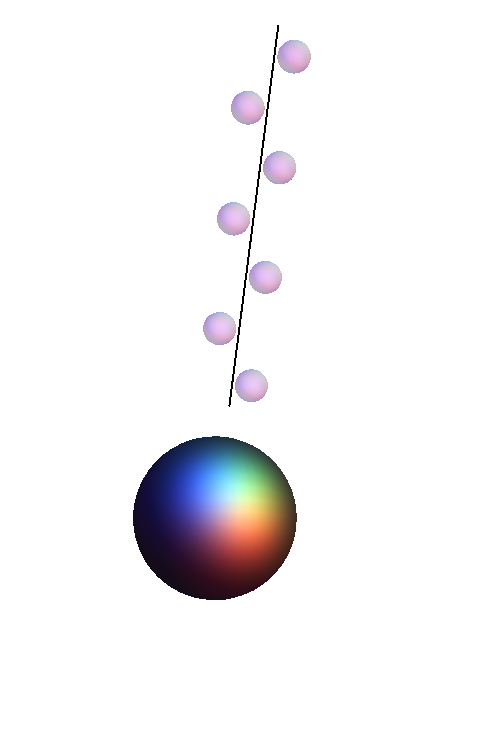}\label{fig:fishinglineconveyor}
\caption{A fishing line dangling over an unusually colourful black hole. The line is moving radially, and is hence length-contracted as measured by Schwarzschild observers. It is kept perfectly rigid by a large number of fixed pulleys or winches (in pink), each rotating at a constant rate depending on their position. This ensures the overall motion is Born-rigid.}
\end{wrapfigure}

This section describes the relativistic kinematics of a \dimensional{1} rigid object in Schwarzschild spacetime, or put more simply, a ``fishing line'' dangling outside a black hole and which is being reeled in or out. This rigid body solution leads to gravitational redshift interpretations for radial photons. This is an original result. It was claimed by \citet{brotas2006} who is demonstrably incorrect. 

The object could be visualised as an idealised fishing line, bar, rope, string, etc. Technically, it is only the motion which is rigid and not the object itself, as discussed in section~\ref{sec:bornrigid}. Hence, one could visualise a host of miniature rocket boosters affixed at all points of the line, or a system of pulleys which feed the line through like a conveyor belt. Historically, the idea sprang out of discussion about ``mining'' black holes, for which one proposal is to lower a box on a rope, fill it with energy, then raise it again. This leads to inquiries about the tensilve strength and kinematics of the rope.

One might assume the ``proper distance'' $dR=\Schw^{-1/2}dr$ is the only important radial distance, and hence that a rigid system must be described by a constant $R$\nobreakdash-spacing of particles in an object. However despite the ubiquity of this length measure, it does not account for length-contraction effects due to motion, as shown previously.

\subsection{Initial calculations}

The goal is to determine quantitative values such as the \velocity{} of particles on the line, and ultimately the proper metric. Firstly, I start with insight from special relativity, to be later applied to local inertial frames in Schwarzschild spacetime.

Let $K$ be a constant related to the speed of the line, defined as the rate of proper length passing per unit coordinate time $t$, at any given point. This definition is made by Brotas, who correctly asserts this leads to rigid motion. See the pipe diagram for one justification, and the proper metric calculation later. It seems to be the static nature of the geometry which supports this feature. Brotas defines $0<K<\infty$, I use $-\infty<K<\infty$, where as before positive $K$ refers to outward motion, and vice versa. This unboundedness does not lead to unbounded velocities, as the following lemma reassures. Let $L$ measure proper length along the rope, then
\eqn{\diff{L}{t}=K.}

\begin{figure}
\centering
\includegraphics[width=0.7\textwidth]{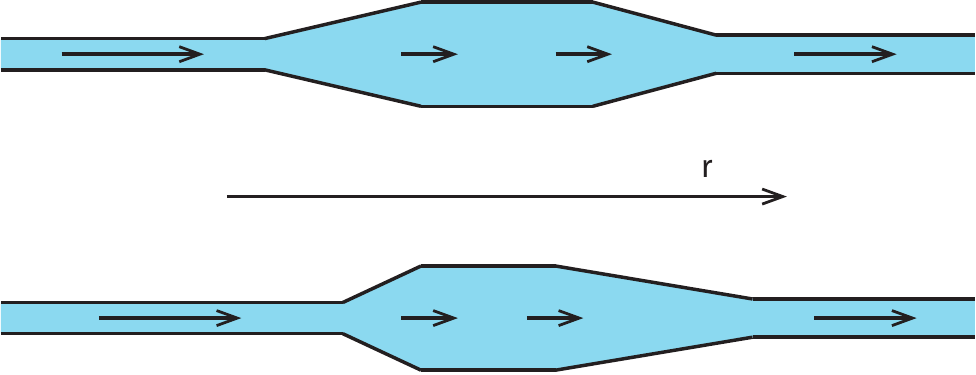}
\caption{Water flowing through oddly shaped pipes in Newtonian mechanics, an analogy to the length-contracted fishing line in Schwarzschild geometry. Assume the water flows into the pipe at a constant rate, and the flow is streamlined, frictionless, and incompressible. Its speed through the thinner sections is greater, and slower through the thicker sections. But the equation of continuity says the mass flow rate is the same everywhere, i.e. the amount of material or number of atoms passing any given $r$ remains constant. The varying thickness of the pipe is analogous to the varying time and length distortion of the fishing line in Schwarzschild spacetime. Stretching the pipe (bottom diagram) in the lengthwise does not change the flow rate (arrows), which is dependent only on the cross-section.}
\end{figure}

\begin{lemma}\label{lem:ropeminkowski}
In Minkowski space, suppose a rigid rope lies in a straight line and moves along its length, at constant speed $V$ relative to an inertial observer. Then the amount of material (proper length) of rope passing the observer is given by the rate
\eqn{\diff{L}{\tau}=V\gamma,}
where $L$ is the proper length of rope which has passed the observer, $\tau$ is the proper time of the observer, and $\gamma\equiv(1-V^2)^{-1/2}$ is the Lorentz factor.
\end{lemma}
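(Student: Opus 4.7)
The plan is to derive the rate $dL/d\tau=V\gamma$ by a short special-relativistic calculation, presenting it in two complementary ways so that the roles of length-contraction and time-dilation are transparent. Both derivations exploit only the fact that the rope is rigid (so its proper length is well-defined and additive along its extent) and that the observer is inertial (so their proper time coincides with their coordinate time in their own rest frame).

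First I would work in the observer's inertial frame. In this frame the rope, moving lengthwise at speed $V$, is length-contracted by $\gamma^{-1}$, meaning that a segment of proper length $dL$ (measured in the rope's rest frame) occupies coordinate length $dL/\gamma$ in the observer's frame. In a coordinate time $dt$ the rope slides past the observer by a coordinate distance $V\,dt$, which must equal the coordinate length of the segment that passed, namely $dL/\gamma$. Hence $dL/\gamma=V\,dt$, i.e. $dL=V\gamma\,dt$. Since the observer is at rest in their own frame, $d\tau=dt$, and so $dL/d\tau=V\gamma$.

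As a cross-check I would redo the computation in the rope's rest frame, where the rope is stationary with proper length measure $dL$ and the observer moves at speed $V$ in the opposite direction. In a coordinate time $dt'$ of this frame the observer sweeps past a rope segment of proper length $dL=V\,dt'$. The observer's proper time is related to this frame's time by time-dilation, $dt'=\gamma\,d\tau$, so once again $dL=V\gamma\,d\tau$. The agreement of the two derivations is a direct instance of the reciprocal length-contraction/time-dilation picture already highlighted in the thesis (cf.\ figure~\ref{fig:lengthcontraction}).

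Since the calculation reduces to one line of kinematics in either frame, there is no real obstacle; the only thing to be careful about is bookkeeping which frame each interval is measured in (proper length is always in the rope frame, proper time always on the observer's worldline), and noting that constant $V$ means $\gamma$ is constant so that the relation integrates trivially to $L=V\gamma\,\tau$ if desired. This lemma then serves as the flat-space template that will be transported, via a local inertial frame, into the Schwarzschild setting where $V$ and $\gamma$ become position-dependent.
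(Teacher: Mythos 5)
Your first derivation is exactly the paper's argument (length-contraction by $\gamma^{-1}$ in the observer's frame means proper length passes at the enhanced rate $V\gamma$), just written out more explicitly, and it is correct. The second derivation in the rope's rest frame is a valid bonus consistency check that the paper does not include, but the core approach is the same.
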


\begin{proof}
Recall ``proper length'' means the length of the rope as measured in its own frame. The rope is length-contracted by a factor of $\gamma$, as measured by the observer. In the observer's frame, the rope will appear denser. Though the speed is $V$, due to the increased density from length-contraction the proper length will pass at the increased rate of $V\gamma$.
\end{proof}

\begin{lemma}
\label{lem:vgamma}
Suppose $V\gamma=q$ for some number $q$ at a given point, where $\gamma(V)$ is the Lorentz factor, and $q$ and $V$ may be positive or negative. Then
\eqn{\gamma=\sqrt{1+q^2}, \qquad\qquad V=\frac{q}{\sqrt{1+q^2}}.}
\end{lemma}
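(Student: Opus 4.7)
The plan is to solve for $V$ and $\gamma$ directly from the defining relation $V\gamma = q$ together with the definition $\gamma = (1-V^2)^{-1/2}$. Since this is essentially an algebraic identity, the work is routine; the only subtle point is the sign of $V$.

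First I would square the hypothesis to obtain $V^2\gamma^2 = q^2$, then use $\gamma^2 = 1/(1-V^2)$ to rewrite this as $V^2/(1-V^2) = q^2$. Clearing the denominator and collecting terms gives $V^2(1+q^2) = q^2$, hence $V^2 = q^2/(1+q^2)$. Substituting back yields $1 - V^2 = 1/(1+q^2)$, from which $\gamma = \sqrt{1+q^2}$ follows immediately (taking the positive root, since $\gamma \ge 1 > 0$ by definition of the Lorentz factor).

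For $V$, squaring throws away sign information, so I would recover the sign by observing that $\gamma > 0$ always, and therefore $V$ must have the same sign as $q = V\gamma$. Combined with $|V| = |q|/\sqrt{1+q^2}$, this gives $V = q/\sqrt{1+q^2}$ as stated, valid for both signs of $q$ uniformly.

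The main (mild) obstacle is merely to justify the sign choice cleanly, since the lemma is stated for $q$ possibly negative. Everything else is a one-line manipulation. No relativistic machinery beyond the definition of $\gamma$ is needed, and the result will be used in the sequel to invert the rate relation $V\gamma = q$ arising from Lemma~\ref{lem:ropeminkowski} when $q$ is determined by the fishing-line constant $K$ and the local Schwarzschild geometry.
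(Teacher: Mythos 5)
Your proof is correct and follows essentially the same elementary algebra as the paper's: square the relation, solve for $V^2=q^2/(1+q^2)$, and recover the sign of $V$ from the positivity of $\gamma$. The only (cosmetic) difference is that you clear the denominator rather than inverting both sides, which neatly avoids the paper's separate treatment of the case $V=0\Leftrightarrow q=0$ and its final division $\gamma=q/V$.
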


\begin{proof}
This follows from elementary algebra. Now $V\gamma=\frac{V}{\sqrt{1-V^2}}=q$, so $\frac{V^2}{1-V^2}=q^2$. Now $V=0 \Longleftrightarrow q=0$, otherwise invert both sides so $\frac{1-V^2}{V^2}=q^{-2}$, so $\frac{1}{V^2}=1+q^{-2}=1+\frac{1}{q^2}=\frac{1+q^2}{q^2}$. Then $V^2=\frac{q^2}{1+q^2}$, so $V=\frac{q}{\sqrt{1+q^2}}$, taking $V$ to have the same sign as $q$. Then from the starting assumption $V\gamma=q$, it follows $\gamma=\frac{q}{V}=\sqrt{1+q^2}$.
\end{proof}

Back to Schwarzschild geometry, the fishing line passes any stationary observer at a proper length rate of $\diff{L}{t}=K$. The use of coordinate time gives a convenient universal time with which to coordinate the motion. Then this is related to the proper time $t_{\rm shell}$ of the local Schwarzschild observer by
\eqn{\diff{L}{t_{\rm shell}}=\diff{L}{t}\diff{t}{t_{\rm shell}}=K\Schw^{-1/2}.}
(The label $\tau$ will be is reserved for particles on the line.) Since $t_{\rm shell}$ is a local inertial coordinate, the special-relativistic result of lemma~\ref{lem:vgamma} is valid, taking $q=K\Schw^{-1/2}$. It follows:
\eqn{\label{eqn:fishVgamma}\gamma=\sqrt{1+K^2\Schw^{-1}}=W(r)\Schw^{-1/2} \qquad\qquad V=\frac{K}{W(r)},}
after simplifying. Here $W(r)$ is defined by Brotas and has the following definition, and I also add a function $f(r)$ used shortly:
\eqn{W(r)\equiv\sqrtfish, \qquad\qquad f(r)\equiv\ffish=\Schw^{-1}W(r).}
One can show equation~\ref{eqn:fishVgamma} implies $\gamma\ge 1$ and $\abs{V}<1$ as required.

The \velocity{} in Schwarzschild coordinates follows from equation~\ref{eqn:velflow},
\eqnboxed{u^\mu=\left(f(r),K,0,0\right)}
after simplifying. In particular, $u^r=\diff{r}{\tau}=K$, so integrating gives
\eqnboxed{\label{eqn:fishr}r=K\tau+X,}
where $X$ is the constant of integration and is clearly the value of $r$ when $\tau=0$. 

Similarly, $u^t=\diff{t}{\tau}=f(r)$. A relation between $t$, $\tau$, and $r$ is needed. But $\diff{r}{\tau}=K$, so $f(r)=\diff{t}{\tau}=\diff{t}{r}\diff{r}{\tau}=K\diff{t}{r}$, so $\diff{t}{r}=\frac{1}{K}f(r)$. Integrate to get:
\eqnboxed{\label{eqn:fisht}t=\frac{1}{K}\int_X^r f(r')dr'}
where the lower bound follows from choosing $\tau=0$ to coincide with $t=0$, when the \coordinate{r} is $X$. A primitive function for the integral is messy but provided in Appendix B. It is not required for the calculations which can be done from the above expression.
%Is this time-slice important i.e. arbitrary??

Note that since the upper limit in the integral for $t$ is $r=X+K\tau$, the new variables $(\tau,X,\ldots)$ are expressed in terms of the Schwarzschild coordinates $(t,r,\ldots)$, where $\theta$ and $\phi$ are shared by both systems. The coordinate velocity
\eqn{\diff{r}{t}=\frac{K}{f(r)}}
follows directly from the \velocity{}.

The proper length of rope contained in interval $dr$ is, from equation~\ref{eqn:fishVgamma},
\eqn{\gamma dR=\frac{W(r)}{\Schw^{1/2}}\Schw^{-1/2}dr={W(r)\Schw^{-1}}dr=f(r)dr.}

%(Is it generalised uniform acceleration? See Lyle 2014)

%[To do: Check $K=0$ case is always consistently described]

When $\abs K$ is large, the interval tends to $\abs Kdr$, so is unbounded. When $K=0$ the interval is $dr$ as expected. The \acceleration is $\fvec a\equiv\nabla_{\fvec u}\fvec u$, which evaluates to
\eqn{\fvec a=\left(KM(r-2M)^{-2}f(r)^{-1},\frac{M}{r^2},0,0\right).}
The first term may also be expressed $KMr^{-2}\Schw^{-2}f(r)^{-1}$ or $\frac{KM}{r(r-2M)\sqrt{1-\frac{2M}{r}+K^2}}$. For $K=0$ this reduces to the expression for Schwarzschild observers $a^\mu=(0,\frac{M}{r^2},0,0)$ as expected \citep[p459]{hartle2003}. The magnitude describes the ``felt'' acceleration:
\iffalse
[<-- probably wrong! See Hartle paper copy, p285]
\fi
\eqn{\sqrt{\fvec a\cdot\fvec a} = \frac{M}{r^2}\fish^{-1/2} = \frac{M}{r^2W(r)}.}
The form is reassuringly similar to that of Schwarzschild observers, for whom $\sqrt{\fvec a\cdot\fvec a}=\frac{M}{r^2}\Schw^{-1/2}$ \citep[p459]{hartle2003}, and as $K\rightarrow 0$ this limit is obtained. Note the acceleration is always less for fishing line particles than Schwarzschild observers, for $K\ne 0$.

\subsection{Proper metric}

The projection tensor is given by equation~\ref{eqn:projectorcoord}, with $V$ and $\gamma$ from equation~\ref{eqn:fishVgamma}. Then,
\eqn{P_{\mu\nu}=\begin{pmatrix}
K^2 & -Kf(r) & 0 & 0 \\
-Kf(r) & f(r)^2 & 0 & 0 \\
0 & 0 & r^2 & 0 \\
0 & 0 & 0 & r^2\sin^2\theta \\
\end{pmatrix},}
after simplifying.

\begin{wrapfigure}{r}{0.5\textwidth}
  \centering
  \includegraphics[width=0.48\textwidth]{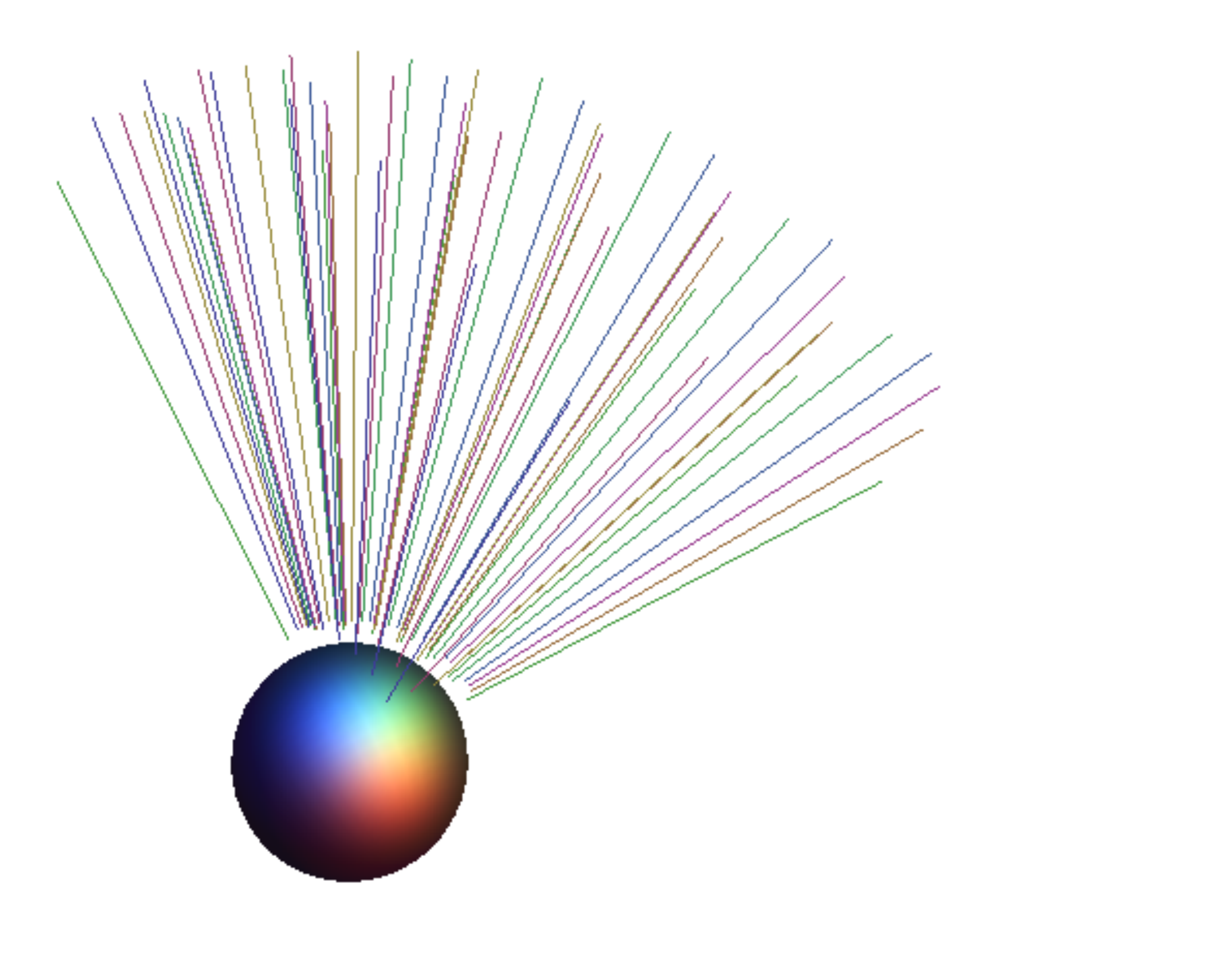}
  \caption{Many fishing lines dangling over a black hole. The picture is basically the same in the Schwarzschild frames, the fishing line frames, and $\mathbb R^3$, as the only difference is a radial length-contraction. Clearly this is not a rigid system in the angular coordinates - if the lines were connected sideways, they would rip apart when reeled in from above. But for a thin line which is not glued to others, only the radial direction matters.}
\end{wrapfigure}

The proper metric (see section~\ref{sec:propermetric}) requires the Schwarzschild coordinate position $x^\mu$ of particles to be expressed in terms of material coordinates $X^\nu$ including the proper time $\tau$. Now $\theta$ and $\phi$ are fixed for any given particle, so they can double as material coordinates: $X^\theta\equiv x^\theta$ and $X^\phi\equiv x^\phi$. Also equation~\ref{eqn:fishr} is $r=K\tau+X$, so choose $X$ as a radial material coordinate: $X^r\equiv X$. Equation~\ref{eqn:fisht} gives $t$ as a function of the $X^\nu$. Thus the Schwarzschild coordinates $x^\mu$ are expressed in terms of the material coordinates $X^\nu$, as required.

The derivatives $\pdiff{x^\mu}{X^i}$ are straightforward, apart from $\pdiff{t}{X}$. To repeat, $t=\frac{1}{K}\int_X^r f(r')dr'$. It is easy to miss that the upper limit $r=K\tau+X$ is also dependent on $X$. Hence, by a double application of the First Fundamental Theorem of Calculus which states $\diff{}{x}\int_c^x h(s)ds=h(x)$,
\eqn{\pdiff{t}{X}=\frac{f(r)-f(X)}{K},}
since $\diff{}{X}(I)=\diff{}{r}(I)\diff{r}{X}$ by the chain rule, writing $I$ for the integral expression. This is confirmed using the closed form expression for $t$.

It follows the proper metric is, after simplifying,
\eqn{\gamma_{ij}=\begin{pmatrix}
f(X)^2 & 0 & 0 \\
0 & r^2 & 0 \\
0 & 0 & r^2 \sin^2\theta \\
\end{pmatrix}.}

This gives the proper-frame distance as perceived by particles in the fishing line, and is interpreted as follows. In the radial direction the perceived distance between particles separated by $dX$ in the material manifold, is $f(X)dX$. This distance is measured in the orthogonal hypersurface to the motion. This formula applies over all time, so the proper-frame distance between these particles remains context, and thus they remain Born-rigid. $X$ is also the initial \coordinate{r} at $t=0$, so a Schwarzschild radial distance of $dX$ is perceived as $f(X)dX$ by the particles.

The angular components $\gamma_{\theta\theta}$ and $\gamma_{\phi\phi}$ are unchanged from those in the spacetime metric $g_{\mu\nu}$. This is an illustration of the familiar principle that there are no length-contraction (or expansion) effects orthogonal to the motion. However, $r$ is not constant over time for a given particle, for $K\ne 0$. Hence the radial distances change over time, and there is not rigidity in this direction. A Born-rigid object extended in \dimension{3}s will rip apart under fishing line motion (if $K\ne 0$). However the idealised \dimensional{1} fishing line remains Born-rigid. An alternate approach is to accept the line width as nonzero but small, and then allow for slight deviations from Born-rigidity, where the latter remains an extremely good approximation.

%To construct the ``proper metric'' we require the position in terms of the material labels $X^i$ and proper time $\tau$. This is already achieved above, and in summary
%\eqn{x^\mu=\left(\frac{1}{K}\int_X^{X+K\tau}f(r')dr',X+K\tau,\theta,\phi\right)}

%The interpretation is as follows. The angular ($\theta$ and $\phi$) components of the proper metric $\gamma_{ij}$ are identical to those for the spacetime metric $g_{\mu\nu}$. This means the angular distances are unchanged by the motion, as for flat space where directions tangential to the motion are not length-contracted. 

%However the corresponding radial components $g_{rr}=\Schw^{-1}$ and $\gamma_{XX}=f(X)^2$ clearly differ. This means an increment of length in the radial direction is $dL=f(X)>\Schw^{-1/2}$. The increase is due to shortened (length-contracted) rulers measuring a greater distance.

%The proper metric is a function of the $X^i$ and $\tau$, where we have interpreted $r=X+K\tau$ as a function of these. This shows the radial distance between particles remains constant over time! At arbitrary $\tau$, a particle's comoving distance from its neighbours remains the same as when $\tau=0$ (when $r=X$). This is not true for the angular directions, which increase with $\tau$. Thus we have rigidity in the radial, but not angular, directions. This is not a problem for our idealised \dimensional{1} fishing lines.

Limiting cases provide helpful confirmation. As $K\rightarrow 0$, $g_{11}\rightarrow \Schw^{-1/2}=dR$ as expected, which is the proper distance for Schwarzschild observers. As $M\rightarrow 0$ and/or $r\rightarrow\infty$, $\gamma_{11}\rightarrow \sqrt{1+K^2}$. This implies Minkowski space, and indeed this fits lemma~\ref{lem:vgamma} with $q=K$. If in addition $K\rightarrow 0$, $\sqrt{1+K^2}\rightarrow 1$ as expected.

\citet[\S6]{natario2014} also discusses the fishing line.

\subsection{Coordinate transform}
There are many coordinate systems for the Schwarzschild geometry already in common use. Each is chosen for a particular application, for instance geometric reasons, or following light paths or free-falling particles. I provide a new coordinate system $(\tau,X,\theta,\phi)$ which is adapted to the fishing line kinematics. $X$ is constant for a particle on the line and corresponds to the Schwarzschild \coordinate{r} when $\tau=0$. $\tau$ is the proper time for particles on the line. $\theta$ and $\phi$ are the usual angular spherical coordinates. They are synchronous comoving coordinates for a Born-rigid fishing line.

After substitution and simplification, one ultimately obtains
\eqn{ds^2 = -d\tau^2 + \frac{2}{K}\left(f(X)\sqrtfish-1\right)d\tau dX + \left(\Schw^{-1}-\frac{1}{K^2}\Schw\left(f(r)-f(X)\right)^2\right)dX^2 + r^2d\Omega^2}
which is valid for $K\ne 0$. Here $d\Omega^2=d\theta^2+\sin^2\theta d\phi^2$, and $r$ is the Schwarzschild radial coordinate but interpreted as a function of the coordinates, $r=X+K\tau$, which I continue to use for brevity of expression. Note the cross-term $d\tau dX$, due to motion in the Schwarzschild case.

This metric has vanishing Ricci tensors $R_{\mu\nu}$ as required for any coordinatisation of the Schwarzschild geometry, as \citet{brotas2006} mentions. This follows from the tensor transformation law $R'_{\alpha\beta}=\pdiff{x^\gamma}{x'^\alpha}\pdiff{x^\delta}{x'^\beta}R_{\gamma\delta}$, so since the $R_{\gamma\delta}$ all vanish in Schwarzschild coordinates, they must vanish in all coordinate systems for this geometry. Similarly, the curvature scalar $R=0$ everywhere for this geometry.

%This follows from the tensor transformation law. $T_{\mu\nu}=x^\alpha_\mu x^\beta_\nu T_{\alpha\beta}$????[Made this up. Check relativity course notes instead.]. 

%Curvature scalar is contraction $T^\mu_\mu$ of Ricci tensor, and hence is also zero everywhere. (Alternatively, the curvature scalar at each point is independent of coordinates. So zero everywhere - in Schwarzschild geometry).

\subsection{Infinite length of line}

\begin{wrapfigure}{r}{0.5\textwidth}
  \centering
    \includegraphics[width=0.48\textwidth]{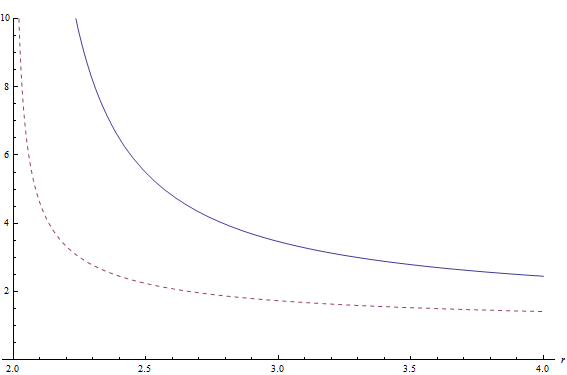}
  \caption{The function $f(r)$ (blue line) compared with the usual ``proper distance'' interval $\Schw^{-1/2}$, for $M=1$ and $K=1$. Visually, it is quite plausible the integral of $f$ --- the area under the curve --- is infinite near $r=2M$.}
\end{wrapfigure}

The length-contracted fishing line has the curious property than an infinite proper length will fit between the event horizon $r=2M$ and any point $r>2M$. This follows from the integral of $f(r)dr$, which describes the proper length, as given in Appendix B. One might guess a realistic material would break under the extreme conditions near the event horizon. But as $r\rightarrow 2M$, the \acceleration magnitude approaches $\frac{1}{4M\abs{K}}$, which curiously is finite. Perhaps a real line could even withstand the tension, aided by time-dilation, as \citet[\S2]{brown2013} expresses ``the weight of the rope redshifts away.''

The infinite length is \emph{not} simply due to $f(r)$ diverging to infinity as $r\rightarrow 2M$. Though this is true, this property does not imply the integral also diverges --- compare \citet[p2-49]{taylorwheeler2000}. Consider the familiar setting of functions from $\mathbb R$ to $\mathbb R$, e.g. $\frac{1}{x}$ and $\frac{1}{\sqrt{x}}$. Both are ``infinite'' at $x=0$ (more rigorously, both are undefined at $x=0$ but the limit is $+\infty$ when approaching from the positive side). And yet one integral diverges when the other converges. for $\int_0^1 \cdot dx$.

The point $r=2M$ is not included in the integral. But the integration is still the same $\int_{2M}{\cdot}$, whether or not this endpoint is included. (Technically, under Lebesgue integration, the ``measure'' of a single point within an interval of $\mathbb R$ is $0$. The convention is $0$ times $\infty$ is $0$, under Lebesgue integration. Thus one point makes no difference. The divergence is due to the behaviour \emph{near} $r=2M$, not the point itself.) Another way of expressing this is that the set of lengths $\int_{r_1}^{3M}f(r')dr'$ for $2M<r_1<3M$ is unbounded. The upper limit is an arbitrary choice, but close (in Schwarzschild coordinates) to $2M$.

\begin{figure}\centering
\includegraphics[width=0.7\textwidth]{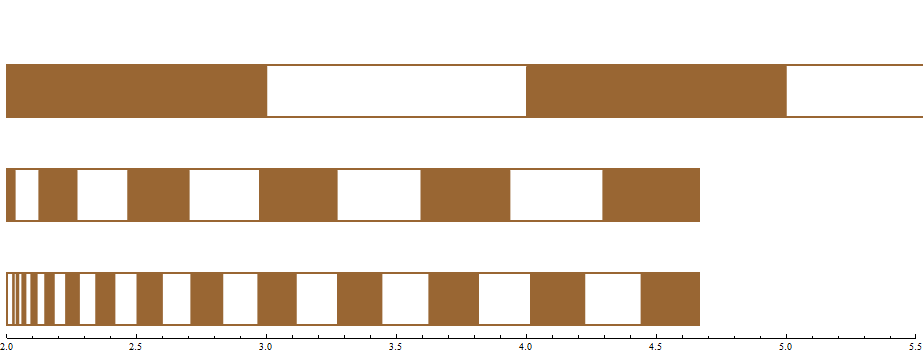}
\caption{A rigid Schwarzschild fishing line or rope as measured in various coordinate systems. In the top image, the rope is stationary in an inertial frame in flat spacetime. The scale shows the rope's regularly spaced one-unit markings, which serve as a visual ruler. The bottom pair is Schwarzschild spacetime with $M=1$, and here the scale represents the $r$-coordinate. In the middle image the rope is stationary and almost touching the event horizon. A proper length of $11$ units fits into a Schwarzschild $r$-coordinate interval of just over $2\frac{1}{2}$ units, due to the different geometry. In the lower image the rope is moving with $K=1$. Just $28$ segments are depicted of its infinite proper length.}
\end{figure}

The various radial length measures compare as follows, for $K\ne 0$:
\eqn{dL>dR>dr,}
since
\eqn{\diff{L}{r}=f(r)>\Schw^{-1}\sqrtSchw=\Schw^{-1/2}=\diff{R}{r}.}
See the diagram. If $K=0$ then $dL=dR>dr$.

\subsection{Application}
The kinematics of the rigid fishing line can be copied by a chain of observers to interpret redshifts as gravitational, or at least adjust the usual interpretation.

\emph{Example 1}. Suppose a pair of observers are stationary on the same radial line, at $r_1$ and $r_2$. Suppose a photon passes between them. Interpret the redshift as purely gravitational.

This example is chosen for familiarity, and to show consistency with the usual interpretation. A fishing line with $K=0$ has particle motion coinciding with the observers. Because this is a rigid system, every particle/observer on the line perceives constant proper-frame distance to its neighbours, and thus the photon redshift observed must be purely gravitational in origin since it can't be a Doppler shift. So the overall redshift is purely gravitational.

\emph{Example 2}. Suppose the \velocity[3] of an emitter at location $r_1$ is $V_1$ with respect to the local Schwarzschild observer there, and speed $V_2$ of a receiver at location $r_2$ relative to its local Schwarzschild observer. Use the convention $V_i>0$ for outward motion, as before. Suppose the emitter sends out a photon, which the emitter measures to have initial energy $E_1$. Firstly, the usual convention is to quantify redshift around what stationary Schwarzschild observers measure, a choice I affirm as the most ``natural''. Thus we convert the velocities to a stationary observer perspective.

At the emission event, the local Schwarzschild observer would measure a Doppler shift between themselves and the emitter. If $V_1>0$, the stationary observer would measure increased energy, which is $\sqrt\frac{1+V_1}{1-V_1}E_1$ from equation~\ref{eqn:DopplerSR}. Later, the photon arrives at $r_2$, experiencing an overall gravitational redshift by an energy factor $\sqrt\frac{1-\frac{2M}{r_1}}{1-\frac{2M}{r_2}}<1$. Then another Doppler correction is needed in converting to the receiver's frame, a factor of $\sqrt\frac{1-V_2}{1+V_2}$ which is a decrease in energy if $V_2>0$. Thus the breakdown is:
\begin{align}
\rm{Doppler} \qquad & \sqrt\frac{1+V_1}{1-V_1}\sqrt\frac{1-V_2}{1+V_2}\\
\rm{Gravitational} \qquad & \sqrt\frac{1-\frac{2M}{r_1}}{1-\frac{2M}{r_2}}
\end{align}
This standard interpretation is the most natural, based on the symmetries of the spacetime. Nevertheless, one can manipulate the interpretation using the equivalence principle.

Suppose for the same setup that a rigid fishing line moves concurrently with the emitter. One can solve for $K=V_1\gamma_1\Schw[r_1]^{1/2}$. There is no initial conversion Doppler shift in this case, as the emitter and other observer at the event have the same motion. Then all along the photon trajectory, frequency shifts are interpreted as purely gravitational by observers moving with the line. At the receiver location $r_2$ the line has speed $V'=\frac{K}{W(r_2)}$ relative to the local stationary observer by equation~\ref{eqn:fishVgamma}. In general this is not the same speed as the receiver, so an endpoint Doppler adjustment is needed, which can be deduced from a combined shift ``back'' to the local Schwarzschild observer then ``forward'' to the receiver. These factors are $\sqrt\frac{1+\frac{K}{W(r_2)}}{1-\frac{K}{W(r_2)}}$ and $\sqrt\frac{1-V_2}{1+V_2}$ respectively. The gravitational component must be whatever energy shift is unaccounted for. Comparing to the above calculations, overall:
\begin{align}
\rm{Doppler} \qquad & \sqrt\frac{1+\frac{K}{W(r_2)}}{1-\frac{K}{W(r_2)}} \sqrt\frac{1-V_2}{1+V_2}\\
\rm{Gravitational} \qquad & \sqrt\frac{1+V_1}{1-V_1} \sqrt\frac{1-\frac{2M}{r_1}}{1-\frac{2M}{r_2}} \sqrt\frac{1-\frac{K}{W(r_2)}}{1+\frac{K}{W(r_2)}}\\
\end{align}

Let's put in some numbers. Suppose $r_1=3M$ and $r_2=4M$, $V_1=\frac{1}{2}$ and $V_2=\frac{1}{5}$. Then the usual gravitational redshift factor is $\sqrt\frac{1-\frac{2M}{r_1}}{1-\frac{2M}{r_2}}=\sqrt\frac{2}{3}$. The $V_1$ special-relativistic Doppler factor is $\sqrt{3}$, and $\gamma(V_1)=\frac{2\sqrt{3}}{3}$. The $V_2$ Doppler factor is $\sqrt\frac{2}{3}$. Then the line matching the emitter has $K=\frac{1}{2}\frac{2}{\sqrt{3}}\frac{1}{\sqrt{3}}=\frac{1}{3}$. Then $W(r_2)=\sqrt\frac{11}{18}$, so $\frac{K}{W(r_2)}=\sqrt\frac{2}{11}$, and after simplifying the $\sqrt\frac{1+\frac{K}{W(r_2)}}{1-\frac{K}{W(r_2)}}=\frac{\sqrt{11}+\sqrt{2}}{3}\approx 1.58$.

The usual natural interpretation is of a Doppler energy shift by $\sqrt{3}\sqrt\frac{2}{3}=\sqrt{2}\approx 1.41$, and a gravitational shift by $\sqrt\frac{2}{3}\approx 0.82$ as above. The overall factor is an energy shift of $\frac{2}{\sqrt{3}}\approx 1.15$, a slight increase or blueshift.

Using the fishing line interpretation, the Doppler component is $\frac{\sqrt{11}+\sqrt{2}}{3} \sqrt\frac{2}{3}=\frac{\sqrt{66}+2\sqrt{3}}{9}\approx 1.29$, and the gravitational component $\frac{\sqrt{22}-2}{3}\approx 0.90$. The overall shift is the same, but there is a significant difference in its interpretation and makeup!

A possible extension to this analysis would be to use an arbitrary winding rate $K$ which is not necessarily matched to either emitter nor receiver.

\subsection{Alternate L coordinates}
The previous coordinate system used a radial coordinate which is comoving for particles on the line. Brotas makes a different choice of radial coordinates, using the proper length along the line, which he labels $\overline x$ and I label $L$. The purpose here is to repeat Brotas' construction and to compare the results.

Let $r_a$ be the Schwarzschild radial coordinate of a winch or fishing reel, as in Brotas. Natural choices such as $r_a=2M$ or $r_a=\infty$ do not work, as the proper length from these to any point within $2M<r<\infty$ would be infinite. One must pick some arbitrary $2M<r_a<\infty$.

For a given particle on the line, $\diff{L}{t}=K$, so $L=Kt+Y$ where $Y$ is the initial $L$\nobreakdash-value at $t=0$. Given $X$ as the starting \coordinate{r} used previously, then
\eqn{Y=\int_{r_a}^X f(r')dr'.}
As for relating $L$ to $r$, $\diff{L}{r}=f(r)$, so
\eqn{L=\int_{r_a}^r f(r')dr'.}
Then
\eqn{t=\frac{L-Y}{K}=\frac{1}{K}\int_X^r f(r')dr'.}
By comparison, Brotas has $t=\frac{L}{K}-\frac{1}{K}\int_{r_a}^rf(r')dr'$ which differs only in the swapping of $X$ and $r_a$. It appears Brotas' setup implies all particles must start at $r=r_a$ at $t=0$. This system has limited flexibility, and certainly his coordinates would not cover all of Schwarzschild spacetime. Yet in fairness these formulae are extremely similar, and perhaps Brotas intends the winch position $r_a$ to be unique for every fishing line, in which case the formulae would coincide. \citet[equation 4]{brotas2006} also has $r=2M+KL-K\tau$, where it remains to be further checked. These formulae are given without proof nor even a hint of their derivation, but merely the statement ``[t]he derivation/justification of these formula[e] is somewhat complicated'' \cite{brotas2006}.

As Brotas states, the Ricci tensors $R_{\alpha\beta}$ must vanish for every coordinate system for the Schwarzschild geometry. However his own coordinate system fails this. Also the curvature scalar $R$ must be identically $0$ for any Schwarzschild coordinate system, but this is not the case for Brotas' metric. 

$\diff{L}{\tau}=\diff{L}{t}\diff{t}{\tau}=Kf(r)$. So in the \coordinate{L}s, a given particle has \velocity{}
\eqn{u^\mu=(1,Kf(r),0,0).}

Brotas claims the $r=2M$ pathology of the Schwarzschild coordinates is moved to $r=\frac{2M}{1+K^2}$, but I do not find this result.

\subsection{Conclusion}
The description of a Born-rigid fishing line supports flexible interpretations of redshifts in the Schwarzschild geometry.

\iffalse
INTEGRATE:
\subsubsection{Mixed redshift interpretation - part Doppler, part gravitational}
Say we require $x\%$ gravitational and $100-x\%$ Doppler breakdown.

We may at first imagination the need to create a smoothly varying system, where every infinitesimal shift is precisely of this proportion. In general, our system would not move rigidly. Rather a compromise between free falling and rigidity.

But could one trick be to simply have the Doppler component at the endpoints, with a rigid (gravitational) series in between?

There will be limits. e.g. expanding FRW universe, limit on rigidity. Depending on the situation, may be able to achieve $100\%$ gravitational interpretation, $95\%$, or $10\%$ etc. Will always be able to achieve $0\%$ gravitational redshift.
\fi

\section{Rigid rotation in Schwarzschild spacetime}
I derive the dynamics of a light rigid object rotating on its axis, in Schwarzschild geometry. The object is a spinning solid ball, with a black hole inside it at the centre, with all of the ball's matter outside the event horizon. In general, the particles do not follow geodesics. This research is original, and demonstrates that accelerated \dimensional{3} rigid motion does exist in curved spacetimes. This may also be used to interpret redshifts as gravitational, within the Schwarzschild geometry.

Orient Schwarzschild coordinates so that the rotation axis corresponds to the $\theta=0$ direction. Then the ``latitude'' or \coordinate{\theta} will remain fixed for a given particle, whereas the ``longitude'' or \coordinate{\phi} will change. Suppose the rotation rate is
\eqn{\diff{\phi}{t}=\Omega}
for all particles, employing the universality of coordinate time in an attempt to orchestrate the motion. Particles travel on circular paths, however these are not geodesics unless (i) the particle is on the ``equator'' and (ii) $\Omega$ matches the freefall rate, and in particular $r>3M$. Thus imagine the structure held together by either (i) its large spatial extent, extending over $180^\circ$, and balanced so that the whole structure moves in ``freefall'' even though the individual particles do not follow geodesics, or (ii) a rocket booster attached to every small clump of particles.

\subsection{Proper metric}
Assume the object is small enough or light enough not to affect the metric. A given particle's \position{} is
\eqn{x^\mu=(t,r,\theta,\phi_0+\Omega t),}
where $\phi_0$ is the \coordinate{\phi} at $t=0$. One can compute the \velocity{} as
\eqn{u^\mu=u^t\left(1,0,0,\Omega\right),}
where the time component is
\eqn{u^t=\left(1-\frac{2M}{r}-r^2\Omega^2\sin^2\theta\right)^{-1/2}.}
The proper metric is, \emph{in Schwarzschild coordinates}:
\eqn{\gamma_{ij}=\begin{pmatrix}
\Schw^{-1} & 0 & 0 \\
0 & r^2 & 0 \\
0 & 0 & \Schw(u^t)^2 r^2\sin^2\theta \\
\end{pmatrix}}

Note that $r$, $\theta$, $\Omega$, and $u^t$ are all constant for a given particle. Hence the proper metric is constant over time, and the system is rigid in all \dimension{3+1}s. The $\gamma_{rr}$ and $\gamma_{\theta\theta}$ components are unchanged from the Schwarzschild metric. But the motion is in the \direction{\phi}, and so the proper ruler distance is increased, relative to $g_{\phi\phi}=r^2\sin^2\theta$.

There is a dependency on $\Omega$. This means we cannot change the rotation rate and maintain Born-rigidity. The ``proper geometry'' is not Schwarzschild for $\Omega\ne 0$, but nonetheless we can in principle assemble the object while in rotation. These properties are identical for the spinning disc in Minkowski space.

\subsection{Limited extent}
There is a limit to the spatial extent of the rotating object. We require
\eqn{u^t=1-\frac{2M}{r}-r^2\Omega^2\sin^2\theta>0.}
A ``sufficient'' (but not ``necessary'') condition is $1-\frac{2M}{r}-r^2\Omega^2>0$, since $0\le\sin^2\theta\le 1$. If the structure doesn't extend to the equator, there will be a looser bound. In the extreme case, particles at the poles don't move at all. Rearranging the equation,
\eqn{\Omega<\frac{1}{r}\sqrtSchw.}
Differentiating, the maximum of this function occurs at $r=3M$, which has value $\frac{1}{3\sqrt{3}M}\approx 0.19M^{-1}$. Hence there is a global limit $\Omega<\frac{1}{3\sqrt{3}M}$, but the body will be limited by its weakest link.

\begin{wrapfigure}{r}{0.5\textwidth}
  \centering
    \includegraphics[width=0.48\textwidth]{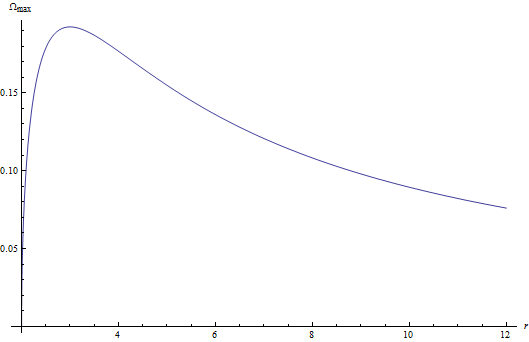}
  \caption{The maximum (``supremum'') spin rate $\Omega$ for particles at the equator, which is $\frac{1}{r}\sqrtSchw$ for $M=1$. At large $r$, a particle must move quickly through space to keep up --- think classically where we would have $v=r\Omega$. At small $r>2M$, time dilation is extreme and so even a low coordinate velocity seems much faster to a particle there. Also the acceleration must be increasingly angled outwards from the black hole, so there is less leeway for revolution.}
\end{wrapfigure}

\subsubsection{Application: Orbiting planets}
An orbiting body such as a planet will measure an increased orbit length due to length-contraction of its rulers, at least from the perspective of Schwarzschild observers. Assume the orbit is circular, then $\theta=\frac{\pi}{2}$ and the rotation rate is given by Kepler's law \cite[p200]{hartle2003}
\eqn{\Omega^2=\frac{M}{r^3}.}
(Recall freefall was not assumed prior to this point.) This also adds the restriction $r>3M$. Then $u^t$ becomes $(1-\frac{3M}{r})^{-1/2}$ as in \cite[p201]{hartle2003}. Then $\gamma_{\phi\phi}=r^2\frac{1-\frac{2M}{r}}{1-\frac{3M}{r}}$. The ruler length interval is $\sqrt{\gamma_{\phi\phi}}$, with total orbit length
\eqn{\sqrt\frac{1-\frac{2M}{r}}{1-\frac{3M}{r}}\cdot 2\pi r.}

For the Earth, $r\approx 1.5\times 10^8km$, and the Sun's mass is $M\approx 1.5km$. The expansion factor is $\approx 1+10^{-8}$, an increase of $\approx 9km$.

For a millisecond pulsar, which is both extremely dense and rotating extremely quickly, the length-contraction effects are likely extreme --- relative to, say, the Kerr metric. Does the literature account for the increased density due to length-contraction?

\subsection{Redshift}
This rigid system gives the framework or coordinate system for interpreting a certain class of redshifts as gravitational shifts. One must simply choose the value of $\Omega$, then plot a null worldline through the system. The particles encountered on this worldline all measure constant distances to their neighbours, and thus interpret the redshift as purely gravitational. Due to our setup, there are no restrictions on the path as being radial etc., however limits due to endpoint motions.

\subsection{Application to Ehrenfest paradox --- the spinning disc}
The Ehrenfest paradox of the rigidly spinning $2$D disc in Minkowski space is a special case with $M=0$ and $\theta=\frac{\pi}{2}$. Then the proper metric reduces to

\eqn{\gamma_{ij}=\begin{pmatrix}
1 & 0 & 0 \\
0 & r^2 & 0 \\
0 & 0 & (u^t)^2 r^2 \\
\end{pmatrix}}
where $u^t$ simplifies to
\eqn{u^t=\left(1-r^2\Omega^2\right)^{-1/2},}
so can reexpress $\gamma_{\phi\phi}=\frac{r^2}{1-r^2\Omega^2}$.

The \component{\theta} can be ignored, since there is no extent in this direction. An alternative is to use cylindrical coordinates $(r,\phi,z)$ along with $t$. This concurs with the result $\diag(1,\frac{r^2}{1-\omega^2 r^2},1)$ of \citet[\S2.5]{dewitt2011}, after rearranging the coordinate order.

Einstein realised the spinning disc would have non-Euclidean geometry, and for him this was a heuristic motivation in developing general relativity.

%We change to cylindrical coords $r,\phi,z$. Then $\gamma_{zz}=\pdiff{x^\theta}{x^z}\gamma_{\theta\theta}
%z=r\cos\theta$
%(Alternatively, just recalculate. There is no motion in the \direction{z}, thus $P_{zz}=g_{zz}=1$. Define $z=0$ as the plane. $\gamma_{zz}=P_{\alpha\beta}\pdiff{x^\alpha}{X^z}\pdiff{x^\beta}{X^z}$
%Think of it as a cylinder? Then $x^z=X^z$, so $g_{zz}=1$. (Although then we don't usually have $\theta=\frac{\pi}{2}$. ...)
%Or as $2D$ disk, just two spatial dimensions, then...

\section{Gravitational redshifts --- the FLRW galactic cable}
\label{sec:galaxycable}

\begin{wrapfigure}{l}{0.3\textwidth}
  %\centering
  \includegraphics[width=0.28\textwidth]{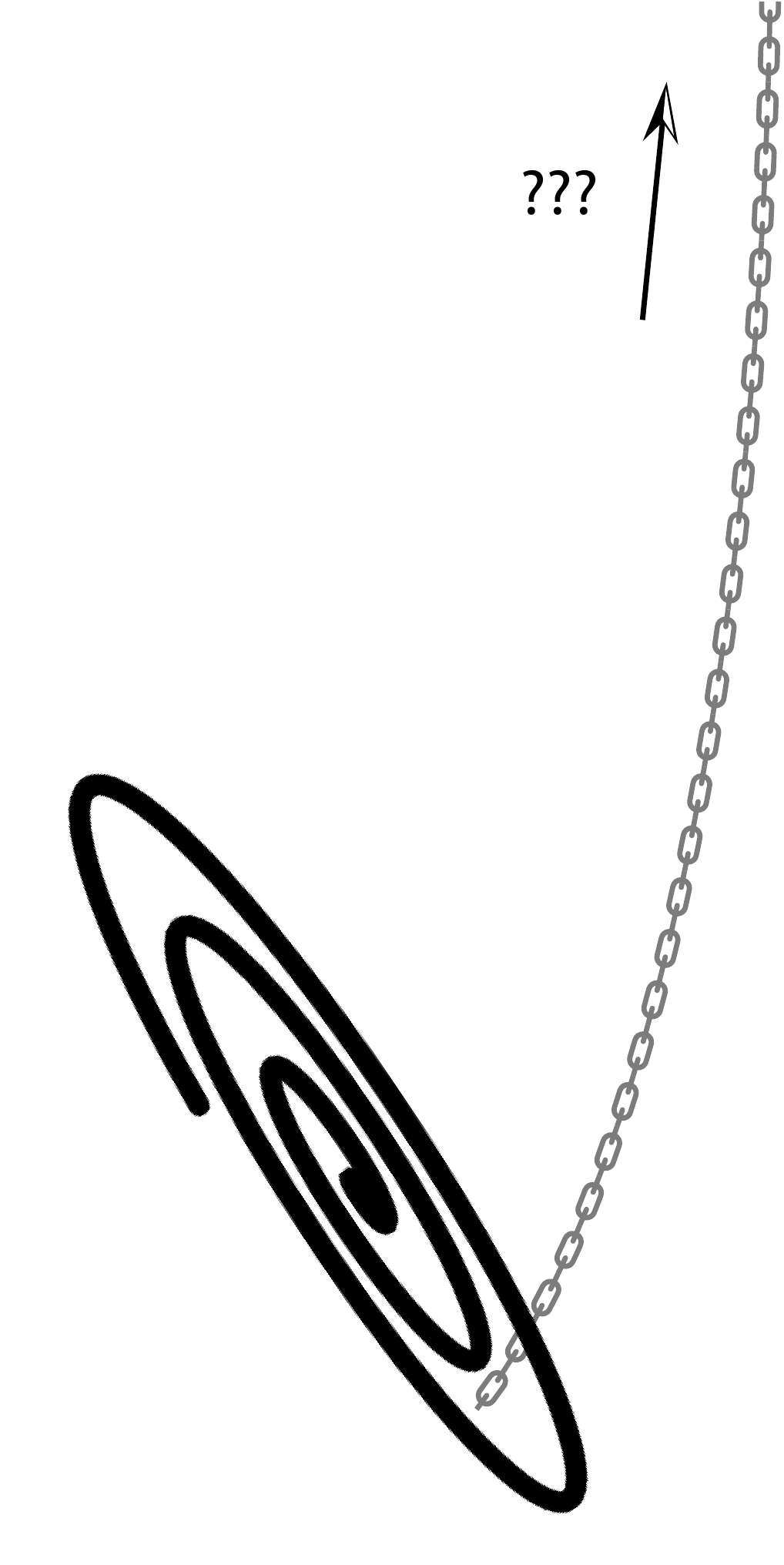}
  \caption{A rigid cable is tethered to the Milky Way galaxy. How far could it reach?}
\end{wrapfigure}

An interesting question is how far a rigid cable could reach through space before expanding space destroys it. This is interesting for two reasons. Firstly, using rigidity to define a fixed distance allows us to interpret cosmological redshifts as gravitational in origin. Secondly, when considering whether space ``expands'' it is pertinent to ask whether it is possible to make an infinitely long rigid rod. If not, it would seem that space really does expand, because something has to be pulling the rod apart, and there is no way to avoid that interpretation. This would be similar to the \emph{ergosphere} of a spinning black hole, in which a stationary observer is impossible due to ``spinning'' of space around the spinning black hole.

We investigate how far a rigid cable could reach through space before expanding space destroys it. As discussed previously, rigidity allows gravitational redshift interpretations. Also illustrative of expansion of space, in fact this example adds evidence to that interpretation. This section includes original results.

To begin with, I assume local proper velocities $v=-HD$ exactly counteract the Hubble speed. While this is insufficient in general, it is a good starting place.

\begin{lemma}
The changing ``proper distance'' $\dot D$ from the origin equals the local peculiar velocity of the distant object.
\end{lemma}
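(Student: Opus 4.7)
The plan is to differentiate the standard FLRW proper distance $D(t)=R(t)\,\chi(t)$ along the worldline of the distant object, whose comoving coordinate $\chi(t)$ is time-dependent precisely because the object has a peculiar motion relative to the local Hubble flow. The product rule yields $\dot D = \dot R\,\chi + R\,\dot\chi = HD + R\dot\chi$, splitting the total rate of change of proper distance into a Hubble-flow term $\dot R\chi=HD$ and a kinematic term $R\dot\chi$. This is essentially a one-line computation, so the work lies in interpreting each piece, not in manipulating equations.

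Next I would identify $R\dot\chi$ as the local peculiar velocity. In the instantaneous rest space of a Hubble comover at the object's location the FLRW metric with $dt=d\theta=d\phi=0$ gives a radial proper-distance element $R\,d\chi$, and since cosmic time $t$ is that comover's proper time, the ordinary speed of the object measured in the comover's local inertial frame is exactly $R\dot\chi$. I would back this up with the tetrad-style check using the 4-velocity $u^\mu=(\gamma,V\gamma/R,0,0)$ from equation~(\ref{eqn:FLRW4velocity}) contracted with the Hubble comover's 4-velocity $u_{\rm Hubble}^\mu=(1,0,0,0)$: the identity $\gamma=-\fvec u_{\rm Hubble}\cdot\fvec u$ of equation~(\ref{eqn:relativespeed}) recovers $V=R\dot\chi$ as the local ordinary speed, confirming $v_{\rm pec}=R\dot\chi$.

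To match the precise wording of the lemma I would then invoke the ansatz stated immediately above it, namely that on the rigid cable the peculiar velocity is taken as $v=-HD$ so as to counteract the Hubble recession. Under this setup the Hubble contribution is absorbed into the cable's prescribed motion and the lemma's "changing proper distance" refers to the residual kinematic piece, which by the computation above is $R\dot\chi=v$. The main obstacle is therefore conceptual rather than computational: I must state carefully which part of $\dot D$ is being labelled the "local peculiar velocity" and why, so that the equality in the lemma is not confused with the more familiar decomposition $\dot D = HD + v_{\rm pec}$ that I actually derive. Spelling out this identification of Lagrangian rate of change versus locally measured speed, with the 4-velocity argument as support, is the real content of the proof.
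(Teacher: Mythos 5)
Your proposal is correct and takes essentially the same route as the paper: both arguments rest on the product-rule split $\dot D=\dot R\chi+R\dot\chi=HD+R\dot\chi$ and on identifying $R\dot\chi$ with the locally measured speed via equation~\ref{eqn:FLRW4velocity} (i.e.\ $\diff{\chi}{t}=v/R$, so $v=R\dot\chi$). The only cosmetic difference is direction: the paper starts from the rigidity requirement $\dot D=0$ and solves for the required peculiar velocity $v=-HD$, whereas you first establish the general decomposition and then impose that ansatz.
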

\begin{proof}
Rigorously, we seek $\dot D=0$, where dots refer to differentiation by cosmic time. $\dot D=\dot R\chi+R\dot\chi$ \cite[\S2]{davis+2003}, so $\dot\chi=-\frac{\dot R}{R}\chi=-H\chi$. This is a coordinate speed, but what speed would a local Hubble comover measure? From equation~\ref{eqn:FLRW4velocity}, $\diff{\chi}{t}=\frac{v}{R(t)}$, so $v=-\dot R\chi=-HD$ is the speed relative to the Hubble LIF.
\end{proof}

\subsection{Cable length for constant Hubble parameter}
Suppose $H(t)$ is constant over time. This implies either $H(t)=0$, a static universe; or a universe with cosmological constant and no other matter-energy, which grows exponentially and is known as \emph{de Sitter space}. The former is trivial, so we assume the latter. This could be a model of inflation, or of the distant future of our universe.

Suppose the cable has one end anchored with the local Hubble flow, say to the Milky Way galaxy. Suppose it lies in a straight line (geodesic), is Born-rigid, and can withstand any tension so long as its particles remain subluminal. (To be precise, to achieve Born-rigidity we assume adjustments to its displacement are not causally connected, but caused by a series of pre-programmed rocket boosters all along the cable. See section~\ref{sec:bornrigid}.)

Now, the change in proper distance $v\equiv\dot D$ is given by the Hubble flow $v=HD=\dot R\chi$. Hence for the cable to remain rigid its particles must move against the flow as $v=-HD$ to maintain equilibrium. Note, this will \emph{only} result in a rigid system if $H(t)$ is constant.

Now denote the proper-length along the rope by $s$, where $s=0$ at the anchor. This is a Lagrangian- or material coordinate, which would elsewhere be denoted $X$, and is fixed for any given particle of the rope. In the local Hubble frames, the rope is length-contracted, and so more of it will fit into a given ``proper distance'' $D$. From equation~\ref{eqn:FLRWlengthincrease},
\begin{align}
\diff{D}{s} &= \gamma^{-1}=\sqrt{1-v^2}=\sqrt{1-H^2D^2}\\
\int ds &= \int\frac{dD}{\sqrt{1-H^2D^2}}\\
s &= \frac{1}{H}\asin(HD),
\end{align}
using the initial condition $s=0$ when $D=0$. Locally, particle proper motion cannot exceed light speed, so we have the following:
\eqn{\label{eqn:Hubblesphere}\abs{v}<1 \qquad\qquad HD<1 \qquad\qquad D<H^{-1} \qquad\qquad \chi<\dot R^{-1}.}
At this maximum (technically ``supremum''), we have
\eqn{s_{\rm max}=H^{-1}\asin(1)=\frac{\pi}{2}H^{-1}.}
$H^{-1}$ is the Hubble radius. This means the longest possible cable, if one end is anchored with the Hubble flow, is $H^{-1}$ in ``proper distance'', but longer than this by a factor of $\frac{\pi}{2}\approx 1.6$ in terms of its proper length --- that is, the amount of cable.

\emph{Example:} In our universe, roughly $70\%$ of the matter-energy content is dark energy, so the above calculations (which are only valid for $100\%$ dark energy) seem a reasonable approximation. At the present, $H=H_0$ which in geometrical units is $\frac{H_0}{c}$; writing this in SI units gives $\approx 10^{-26}m^{-1}$. Thus the Hubble radius is $H_0^{-1}\approx 10^{26}m\approx 14Glyr$. A cable stretching to both horizons would have ``proper distance'' $D=2H_0^{-1}$, and proper length $\pi H_0^{-1}\approx 4\times 10^{26}m\approx 40Glyr$. Towards the endpoints, even this idealised cable can barely resist the expansion. This test scenario gives a sense that space really ``is'' expanding. In Newtonian cosmology by contrast, such a cable could have any extent.

This is not just a failure of our metric. (Incidentally, one extension to this problem would be the effect of the cable on the metric --- which here assumed inconsequential). What is supplying the force? 
In flat spacetime e.g. Rindler case, acceleration is the force ripping the system apart (if were extended too far). Here, endpoint has no acceleration, however due to expansion of space, i.e. curvature of spacetime --- the change in how the LIFs are ``joined'' together.

We can rearrange to get:
\eqn{\label{eqn:cableD}D=\frac{1}{H}\sin(sH) \qquad\qquad\qquad \chi=\dot R^{-1}\sin(sH).}
Note that for $sH\ll 1$, $D\approx s$ as expected since the counteracting Hubble velocities are small and thus length-contraction is minimal. Also in the limit $H\rightarrow 0$ we have $D\rightarrow s$ (using L'H\^opital's rule), as expected. Even though $\gamma$ increases without bound as $D\rightarrow H^{-1}$, the total length remains finite, in contrast to the ``fishing line'' case.

\subsection{Application}
The galactic cable can be used to interpret redshifts as a certain composite of Doppler and gravitational effects, by setting up observers or a coordinate system with the same kinematics. Firstly, perform a Doppler shift from a comoving galaxy frame to the frame corresponding to the endpoint of a rigid cable at the same event. Then, the infinitesimal redshifts all along the cable are interpreted as gravitational. There is no Doppler shift adjustment at home, the coordinate origin, because the cable is stationary with respect to the origin.

Suppose we have a dark energy dominated universe, and a comoving galaxy at $\chi$ emits a photon of wavelength $\lambda_0$ towards the coordinate origin. If $\chi<\dot R^{-1}$ then the galaxy is within the Hubble radius (equation~\ref{eqn:Hubblesphere}), and we could set up a rigid cable stretching from the origin to this point, or else a matching coordinate system or chain of observers. The endpoint at $\chi$ has local speed $\dot R\chi$ relative to the galaxy.

The local cable frame will observe a frequency $(1+z_{\rm Doppler})\lambda_0$ where the factor is
\eqn{1+z_{\rm Doppler}=\sqrt\frac{1+\dot R\chi}{1-\dot R\chi}}
by equation~\ref{eqn:DopplerSR}. Then, for the entirety of the photon's journey, the cable particles interpret the infinitesimal redshifts as gravitational, because they maintain constant proper-frame distance from each-other. There is no Doppler shift adjustment at the home galaxy (coordinate origin), because the cable is comoving with the flow there. Thus, this observer or coordinate system interprets the overall redshift as made up of a mixed $1+z_{\rm Doppler}$ component, and a gravitational component, which is nonzero in general \cite[\S3]{davis+2003}.

This shows the redshift interpretation is flexible. For an emitter galaxy beyond the Hubble sphere, light could never reach the home galaxy in a de Sitter universe.

For faster than exponential expansion, the event horizon is smaller than the Hubble sphere. So if this equation still held, you could make a rigid rod whose end we would never see! However this is a more general case and must be solved from the system of differential equations which follows.

\subsection{General case}
If $H$ is decreasing, the particles will require less peculiar velocity to counteract the expansion, so their length-contraction effects will diminish, and so the cable will expand outwards (as measured by the ``proper distance'' of Hubble flow comovers). The competing requirements of moving against the flow, and partial acquiescence to compensate for changing $H$, result in a feedback mechanism. This would be sensitive to the initial setup, although quite possibly might form a stable equilibrium also.

Using the same variables as previously, but with $D$ and $v$ now each functions of $s$ and $t$:
\eqnboxed{\pdiff{D}{s}=\gamma^{-1}=\sqrt{1-v^2},}
which is nothing more than local length-contraction. We also have
\eqnboxed{\pdiff{D}{t}=HD+v}
This follows from $\dot D=\dot R\chi+R\dot\chi=HD+v$ since $\dot\chi=\diff{\chi}{t}=\frac{v}{R(t)}$ which follows from the \velocity{} in equation~\ref{eqn:FLRW4velocity}. We also have initial conditions including $D(t,0)=0$ and $v(t,0)=0$. In the varying $H$ case, $v\ne-HD$ in general, although we would expect it to remain close to this value if $H$ changes only gently.

We could eliminate $v$ to give
\eqnboxed{\pdiffsquared{D}{s}+\left(\pdiff{D}{t}-HD\right)^2=1.}
This complements equation~\ref{eqn:rigid1D} for Minkowski space which does not have the ``$-HD$'' term.

I have not analysed these equations further. We likely need to specify further initial conditions, such as the initial velocity of the cable as it was unreeled. For instance, it could be assumed to unroll from Earth at a constant rate $K$, analogous to the variable used for the Schwarzschild fishing line. Assuming there is flexibility in the system, we could go further and consider acceleration of an endpoint, analogous to the Rindler chart. This would achieve great flexibility in the interpretation of cosmological redshifts.

\subsection{Further exploration}
One could compute the forces on the cable, or even consider elasticity and the propagation of waves along it. Could one set up a ``galaxy hook'', by which I dub a cable which would ``hook'' to the Hubble flow and tow a spacecraft? This is meant as an analogy to ``skyhooks'', which are a speculative concept for escaping Earth's gravitational pull by a long rotating cable.

\section{Conclusions}
There are lines of evidence to interpret the expansion of the universe as both ``stretching of space'' and motion through space. Parallels with the Milne model and Newtonian gravity might encourage the kinematic view, as does the natural interpretation of cosmological redshifts as purely Doppler. On the other hand, the superluminal recession velocity of distant galaxies, and the limited extent of a galactic cable suggest the expansion cannot be kinematic. Overall, the stretching of space is indeed a helpful metaphor, though it has benefited from critiques of misinterpretations. It does not describe any novel physical law, but is simply a manifestation of curved spacetime. It might be said the expansion is locally kinematic but not globablly so, at least there is no global inertial frame.

The nature of redshifts is highly flexible in interpretation, however in the Schwarzschild and FLRW cases symmetries of the matter and spacetime single out the standard interpretation as most natural. Distances measured by an observer are not given simply by the spacetime metric in general, but depend also on the observer's motion. Length-contraction does occur in general relativity, which has major conceptual consequences including the need to reevaluate certain astronomical calculations. Rigid acceleration of extended \dimensional{3} bodies is possible in curved spacetimes, in specific cases.

%TABLE:
%What ``expanding space'' \emph{is}
%What ``expanding space'' \emph{isn't}

%We examine common metaphors or analogies used in the teaching of cosmology, particularly of expanding space. We point out both potential misunderstandings, as well as the specific points each analogy illustrates well. We conclude that none ought to be withdrawn, but some are more helpful than others, and also that mixing of certain metaphors can lead to serious misunderstanding. Though we provide rigorous and specific arguments in many cases, sometimes it comes down to a choice of language which we leave to the reader's discretion.

%Also the description of ``Hubble \emph{flow}'', where the universe contents are modeled as a ``perfect \emph{fluid}'' 

%They affirm the use of metaphor, concluding ``cosmic expansion should be considered a teaching and conceptual aid, rather than a physical theory with an attendant clutch of physical predictions.''\citet{francis2007}

\newpage
\pagenumbering{Alph}
\renewcommand{\thepage}{- \Alph{page} -}

\section{Appendix A: Rigidity distance}
I propose a new distance measure ``rigidity distance'' or ``ruler distance'' as follows. Given the worldline of a particle as the starting event, extend a Born-rigid ruler out to the specified end event, if possible. Then the rigidity distance is the proper length of the ruler. The required inputs are:
\begin{itemize}
\item motion of start particle (instantaneous position, velocity, and acceleration)
\item a time slicing $t=const$ (default assumed in FLRW, Schwarzschild)
\item position of end particle, at least at same time $t$.
\end{itemize}

This assumes the ruler follows a spacetime geodesic. It is highly dependent on the start particle worldline. Distance is relative to the frame it is measured in, but there may be an orthogonal or other natural frame as discussed previously in section~\ref{sec:naturalcanonical}, or such a frame may be specified directly. There are limits to the extent of a rigid body, depending on the acceleration of the start particle and the spacetime geometry. In some cases such an extension is impossible, for instance from inside a black hole event horizon to outside.

The motion of the far end of the ruler will not match the motion of the destination location or particle, in general. Thus the measure is asymmetric in the sense that if is between two particles, it depends on the motion of the chosen start particle, but not the motion of the other. For nice properties, this measure has both existence and uniqueness.

One application would be to the ``tethered galaxy'' scenario, where in this case the tether is taken more literally.

\section{Appendix B: Schwarzschild distance integrals}
The Schwarzschild radial ``proper distance'' $dR=\Schw^{-1/2}dr$ may be integrated via the substitution $r=z^2$ \citep[p2-28]{taylorwheeler2000} or computer algebra to obtain a primitive function
\eqn{\sqrt{r(r-2M)}+2M\ln\left(\sqrt{r}+\sqrt{r-2M}\right).}
Then the proper distance is given by evaluating between some $r_1$ and $r_2$.

The proper length interval of fishing line is $f(r)dr=\ffish dr$, as measured by the local Schwarzschild observer. The following closed-form primitive function $F(r)$ satisfying $\diff{F(r)}{r}=f(r)$ follows from computer algebra plus subsequent manual simplifying:
\eqn{F(r)=rW(r)-2KM\ln\left(1+2Kf(r)+2K^2\Schw^{-1}\right)+\frac{M(1+2K^2)}{\sqrt{1+K^2}}\ln\left(r\left(\frac{M}{r}+W(r)^2+\sqrt{1+K^2}W(r)\right)\right).}
As above, the length is given by evaluation between some $r_1$ and $r_2$.

\section{Appendix C: Counteracting coordinates}
For an expanding FLRW universe, one could define a new coordinate system which attempts to ``remove'' the expansion in which comoving coordinates remain at constant ``proper distance'' $D$. The particles could be conceived to be tied by rigid ropes, or affixed with rocket boosters, to maintain constant $D$, as they are not freefalling in general. These coordinates are inspired in part by an online physics forum post, for which I have lost the source. But the derivation is done independently.

Actually this coordinate system fails to achieve its purpose, except in the specific case of a Hubble constant $H$ fixed over time. The model fails to consider length-contraction (as measured by Hubble comovers) due to motion against the Hubble flow. It is included primarily as a negative result, to argue against its validity should it be proposed by others.

The ``proper distance'' from $\chi$ to the origin is $R\chi$. Define a new radial coordinate by $X=R(t)\chi$, and keep the other coordinates the same. Taking the total derivative,
\eqn{dX=\dot R\chi dt+Rd\chi.}
Rearranging, $Rd\chi=dX-\dot R\chi dt$, and so
\eqn{R^2d\chi^2=dX^2+\dot R^2\chi^2 dt^2-2\dot R\chi dXdt.}
Substituting the Hubble parameter $H\equiv\dot R/R$,
\eqn{R^2d\chi^2=dX^2+H^2X^2dt^2-2HXdXdt.}
Then $\chi$ is eliminated from the right-hand side. Now substitute this expression into the FLRW metric:
\begin{align}
ds^2 &= -dt^2+dX^2+H^2X^2dt^2-2HXdtdX+R^2S_k^2d\Omega^2\\
&= -(1-H^2X^2)dt^2-2HXdtdX+dX^2+R^2S_k^2d\Omega^2.
\end{align}

Note the coordinate degeneracy when $g_{tt}=-(1-H^2X^2)=0$, which occurs when $X=\frac{1}{H}$, that is at the Hubble radius. This is to be expected, because the Hubble radius is where the Hubble recession velocity is $c$, so a particle attempting to stay at constant proper distance from the origin would require a proper velocity of $c$ inwards. So this coordinate singularity corresponds to a true physical limitation of the setup.

The energy-momentum tensor $T$ will transform accordingly.

Also the particles experience increasing time dilation with distance from the origin, as measured by Hubble flow observers who proper time is $t$. For a particle stationary in the new coordinates, $dX=d\theta=d\phi=0$, so its \fourvel $\fvec u$ satisfies
\eqn{-1=g(u,u)=-(1-H^2X^2)(u^t)^2,}
so
\eqn{u^t=(1-H^2X^2)^{-1/2}.}
This relates the coordinate and proper times via $u^t\equiv\diff{t}{\tau}$. So the harder they attempt to resist the expansion, the more time-dilated they are relative to comoving objects

Na\"ively it may seem this achieves the goal of setting up observers at constant distance. However this metric distance does not consider length-contraction effects, in contrast to the ``galaxy cable'' of section~\ref{sec:galaxycable}.

%[I just had a thought: we could compare the coordinate limitations to the Schwarzschild coords! There, coord singularity at event horizon. But not a spacetime singularity. (Same situation as for FLRW). Schwarzschild coords naturally describe stationary observers, but this fails for $r\le 2M$. In both cases, this demonstrates there are limits for the extent of a rigid body. Limits to resisting gravity (curvature of spacetime)].

%[Transform the stress-energy tensor! Per David's comments. Can do this by tensor transformation law (easiest). Test by Einstein field equations. Also argue from first principles.]

%Tam: What redshift would you measure for an object at position $X$? What is the density as a function of $X$? Does constant $X$ mean constant ``distance''?

\newpage
%\section{Bibliography}
\bibliographystyle{hapj}
\bibliography{biblio}%{}

\begin{thebibliography}{72}
\expandafter\ifx\csname natexlab\endcsname\relax\def\natexlab#1{#1}\fi

\bibitem[{{Beig}(2004)}]{beig2004}
{Beig}, R. 2004, in Proceedings of the 7th Hungarian Relativity Workshop, ed.
  I.~{R{\'a}cz}

\bibitem[{{Beig} \& {Schmidt}(2003)}]{beigschmidt2003}
{Beig}, R., \& {Schmidt}, B. 2003, Classical and Quantum Gravity, 20, 889,
  gr-qc/0211054

\bibitem[{{Bell}(1976)}]{bell1976}
{Bell}, J. 1976, Progress in Scientific Culture, 1

\bibitem[{{Bini}(2014)}]{bini2014}
{Bini}, D. 2014, in General Relativity, Cosmology and Astrophysics, ed.
  J.~{Bi{\v c}{\'a}k} \& T.~{Ledvinka} (Springer), 67--90

\bibitem[{Born(1909)}]{born1909}
Born, M. 1909, Annalen der Physik, 335, 1,
  \href{http://en.wikisource.org/wiki/Translation:The_Theory_of_the_Rigid_Electron_in_the_Kinematics_of_the_Principle_of_Relativity}{English
  translation} by WikiSource

\bibitem[{{Brotas}(2006)}]{brotas2006}
{Brotas}, A. 2006, ArXiv e-prints, gr-qc/0609005

\bibitem[{{Brotas} \& {Fernandes}(2003)}]{brotasfernandes2003}
{Brotas}, A., \& {Fernandes}, J. 2003, ArXiv e-prints, physics/0307019

\bibitem[{{Brown}(2013)}]{brown2013}
{Brown}, A. 2013, Physical Review Letters, 111, 211301

\bibitem[{Brown(2018)}]{brown2018}
Brown, H. 2018, in Beyond Einstein: Perspectives on geometry, gravitation, and
  cosmology in the twentieth century, ed. D.~Rowe, T.~Sauer, \& S.~Walter
  (Springer), 51--66

\bibitem[{{Bunn} \& {Hogg}(2009)}]{bunnhogg2009}
{Bunn}, E., \& {Hogg}, D. 2009, American Journal of Physics, 77, 688

\bibitem[{{Carroll}(2004)}]{carroll2004}
{Carroll}, S. 2004, {Spacetime and geometry: An introduction to general
  relativity} (Addison Wesley)

\bibitem[{{Carter} \& {Quintana}(1972)}]{carterquintana1972}
{Carter}, B., \& {Quintana}, H. 1972, Royal Society of London Proceedings
  Series A, 331, 57

\bibitem[{Catt\`aneo(1958)}]{cattaneo1958}
Catt\`aneo, C. 1958, Il Nuovo Cimento, 10, 318

\bibitem[{{Chodorowski}(2005)}]{chodorowski2005}
{Chodorowski}, M. 2005, Publications of the Astronomical Society of Australia,
  22, 287, astro-ph/0503690

\bibitem[{{Chodorowski}(2011)}]{chodorowski2011}
------. 2011, Monthly Notices of the Royal Astronomical Society, 413, 585,
  0911.3536

\bibitem[{{Davis}(2004)}]{davis2004}
{Davis}, T. 2004, PhD thesis, University of New South Wales

\bibitem[{{Davis} \& {Lineweaver}(2004)}]{davislineweaver2004}
{Davis}, T., \& {Lineweaver}, C. 2004, Publications of the Astronomical Society
  of Australia, 21, 97, astro-ph/0310808

\bibitem[{{Davis} {et~al.}(2003){Davis}, {Lineweaver}, \& {Webb}}]{davis+2003}
{Davis}, T., {Lineweaver}, C., \& {Webb}, J. 2003, American Journal of Physics,
  71, 358

\bibitem[{{de Felice} \& {Clarke}(1990)}]{defeliceclarke1990}
{de Felice}, F., \& {Clarke}, C. 1990, {Relativity on curved manifolds}
  (Cambridge)

\bibitem[{{DeWitt}(2011)}]{dewitt2011}
{DeWitt}, B. 2011, Bryce DeWitt's lectures on gravitation, ed. by S.
  Christensen (Springer)

\bibitem[{{Epp} {et~al.}(2009){Epp}, {Mann}, \& {McGrath}}]{epp+2009}
{Epp}, R., {Mann}, R., \& {McGrath}, P. 2009, Classical and Quantum Gravity,
  26, 035015, 0810.0072

\bibitem[{{Eriksen} {et~al.}(1982){Eriksen}, {Mehlen}, \&
  {Leinaas}}]{eriksen+1982}
{Eriksen}, E., {Mehlen}, M., \& {Leinaas}, J. 1982, Physica Scripta, 25, 905

\bibitem[{{Fayngold}(2010)}]{fayngold2010}
{Fayngold}, M. 2010, ArXiv e-prints, 1001.0088

\bibitem[{{Ferrarese} \& {Bini}(2008)}]{ferraresebini2008}
{Ferrarese}, G., \& {Bini}, D. 2008, {Introduction to relativistic continuum
  mechanics} (Springer)

\bibitem[{{Gale}(2014)}]{gale2014}
{Gale}, G. 2014, in The Stanford Encyclopedia of Philosophy, ed. E.~{Zalta}
  (Stanford)

\bibitem[{{Gautreau} \& {Hoffmann}(1978)}]{gautreauhoffmann1978}
{Gautreau}, R., \& {Hoffmann}, B. 1978, Physical Review D, 17, 2552

\bibitem[{{Giulini}(2010)}]{guilini2010}
{Giulini}, D. 2010, in Minkowski spacetime: A hundred years later, ed.
  V.~Petkov (Springer), 83--132

\bibitem[{{Griffiths} \& {Podolsk{\'y}}(2009)}]{griffithspodolsky2009}
{Griffiths}, J., \& {Podolsk{\'y}}, J. 2009, {Exact space-times in Einstein's
  general relativity} (Cambridge University Press)

\bibitem[{{Gr{\o}n}(2004)}]{gron2004}
{Gr{\o}n}, {\O}. 2004, in Relativity in Rotating Frames, ed. G.~{Rizzi} \&
  M.~{Ruggiero} (Springer), 285--333

\bibitem[{{Gr{\o}n} \& {Elgar{\o}y}(2007)}]{gronelgaroy2007}
{Gr{\o}n}, {\O}., \& {Elgar{\o}y}, {\O}. 2007, American Journal of Physics, 75,
  151, astro-ph/0603162

\bibitem[{{Hamilton} \& {Lisle}(2008)}]{hamiltonlisle2008}
{Hamilton}, A., \& {Lisle}, J. 2008, American Journal of Physics, 76, 519,
  gr-qc/0411060

\bibitem[{{Hartle}(2003)}]{hartle2003}
{Hartle}, J. 2003, {Gravity: An introduction to Einstein's general relativity}
  (Addison Wesley)

\bibitem[{{Hawking} \& {Ellis}(1973)}]{hawkingellis1973}
{Hawking}, S., \& {Ellis}, G. 1973, {The large-scale structure of space-time}
  (Cambridge University Press)

\bibitem[{{Heckmann} \& {Sch{\"u}cking}(1955)}]{heckmannschucking1955}
{Heckmann}, O., \& {Sch{\"u}cking}, E. 1955, Zeitschrift f{\"u}r Astrophysik,
  38, 95

\bibitem[{{Hehl} {et~al.}(1976){Hehl}, von~der Heyde, {Kerlick}, \&
  Nester}]{hehl+1976}
{Hehl}, F., von~der Heyde, P., {Kerlick}, G., \& Nester, J. 1976, Rev. Mod.
  Phys., 48, 393

\bibitem[{{Hobson} {et~al.}(2006){Hobson}, {Efstathiou}, \&
  {Lasenby}}]{hobson+2006}
{Hobson}, M., {Efstathiou}, G., \& {Lasenby}, A. 2006, {General Relativity}
  (Cambridge)

\bibitem[{{Hogg}(1999)}]{hogg1999}
{Hogg}, D.~W. 1999, ArXiv e-prints, astro-ph/9905116

\bibitem[{{Landau} \& {Lifshitz}(1971)}]{landaulifshitz1971}
{Landau}, L., \& {Lifshitz}, E. 1971, {The classical theory of fields}
  (Pergamon)

\bibitem[{{Layzer}(1954)}]{layzer1954}
{Layzer}, D. 1954, The Astronomical Journal, 59, 268

\bibitem[{{Lyle}(2010)}]{lyle2010}
{Lyle}, S. 2010, in Space, time, and spacetime, ed. V.~{Petkov}
  (Springer-Verlag), 61--106

\bibitem[{{Lyle}(2014)}]{lyle2014}
------. 2014, in Springer handbook of spacetime, ed. A.~{Ashtekar} \&
  V.~{Petkov} (Springer), 117--139

\bibitem[{{Malament}(1995)}]{malament1995}
{Malament}, D.~B. 1995, Philosophy of Science, 62, 489

\bibitem[{{Mashhoon} \& {Muench}(2002)}]{mashhoonmuench2002}
{Mashhoon}, B., \& {Muench}, U. 2002, Annalen der Physik, 514, 532,
  gr-qc/0206082

\bibitem[{{Massa}(1974)}]{massa1974}
{Massa}, E. 1974, General Relativity and Gravitation, 5, 555

\bibitem[{{Maugin}(1971)}]{maugin1971}
{Maugin}, G. 1971, Annales de l'institut Henri Poincar{\'e} A: Physique
  th{\'e}orique, 15, 275

\bibitem[{{Maugin}(2013)}]{maugin2013}
------. 2013, Continuum mechanics through the twentieth century (Springer)

\bibitem[{{McCrea} \& {Milne}(1934)}]{mccreamilne1934}
{McCrea}, W., \& {Milne}, E. 1934, The Quarterly Journal of Mathematics, 5, 73

\bibitem[{{Milne}(1934)}]{milne1934}
{Milne}, E. 1934, The Quarterly Journal of Mathematics, 5, 64

\bibitem[{{Milne}(1935)}]{milne1935}
------. 1935, {Relativity, gravitation and world-structure} (Oxford)

\bibitem[{{Misner} {et~al.}(1973){Misner}, {Thorne}, \&
  {Wheeler}}]{misner+1973}
{Misner}, C., {Thorne}, K., \& {Wheeler}, J. 1973, {Gravitation} (W.H.~Freeman)

\bibitem[{Moore(2012)}]{moore2012}
Moore, T. 2012, A general relativity workbook (University Science Books)

\bibitem[{{Nat{\'a}rio, J.}(2014)}]{natario2014}
{Nat{\'a}rio, J.} 2014, General Relativity and Gravitation, 46, 1816

\bibitem[{{Norton}(1995)}]{norton1995}
{Norton}, J. 1995, Philosophy of Science, 62, 511

\bibitem[{{Norton}(1999)}]{norton1999}
------. 1999, in The Expanding Worlds of General Relativity, ed. H.~{Goenner},
  J.~{Renn}, J.~{Ritter}, \& T.~{Sauer}, Einstein Studies (Birkh{\"a}user),
  271--322

\bibitem[{{Norton}(2002)}]{norton2002}
------. 2002, in Inconsistency in Science, ed. J.~{Meheus} (Kluwer), 185--95

\bibitem[{{Norton}(2014)}]{norton2014}
------. 2014, in Physical theory: Method and interpretation, ed. L.~{Sklar}
  (Oxford), 185--228

\bibitem[{{Padmanabhan}(1996)}]{padmanabhan1996}
{Padmanabhan}, T. 1996, {Cosmology and astrophysics through problems}
  (Cambridge)

\bibitem[{{Peacock}(2008)}]{peacock2001}
{Peacock}, J.~A. 2008, ArXiv e-prints, 0809.4573, first online 2001

\bibitem[{{Rindler}(1977)}]{rindler1977}
{Rindler}, W. 1977, {Essential relativity: Special, general, and cosmological}
  (Springer-Verlag)

\bibitem[{{Rindler}(2006)}]{rindler2006}
------. 2006, {Relativity: Special, general and cosmological, 2nd edn} (Oxford)

\bibitem[{{Sachs} \& {Wu}(1977)}]{sachswu1977}
{Sachs}, R., \& {Wu}, H. 1977, {General relativity for mathematicians}
  (Springer-Verlag)

\bibitem[{{Schmidt}(1996)}]{schmidt1996}
{Schmidt}, H. 1996, General Relativity and Gravitation, 28, 899, gr-qc/9512006

\bibitem[{{Schutz}(2009)}]{schutz2009}
{Schutz}, B. 2009, {A first course in general relativity} (Cambridge)

\bibitem[{{S{\"o}derholm}(1982)}]{soderholm1982}
{S{\"o}derholm}, L. 1982, Physica Scripta, 26, 65

\bibitem[{{Taylor} \& {Wheeler}(1992)}]{taylorwheeler1992}
{Taylor}, E., \& {Wheeler}, J. 1992, {Spacetime physics: Introduction to
  special relativity, 2nd edn} (W.H.~Freeman)

\bibitem[{{Taylor} \& {Wheeler}(2000)}]{taylorwheeler2000}
------. 2000, {Exploring black holes: Introduction to general relativity} (New
  York: Addison-Wesley)

\bibitem[{{Tipler}(1996{\natexlab{a}})}]{tipler1996a}
{Tipler}, F. 1996{\natexlab{a}}, Monthly Notices of the Royal Astronomical
  Society, 282, 206

\bibitem[{{Tipler}(1996{\natexlab{b}})}]{tipler1996b}
------. 1996{\natexlab{b}}, American Journal of Physics, 64, 1311

\bibitem[{{Vickers}(2009)}]{vickers2008}
{Vickers}, P. 2009, Studies in History and Philosophy of Modern Physics, 40,
  197

\bibitem[{{Wald}(1984)}]{wald1984}
{Wald}, R. 1984, {General relativity} (University of Chicago Press)

\bibitem[{{Weinberg}(1972)}]{weinberg1972}
{Weinberg}, S. 1972, {Gravitation and cosmology: Principles and applications of
  the general theory of relativity} (Wiley)

\bibitem[{{Wernig-Pichler}(2006)}]{wernig-pichler2006}
{Wernig-Pichler}, M. 2006, PhD thesis, Universit{\"a}t Wien, gr-qc/0605025

\end{thebibliography}

I gratefully acknowledge use of the following free images: the \href{http://all-free-download.com/free-vector/download/r_is_for_rocket_144008.html}{rocket} by Andreas Preuss, \href{http://all-free-download.com/free-vector/download/learning_stationery_pencil_ruler_vectors_294665.html}{ruler} and \href{http://all-free-download.com/free-vector/download/vector_gear_chains_289029.html}{chain} by 3lian, \href{http://www.clker.com/clipart-spiral-7.html}{spiral} by hmime, and the \href{http://commons.wikimedia.org/wiki/File:Surface_normal.png}{surface with normal vectors} by Oleg Alexandrov. Hyperlinks are present here and elsewhere in the electronic version of this document.

\end{document}